\documentclass[a4, 11 pt]{article}

\usepackage{fullpage}
\usepackage{parskip}
\usepackage{amsmath}
\usepackage{mathtools}
\usepackage{amssymb}
\usepackage{todonotes}
\usepackage{tabularx}
\usepackage[toc,page]{appendix}
\usepackage{xspace}

\usepackage{bbm}
\usepackage{amsthm}
\usepackage{enumerate}
\usepackage[]{hyperref}


\usepackage{comment}
\usepackage[classfont=sanserif,langfont=sanserif,funcfont=sanserif]{complexity}
\usepackage{thmtools, thm-restate}
\usepackage{xcolor}

\newtheorem{definition}{Definition}
\newtheorem*{theorem*}{Theorem}
\newtheorem{theorem}{Theorem}

\newtheorem*{conjecture*}{Conjecture}
\newtheorem{corollary}{Corollary}
\declaretheorem[name=Theorem]{thm}
\declaretheorem[name=Lemma]{lem}
\declaretheorem[name=Conjecture]{conj}

\newcommand{\APSP}{\textsc{APSP}}
\newcommand{\threeSUM}{\textsc{3SUM}}
\newcommand{\QthreeSUMconjecture}{Quantum-3SUM-Conjecture}
\newcommand{\CthreeSUMconjecture}{Classical-3SUM-Conjecture}
\newcommand{\sortedThreeSUM}{\textsc{Sorted-3SUM}}
\newcommand{\sortedThreeSUMThreeList}{\textsc{Sorted-3SUM-3ListVersion}}
\newcommand{\orderedThreeSUM}{\textsc{Ordered-3SUM}}
\newcommand{\uniqueSortedSUM}{\textsc{Unique-Sorted-3SUM}}
\newcommand{\threeSUMpatrascuVersion}{\textsc{3SUM'}}
\newcommand{\threeSUMthreeList}{\textsc{3SUM-3ListVersion}}
\newcommand{\ConvolutionThreeSum}{\textsc{Convolution-3SUM}}
\newcommand{\ZeroWtTriangle}{\textsc{0-Edge-Weight-Triangle}}
\newcommand{\geomBase}{\textsc{GeomBase}}
\newcommand{\sortedGeomBase}{\textsc{Sorted-GeomBase}}
\newcommand{\separator}{\textsc{Separator}}
\newcommand{\stripsCoverBox}{\textsc{Strips-Cover-Box}}
\newcommand{\trianglesCoverTriangle}{\textsc{Triangles-Cover-Triangle}}
\newcommand{\threePointsOnLine}{\textsc{3-Points-on-Line}}
\newcommand{\PointOnThreeLines}{\textsc{Point-on-3-Lines}}
\newcommand{\GeneralCovering}{\textsc{General-Covering}}
\newcommand{\PointCovering}{\textsc{Point-Covering}}
\newcommand{\VisibilityBetweenSegments}{\textsc{Visibility-Between-Segments}}

\newcommand{\HoleInUnion}{\textsc{Hole-In-Union}}
\newcommand{\TriangleMeasure}{\textsc{Triangle-Measure}}
\newcommand{\VisibilityFromInfinity}{\textsc{Visibility-From-Infinity}}
\newcommand{\VisibleTriangle}{\textsc{Visible-Triangle}}
\newcommand{\PlanarMotionPlanning}{\textsc{Planar-Motion-Planning}}
\newcommand{\threeDmotionPlanning}{\textsc{3D-Motion-Planning}\xspace}

\newcommand{\ket}[1]{|{#1}\rangle}

\newcommand{\DSone}{\textsc{DS-1}}
\newcommand{\DStwo}{\textsc{DS-2}}
\newcommand{\traceNorm}[2]{||#1-#2||}

\newcommand{\subha}[1]{\textcolor{red}{Subha: #1}}
\newcommand{\florian}[1]{\textcolor{blue}{Florian: #1}}

\newcommand\backSlash{\char`\\}
\newcommand{\walkUnitary}{U_{\textit{walk}}}
\newcommand{\phaseFlipUnitary}{U_{\textit{phaseFlip}}}
\newcommand{\someUnitary}{U}
\newcommand{\inputsetS}{S}
\newcommand{\setupTime}{T_\textit{setup}}

\newcommand{\UnitaryTime}{T_\textit{unitary}}
\newcommand{\countChildren}{\texttt{COUNT\_CHILDREN}}
\newcommand{\countUniqueChildren}{\texttt{COUNT\_UNIQUE\_CHILDREN}}

\newcommand{\sortedThreeSumTime}{T_\textit{Sorted-3SUM}}
\newcommand{\ConvolutionThreeSumTime}{T_\textit{c-3SUM}}
\newcommand{\reducedTo}[1]{\ensuremath{\lll_{#1}}}
\newcommand{\equivalentTo}[1]{\ensuremath{==_{#1}}}
\newcommand{\PRone}{\textsc{PR1}}
\newcommand{\PRtwo}{\textsc{PR2}}

\newcommand{\remainder}{m}
\newcommand{\quotient}{q}

\newcommand{\nodeSkipList}[1]{\ensuremath{\textit{node}_{#1}}}
\newcommand{\nodeNeighbourLeft}[2]{\ensuremath{\textit{node}_{#1}\xrightarrow[]{#2}\textit{left}}}
\newcommand{\nodeNeighbourRight}[2]{\ensuremath{\textit{node}_{#1}\xrightarrow[]{#2}\textit{right}}}

\newcommand{\variableForIndexing}[2]{v_{#2}^{#1}}

\usepackage{mathtools}
\DeclarePairedDelimiter\ceil{\lceil}{\rceil}
\DeclarePairedDelimiter\floor{\lfloor}{\rfloor}

\newcommand{\eps}{\varepsilon}
\newcommand{\ZO}{\{0,1\}}
\newcommand{\tO}{\widetilde O}

\newcommand{\Patrascu}{Pătraşcu\xspace}
\newcommand{\pth}{^{\text{th}}}
\usepackage{mathabx,makecell}
\begin{document}
\pagenumbering{roman}

\title{Limits of quantum speed-ups\\ for computational geometry and other problems:\\Fine-grained complexity via quantum walks}

\author{Harry Buhrman${^*}$ \\ \href{mailto:harry.buhrman@cwi.nl}{harry.buhrman@cwi.nl} 
 \and Bruno Loff${^\delta}$ \\ \href{mailto:bruno.loff@gmail.com}{bruno.loff@gmail.com}
 \and Subhasree Patro${^*}$ \\ \href{mailto:subhasree.patro@cwi.nl}{subhasree.patro@cwi.nl}
   \and Florian Speelman$^*$ \\ \href{mailto:f.speelman@uva.nl}{f.speelman@uva.nl} \\[0.35cm]
   $^*$QuSoft, University of Amsterdam, 
CWI Amsterdam, $^\delta$University of Porto}

\date{}
\maketitle

\begin{abstract}
    Many computational problems are subject to a quantum speed-up: one might find that a problem having an $O(n^3)$-time or $O(n^2)$-time classic algorithm can be solved by a known $O(n^{1.5})$-time or $O(n)$-time quantum algorithm. The question naturally arises: \textit{how much quantum speed-up is possible?}
    
    The area of fine-grained complexity allows us to prove optimal lower-bounds on the complexity of various computational problems, based on the conjectured hardness of certain natural, well-studied problems. This theory has recently been extended to the quantum setting, in two independent papers by Buhrman, Patro and Speelman \cite{BuhrmanPatroSpeelman}, and by Aaronson, Chia, Lin, Wang, and Zhang \cite{Aaronson-ClosestPair-2019}.
    
    In this paper, we further extend the theory of fine-grained complexity to the quantum setting. A fundamental conjecture in the classical setting states that the \threeSUM{} problem cannot be solved by (classical) algorithms in time $O(n^{2-\eps})$, for any $\eps > 0$. We formulate an analogous conjecture, the \QthreeSUMconjecture{}, which states that there exist no sublinear $O(n^{1-\eps})$-time quantum algorithms for the \threeSUM{} problem. 
    
    Based on the \QthreeSUMconjecture{}, we show new lower-bounds on the time complexity of quantum algorithms for several computational problems. Most of our lower-bounds are optimal, in that they match known upper-bounds, and hence they imply tight limits on the quantum speedup that is possible for these problems.
    
    These results are proven by adapting to the quantum setting known classical fine-grained reductions from the \threeSUM{} problem. This adaptation is not trivial, however, since the original classical reductions require pre-processing the input in various ways, e.g.\ by sorting it according to some order, and this pre-processing (provably) cannot be done in sublinear quantum time.
    
    We overcome this bottleneck by combining a quantum walk with a classical dynamic data-structure having a certain ``history-independence'' property. This type of construction has been used in the past to prove upper bounds, and here we use it for the first time as part of a reduction. This general proof strategy allows us to prove tight lower bounds on several computational-geometry problems, on \ConvolutionThreeSum{} and on the \ZeroWtTriangle{} problem, conditional on the \QthreeSUMconjecture{}.

    
    We believe this proof strategy will be useful in proving tight (conditional) lower-bounds, and limits on quantum speed-ups, for many other problems.
\end{abstract}











\tableofcontents{}

\newpage
\pagenumbering{arabic}
\setcounter{page}{1}

\section{Introduction}
\label{sec:Introduction}

The world is investing in quantum computing because of so-called \textit{quantum speed-ups}: quantum algorithms can solve many computational problems faster than their classical counterparts. However, the amount of speed-up that is possible varies among different computational problems. It is expected that quantum computers will remain an expensive resource for decades to come, and the extent to which a quantum speed-up is possible, or not possible, may one day be a key factor in deciding whether or not to invest in the use of a quantum computation in for example an industrial setting. It is therefore essential to understand how much quantum speed-up is possible for a specific computational problem, and for this purpose we need to have tight upper and lower-bounds on both classical and quantum algorithms.

Sadly, the state of affairs is such that we do not even know how to prove super-linear time lower-bounds (e.g., on a classical random-access machine). Hence, there are some computational problems which \textit{do} have polynomial-time (e.g.\ quadratic-time) algorithms, classical or quantum, and these algorithms are conjectured to be optimal, but we presently have no way of proving this.

The theory of fine-grained complexity has been developed in the last decade to overcome this problem. Analogous to how \NP{}-completeness allows us to prove super-polynomial lower-bounds, fine-grained complexity allows us to prove tight fixed-polynomial (e.g. quadratic) lower-bounds on the time complexity of many problems in \P{}, conditioned on hardness conjectures for a few natural, well-studied problems. Three central hardness conjectures are the strong exponential-time hypothesis (SETH) for satisfiability, a conjectured cubic-hardness for the all-pairs shortest-path problem (\APSP{}) and a conjectured quadratic hardness for the \threeSUM{} problem.\footnote{The survey by Vassilevska Williams contains an overview of many results within this area \cite{Survey-VVWilliams-15}.}

Recently, two independent works initiated the study of fine-grained complexity in the quantum setting. Both works studied quantum variants of SETH, and used these variants to prove (often tight) bounds on how much quantum speed-up is possible for various problems. Aaronson, Chia, Lin, Wang, and Zhang~\cite{Aaronson-ClosestPair-2019} presented linear quantum time lower bounds for Closest Pair, Bichromatic Closest Pair, and Orthogonal Vectors, conditioned on the quantum hardness of the Satisfiability (SAT) problem. In the same paper, they also present matching quantum upper bounds for these problems. 
Simultaneously, Buhrman, Patro, and Speelman~\cite{BuhrmanPatroSpeelman} presented a framework for proving quantum time lower bounds for many problems in $\P$ conditioned on quantum hardness of variants of SAT, which they used to prove an $n^{1.5}$ quantum time lower bound for the Edit Distance and the Longest Common Subsequence problems.



In this work, we explore quantum fine-grained reductions to derive quantum time lower-bounds for several problems in \P{}, conditioned on the conjectured quantum hardness for the \threeSUM{} problem. These lower-bounds often tightly match upper-bounds given by known quantum algorithms, and similar tight upper and lower-bounds have also been proven in the classical setting. Together, these tight classical and quantum bounds are finally able to tell us exactly how much quantum speed-up is possible for various problems, which is the main goal of this line of research.

\subsection{The conjectured hardness of \threeSUM{}}

The \threeSUM{} problem is defined as follows: We are given as input a list $S$ of $n$ integers, which we may assume to be between $- n^3$ and $n^3$,\footnote{This is because the \threeSUM{} problem over lists with larger integers can be reduced to the \threeSUM{} problem on $n^3$-bounded integers by a simple hashing technique.} and we wish to know if there exist $a, b, c$ in $S$ such that $a+b+c=0$. There is a simple classical algorithm that solves this problem in $\tilde O(n^2)$ time, but even after many years of interest in the problem, the exponent has not been reduced. The conjecture naturally arises that there is no $\epsilon>0$, such that \threeSUM{} can be solved in $O(n^{2-\epsilon})$ classical time. We refer to this conjecture as the \CthreeSUMconjecture{}. Using this conjecture, one can derive conditional classical lower bounds for a vast collection of computational geometry problems, dynamic problems, sequence problems, etc.\ \cite{Overmars-ComputationalGeometry-1995, Williams-FindingTypesOfTriangles-2009, Patrascu-Convoluted3SUM-2010, Survey-VVWilliams-15}. 

However, the \CthreeSUMconjecture{} no longer holds true in the quantum setting, as there is a faster \textit{quantum} algorithm for \threeSUM{}: we may use Grover search as a subroutine in the $\tilde O(n^2)$ classical algorithm to solve the problem in $\tilde{O}(n)$ quantum time. Apart from this quadratic speedup, no further improvement to the quantum-time upper-bound is  known.
It is worth mentioning that there is a sub-linear $O(n^{3/4})$ quantum {\em query} algorithm for computing \threeSUM{} \cite{Ambainis-ElementDistinctness-2004, Andrew-SubsetFinding-2005} --- with a matching lower bound of $\Omega(n^{3/4})$ \cite{Belovs-KSUMLowerBound-2012} --- this query algorithm however, is not time efficient.
Consequently, it was conjectured \cite{Ambainis-QuantumGeometryProblems-2020} that the \threeSUM{} problem cannot be solved in sub-linear quantum time in the QRAM model:
\begin{conj}[\QthreeSUMconjecture{} \cite{Ambainis-QuantumGeometryProblems-2020}]
\label{conj:3SUM}
There does not exist a $\delta>0$ such that \threeSUM{} on a list of $n$ integers can be solved in $O(n^{1-\delta})$ quantum time in the QRAM model.
\end{conj}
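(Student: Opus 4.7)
Since the final statement is a conjecture rather than a theorem, the appropriate plan is not to construct a proof but to assemble the evidence that warrants stating it. I would proceed on two fronts: first, review the best known quantum upper bounds and explain why standard techniques appear stuck at $\tilde O(n)$ time; second, connect the QRAM time model to the query-complexity regime, where unconditional lower bounds do exist.

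On the algorithmic side, I would note that Grover search applied to the $\tilde O(n^2)$ classical algorithm already yields an $\tilde O(n)$-time quantum algorithm, and that nothing faster is known in the time model. The $O(n^{3/4})$ query algorithms based on Ambainis-style quantum walks on Johnson graphs do not immediately translate to time-efficient algorithms: whereas element distinctness admits an $O(\log n)$-cost update per walk step (insert into a balanced search tree and check for duplicates), the analogous update for \threeSUM{} must test, for a newly inserted element $x$, whether $-x-y$ lies in the current subset for some $y$ in it, which costs $\tilde\Omega(r)$ per update on a subset of size $r$ and destroys any sub-linear time bound. I would argue that closing this time-versus-query gap is at least as hard as long-standing open problems concerning dynamic quantum data structures, which have resisted progress for nearly two decades.

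On the lower-bound side, I would invoke Belovs' $\Omega(n^{3/4})$ query lower bound~\cite{Belovs-KSUMLowerBound-2012}: any QRAM algorithm running in time $T$ performs at most $T$ queries, so $T = o(n^{3/4})$ is already ruled out unconditionally. To bridge the remaining gap up to $n^{1-\delta}$, the natural argument is by analogy with the \CthreeSUMconjecture{}: decades of classical effort have failed to exhibit the algebraic or combinatorial structure needed to beat the quadratic barrier classically, and there is no candidate mechanism by which coherence or interference should provide more than the generic quadratic Grover speedup over the best known classical approach.

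The main obstacle, and the reason the statement must remain a conjecture, is that an unconditional lower bound matching $\Omega(n^{1-o(1)})$ in the QRAM model would have to exceed the known $\Omega(n^{3/4})$ query lower bound by a polynomial factor, which necessarily requires separating wall-clock time from query complexity in a random-access model — a capability we simply do not possess. This is precisely why Ambainis, and subsequently the present authors, introduce the hardness of \threeSUM{} as a conjecture to be used conditionally for fine-grained reductions, rather than as a theorem to be proved.
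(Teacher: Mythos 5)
Your proposal correctly recognizes that the statement is a conjecture, not a theorem, and that the task is to assemble evidence for it rather than to prove it; this matches the paper's treatment, which likewise motivates the \QthreeSUMconjecture{} by citing the Grover-based $\tilde O(n)$ upper bound, the $O(n^{3/4})$ quantum-walk query algorithm (with Belovs' matching query lower bound), and the observation that this query algorithm is not time-efficient. One small calibration worth flagging: in your account of why the Johnson-graph walk fails to give sublinear time, the expensive step is the phase-flip (markedness) check, not the per-edge walk update, and the paper attributes the obstacle precisely to the absence of an efficient history-independent dynamic data structure for \threeSUM{} queries — which is also the lever its reductions later exploit. Your heuristic "$\tilde\Omega(r)$ to test whether $-x-y$ lies in the subset" is directionally right but conflates the walk update with the phase-flip cost; nevertheless, the conclusion (no known sub-$r$ markedness maintenance) is the same as the paper's.
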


It is then natural to try to extend the classical \threeSUM{}-based lower bounds to the quantum setting, and one may at first expect this task to be a simple exercise. However, one soon realizes that none of the existing classical reductions can be easily adapted to the quantum regime. Indeed, most of the existing classical reductions begin by pre-processing the input in some way, e.g., by sorting it according to some ordering. This is not an issue in the classical setting, as the classical conjectured lower bound for \threeSUM{} is quadratic. Hence, the classical reductions can accommodate any pre-processing of the input that takes sub-quadratic time, such as e.g. sorting. However, this pre-processing becomes problematic in the quantum setting, since here we will need a sublinear-time quantum reduction, and even simple sorting requires linear quantum time on a quantum computer. \cite{Hoyer-QuantumSorting-2001}. 

We present a workaround for this problem.  The idea of the proof is to adapt Ambainis' quantum walk algorithm for element distinctness~\cite{Ambainis-ElementDistinctness-2004}. To overcome for example, the case of first sorting the input, instead of having the reduction sort the entire list, we combine a data structure for dynamic sorting together with a quantum walk algorithm. As we will show, this approach only needs the reduction to sort a only small part of the input and thus allows us to show that \threeSUM{} remains hard, even when the entire input is sorted.  As we will see, this idea can be extended to allow for any ``structuring'' of the input (not just sorting) which can be implemented by a dynamic data structure obeying a certain ``history-independence'' property. The proof will be sketched in Section~\ref{sec:OrderingDoesntHelpSUM}.

This \textit{quantum-walk plus data-structure} proof strategy has been used to prove upper-bounds on other problems (e.g., for the closest-pair problem \cite{Aaronson-ClosestPair-2019}), and here we use it for the first time as part of a reduction in order to obtain a lower-bound. We expect that the same strategy will be applicable to other quantum fine-grained reductions, and our hope is that this will give rise to a landscape of results, that establish (conditional) tight lower-bounds for quantum algorithms. This will in turn  precisely answer the question of how much quantum speed-up is possible for a variety of computational problems.

Using this strategy we are able to show that various ``structured'' versions of \threeSUM{} are as hard as the original (unstructured) \threeSUM{} problem, even in the quantum case. Once we have shown that these structured versions of \threeSUM{} are hard, we may then construct direct quantum adaptations of the classical reductions, to show the quantum hardness of several computational-geometry problems (Section~\ref{sec:ImplicationsOfSortedUniqueSUM}), of \ConvolutionThreeSum{} and of the \ZeroWtTriangle{} problem (Section~\ref{sec:TimeLowerBoundsThreeSum}). This enables us to prove quantum time lower-bounds for these problems, conditioned on the \QthreeSUMconjecture{}. We give further details in Section~\ref{sec:SummaryQuantumSumHardProblems}.

\subsection{Main Idea: Reductions via Quantum Walks}
\label{sec:OrderingDoesntHelpSUM}

The \CthreeSUMconjecture{} states that there is no sub-quadratic classical algorithm to solve the \threeSUM{} problem. However, the statement of this conjecture can be shown to be equivalent to the same statement for a promise version of \threeSUM{} where the input $S$ is sorted. That is because, if there was a sub-quadratic algorithm for \threeSUM{} on sorted inputs, then given any (unsorted) input one can first sort the entire input with additional $O(n \log n)$ pre-processing time, and then use the sub-quadratic algorithm for sorted \threeSUM{}, resulting in a sub-quadratic algorithm for unsorted \threeSUM{}. In fact, one can make a more general statement in this regard. An input to the \threeSUM{} problem is a list $S \in \{-n^3,\dots, n^3\}^n$ of $n$ integers (possibly with repetitions). One may consider a family $\{q_{i}\}$ of \textit{queries}, i.e., each $q_{i}: \{-n^3, \dots, n^3\}^n\to A_{i}$ is a function on lists of integers, for some set $A_{i}$ of possible answers to the query. (For example, $q_{i}(S)$ could output the $i$-th smallest integer in $S$.) We may then ask about static data-structures that allow us to efficiently answer these queries. (For example, we may consider the sorted version of $S$ to be a data-structure that allows us to efficiently obtain the $i$-th smallest element of $S$.) Then we may generally state that, if it is possible to preprocess an input $S$ in sub-quadratic time to produce a static data-structure that allows us to answer any query in $n^{o(1)}$ time, then the ``structured'' variant of \CthreeSUMconjecture{}, where we give the algorithm access to all the queries $q_{i}(S)$ for free, is equivalent to the original version of \CthreeSUMconjecture{}.

Most of the known fine-grained reductions from \threeSUM{}, in the classical setting, can be explained in the following way: one first shows that a certain ``structured'' variant of \threeSUM{} is just as hard as the original \threeSUM{} problem, and then one reduces the structured variant of \threeSUM{} to another problem. While for some reductions \cite{Overmars-ComputationalGeometry-1995} require the input list to be sorted in the usual order of the integers, other reductions require the input to be structured in some other way, for example, reductions in \cite{Patrascu-Convoluted3SUM-2010, Williams-FindingTypesOfTriangles-2009} require that the elements are hashed into buckets and every element in the bucket can be accessed efficiently.

The reduction from ``unstructured'' to ``structured'' \threeSUM{} is usually trivial to do in classical sub-quadratic time, but not so in quantum sub-linear time (e.g., a quantum computer cannot sort in sublinear time \cite{Hoyer-QuantumSorting-2001}). This is the main difficulty in translating the classical reductions to the quantum setting.

Our main observation is that, if a certain analogous \textit{dynamic} data-structure problem can be solved efficiently by a dynamic data-structure possessing a certain ``history-independence'' property, then it is possible to use a quantum walk in order to show that the ``structured'' variant of \QthreeSUMconjecture{}, where we give the algorithm access to the queries for free, is equivalent to the original unstructured version of the \QthreeSUMconjecture{}. It is this insight that underlies all of our reductions, and which we expect will open up the  way to many other fine-grained reductions in the quantum setting.

One might informally state our observation as follows.

\begin{restatable}[informal]{thm}{HardnessOfOrderedSUM}
\label{thm:HardnessOfOrderedSUM}
Let $\{q_{i}\}$ be a collection of queries over \threeSUM{} inputs, i.e., each $q_{i}$ is a function over inputs $S \in\{-n^3,\ldots,n^3\}^n$ for \threeSUM{}. Suppose that there exists an efficient classical dynamic data-structure that allows us to answer the queries $q_{i}$, under updates to $S$, where an update consists of replacing an element in the list $S$ by a different element. By efficient we mean that any query or update can be carried out in $n^{o(1)}$ time. Suppose further that the dynamic data structure satisfies the ``history-independence'' property\footnote{Also mentioned in \cite{Ambainis-ElementDistinctness-2004,Aaronson-ClosestPair-2019}}, which means that the data structure corresponding to each set $S$ has a unique representation in memory, which only depends  on the current value of $S$ (so it is independent of the initial value of $S$, and of the subsequent updates which resulted in the current value of $S$).

Then, conditioned on the \QthreeSUMconjecture{}, \threeSUM{} cannot be done in $O(n^{1-\eps})$ quantum time, for any $\eps > 0$, even if the queries $q_{i}(S)$ can be done at unit cost.
\end{restatable}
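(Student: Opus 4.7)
The plan is to prove the contrapositive: assuming a hypothetical quantum algorithm $\mathcal{A}$ that solves \threeSUM{} in time $O(n^{1-\epsilon})$ when the queries $q_i(S)$ are available at unit cost, construct a sublinear quantum algorithm for unstructured \threeSUM{} and thereby contradict Conjecture~\ref{conj:3SUM}. The construction will be an Ambainis-style quantum walk on the Johnson graph $J(n,r)$, whose vertices are the size-$r$ subsets $R\subseteq S$ and whose edges swap one element of $R$ with one outside. At each vertex the walker keeps, in its register, the history-independent data-structure $D(R)$ supporting the queries $q_i$; a vertex $R$ is called \emph{marked} exactly when $R$ itself contains a \threeSUM{} triple, which is detected by running $\mathcal{A}$ on the ``structured'' instance $R$ with its queries answered at unit cost by $D(R)$.

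In the standard quantum-walk framework the setup cost is $S_c = r^{1+o(1)}$ (performing $r$ insertions into $D$), the update cost is $U_c = r^{o(1)}$ (one deletion plus one insertion per swap), and the checking cost is $C_c = O(r^{1-\epsilon})$ (a single call to $\mathcal{A}$). The Johnson graph has spectral gap $\delta = \Theta(1/r)$, and if $S$ contains a triple then a uniformly random size-$r$ subset contains all three of its elements with probability $\varepsilon \geq \Omega((r/n)^3)$. The resulting total quantum time is
\[
    S_c + \frac{1}{\sqrt{\varepsilon}}\Bigl(\frac{U_c}{\sqrt{\delta}} + C_c\Bigr) \;=\; r^{1+o(1)} \;+\; \bigl(n/r\bigr)^{3/2}\bigl(r^{1/2+o(1)} + r^{1-\epsilon}\bigr),
\]
which is optimized at $r = n^{3/(3+2\epsilon)}$ and gives a total of $n^{3/(3+2\epsilon)+o(1)} = n^{1-\Omega(\epsilon)}$, yielding a sublinear quantum algorithm for \threeSUM{} and hence a contradiction.

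The only genuinely non-routine step, and the step that really uses the hypothesis, is making the classical dynamic data-structure compatible with a coherent quantum walk. In the walk, many computational paths can converge on the same subset $R$ after different sequences of insertions and deletions; the walk produces the correct interference pattern only if the memory content at $R$ is a function of $R$ alone, independent of the path taken to reach it. History-independence of $D$ supplies exactly this canonical representation, so that executing the classical update procedures reversibly in superposition implements a true unitary walk whose marked structure matches the combinatorial analysis above. Verifying this point carefully — so that setup, update, and checking steps compose coherently, and so that the structured instances handed to $\mathcal{A}$ are indistinguishable from honest structured inputs — is the main obstacle; once it is in place, the runtime calculation above is a routine application of the walk framework.
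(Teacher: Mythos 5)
Your proposal is correct and follows essentially the same approach as the paper's proof sketch: an Ambainis-style quantum walk on $J(n,r)$ carrying a history-independent data structure in its register, with the check step implemented by the hypothetical structured-\threeSUM{} algorithm, and the same runtime expression $r + (n/r)^{3/2}(\sqrt{r} + r^{1-\epsilon})$ optimized over $r = n^\beta$. The only difference is notational (you use the MNRS-style $S_c + \frac{1}{\sqrt{\varepsilon}}(U_c/\sqrt{\delta} + C_c)$ whereas the paper writes out $t_1, t_2$ explicitly), and you correctly identify history-independence as the crux enabling coherent interference.
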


Hereafter, we refer to these versions of \threeSUM{}, where queries $q_i(S)$ have unit cost, as ``structured'' versions of the \threeSUM{} problem. To be clear, by \textit{being able to do the queries at unit cost}, we mean that the algorithm is given access to an oracle gate, implementing the unitary transformation:
\[
\ket{i, b} \mapsto \ket{i, b \oplus q_i(S)}.
\]

The distinction between an arbitrary dynamic data-structure and a history-independent solution should be understood as follows. Generally speaking, a solution to a dynamic data-structure problem could represent data in a way which depends on the specific sequence of updates which were applied to the initial data. For example, self-balancing trees are a solution to the dynamic sorting problem, but the specific balancing of the tree which is kept in memory depends on the sequence of updates which were applied, so different sequences of insertions and deletions might lead to the same list, but will nonetheless be represented differently in memory. A history-independent data-structure, however, has fixed a-priori representations for each possible data value. So, for example, in the dynamic sorting problem, a history-independent data-structure must represent each possible list in a unique, or canonical way in memory.

\paragraph{Our idea.} Let $S = (x_1, \ldots, x_n)$ be an unstructured input to \threeSUM{}. We will now discuss quantum query algorithms for solving \threeSUM{}. Such algorithms can access the input only via a unitary $\ket{i, b} \mapsto \ket{i, b \oplus x_i}$. Each application of this unitary is called a \textit{query}. But, in accordance to data-structure nomenclature, we have also called \textit{queries} to the functions $q_i$. So to distinguish the two, in this section we will use \textit{input queries} to refer to queries to the input, in the sense of query complexity, and let us use \textit{data-structure queries}, to refer to the values $q_i(S)$.

Consider the quantum walk algorithm for Element Distinctness by Ambainis~\cite{Ambainis-ElementDistinctness-2004}. It was observed by Childs and Eisenberg~\cite{Andrew-SubsetFinding-2005} that this algorithm can be used to solve any problem, such as \threeSUM{}, where we wish to find a constant-size subset that satisfies a given property.  Although this algorithm is optimal and sub-linear for \threeSUM{} when we only measure the number of input queries (it uses $\Theta(n^{3/4})$ input queries, and this is required \cite{Belovs-KSUMLowerBound-2012}), the algorithm still requires linear time, essentially because an $\Omega(n^{1/4})$-time operation is performed between each input query.

 This optimal query algorithm for \threeSUM{} is a quantum walk on the \textit{Johnson graph}, namely, the graph of $n \choose r$ vertices with each vertex of the graph labelled by an $r$-sized subset of $[n]$, and where there is an edge between two vertices if and only if the two corresponding sets differ by exactly two elements. This resulting graph $J(n,r)$ is a good-enough expander, so that a quantum walk will be able to find an $r$-sized subset of $[n]$ containing indices to three elements of $S$ that sum to zero, in queries sublinear in $n$.\footnote{For an excellent introduction to quantum walks, see Chapter 8 of Ronald de Wolf's lecture notes \cite{deWolf-QuantumLectureNotes-2021}.} To do so, the quantum-walk algorithm maintains the list of values $(x_{i_1}, \ldots, x_{i_r})$ entangled together with the basis state representing the current $r$-sized subset $\{i_1, \ldots, i_r\} \subseteq [n]$ that is being traversed. Using this list of values, as a part of the quantum walk algorithm, a subroutine checks (in superposition) if there is a \threeSUM{} solution in $(x_{i_1}, \ldots, x_{i_r})$. While this step requires no additional input queries, so the total number of input queries is $O(n^{3/4})$, the actual implementation of this subroutine requires a significant amount of time (namely time $r = \Omega(n^{1/4})$), which then makes the resulting quantum walk algorithm for \threeSUM{} linear, at best.

It is this subroutine, i.e.~the subroutine that checks for a \threeSUM{} solution in the $r$-sized set of values, that we would like to further speed up. Now suppose that we had a faster-than-linear algorithm for a ``structured'' version of \threeSUM{}. I.e., the algorithm works in sublinear time, provided it is given certain data-structure queries $q_i(S)$ as part of the input. Now, if we could efficiently answer these data-structure queries at any point during the entire quantum walk, then we could use this faster-than-linear algorithm to speed-up the subroutine. To do so, we need a dynamic data structure that allows us to efficiently answer the data-structure queries, under the kind of updates that are required at each step of the quantum walk. For the quantum walk on the Johnson graph, each update corresponds to replacing a single element in the list of values $(x_{i_1}, \ldots, x_{i_r})$.

An important detail remains: in order for the quantum walk to work, it is necessary that there is a unique basis state corresponding to each node in the quantum-walk graph (otherwise we won't have the desired amplitude interference). It is for this reason that the dynamic data-structure structure is required to have a history-independence property.



\begin{proof}[Proof of Theorem \ref{thm:HardnessOfOrderedSUM} (sketch)]
In order to prove this theorem, we will first go through the steps of the more general version of Ambainis' quantum walk algorithm for Element Distinctness given by \cite{Andrew-SubsetFinding-2005}.  

Let $\inputsetS \in \{-n^3, \ldots, n^3\}^n$ be an input to the \threeSUM{} problem. Let $r=n^\beta$ for some $\beta \in (0,1)$ which will be fixed later (so that $r$ is an integer). The graph $G$ used in Ambainis' construction is a Johnson graph $J(n,r)$ with vertices all labelled by $r$-sized subsets of $[n]$. Let $V, V' \subset [n]$ with $|V|=|V'|=r$. Vertices labelled by $V$ and $V'$ are connected if and only if $|V \cap V'|=r-1$, i.e.,~$V'$ can be obtained by replacing a single element of $V$.

Given a subset $I \subset [n]$, we use $\inputsetS[I]$ to denote all the elements $\inputsetS[i], i \in I$. Now suppose we have a history-independent classical dynamic data structure for answering a family of data-structure queries $\{q_i\}$, where each $q_i:\{-n^3, \ldots, n^3\}^r$. For $V \subseteq [n]$ of size $|V| = r$, let $D(S[V])$ denote the (unique) state of the data-structure corresponding to $S[V]$. I.e., given $D(S[V])$, we are able to answer any query $q_i(S[V])$ in time $n^{o(1)}$. And if we change $V$ to $V'$ by replacing a single element of $V$, we are able to update $D(S[V])$ to $D(S[V'])$, also in time $n^{o(1)}$.

To define a quantum walk on $G$, define an orthonormal basis of quantum states $\ket{V}$, one for each $r$-subset $V$. The key idea is to store values from the list, and the contents of the data-structure, along with the subset $V$.
So the full quantum state has the form $\ket{V, \inputsetS[V], D(\inputsetS[V]), k}$ where $k \in [n]$.
If $|V|=r$ then $k$ denotes an element in $[n]\backSlash{}V$ to be added to $V$. We say a vertex $V$ is marked if $\inputsetS[V]$ is a positive \threeSUM{} instance (of smaller size), i.e., if there are $p,q,r \in V$ such that $\inputsetS[p]+\inputsetS[q]+\inputsetS[r]=0$.

The quantum walk algorithm is analogous to Grover's algorithm, where the aim is to make the amplitude on marked vertices large enough that with very high probability\footnote{Throughout the paper we say that something holds “with high probability” if it holds with probability at least $1-o(1)$.} the final measurement collapses on a marked vertex, i.e., a vertex labelled by an $r$-subset that contains a solution to \threeSUM{} problem. The algorithm starts with a state
\begin{equation}
  \ket{s}=\frac{1}{\sqrt{c}}\sum_{|V|=r} \ket{V, \inputsetS[V], D(\inputsetS[V])} \sum_{k \notin V}\ket{k},  
\end{equation}
which is a uniform superposition of all the states on subsets of size $r$ and $c=(n-r){n \choose r}$ is the normalization constant. 

There are two main operations in this algorithm: A walk operation $\walkUnitary$ and a phase flip operation $\phaseFlipUnitary$ which is
\begin{equation}
\label{eq:LooksOfPhaseFlipUnitary}
  \phaseFlipUnitary\ket{V,S[V], D(\inputsetS[V])}=  
  \begin{cases}
      -\ket{V,S[V], D(\inputsetS[V])} & \text{if }V\text{ is marked}\\
      \ket{V,S[V], D(\inputsetS[V])} & \text{if }V\text{ is not marked}.
    \end{cases}
\end{equation}
The full algorithm is $(\walkUnitary^{t_1} \phaseFlipUnitary)^{t_2}$ where $t_1=O(\sqrt{r})$ and $t_2=O((n/r)^{1.5})$. The total time taken by the algorithm is 
\begin{equation}
    \setupTime(\ket{s}) + t_1\cdot t_2 \cdot \UnitaryTime(\walkUnitary) + t_2\cdot \UnitaryTime(\phaseFlipUnitary),
\end{equation}
where $\setupTime(\ket{s})$ denotes the time taken to setup the initial state $\ket{s}$ that also includes the time taken to query values of the subset of indices of size $r$. The term $\UnitaryTime(\someUnitary)$ denotes the number of elementary gates required to implement a unitary $\someUnitary$. 

In the setup phase, for every vertex $V$ we initialize the dynamic data-structure corresponding to $S[V]$. We may think of $S[V]$ as obtained via the $(0, \ldots, 0)$ list by updating each position $i$ with $S[i]$.
Hence, the setup time for each vertex, which consists of computing $D(S[V])$ for all $V$ in superposition, is at most $r n^{o(1)}$.

Now, because the data structure supports efficient updates, the $\walkUnitary$ unitary can be implemented in time $n^{o(1)}$. It in is this $\walkUnitary$ operation that an element is inserted and some other element is deleted, hence it is sufficient that the dynamic data structure supports replacement of values.


The unitary $\phaseFlipUnitary$ in Equation~\ref{eq:LooksOfPhaseFlipUnitary} adds a negative phase to the marked states and none to the unmarked states, which means $\phaseFlipUnitary$ implements a subroutine that checks whether or not a vertex $V$ is marked by going through its input-query values $S[V]$ and checking if there is a \threeSUM{} solution present in $S[V]$. Currently, there is no known (time) efficient method to implement this subroutine.\footnote{One would require a dynamic data-structure for efficiently answering \threeSUM{} queries, which is not known to exist.}

Instead, suppose that there exists a constant $\alpha>0$ such that there is a subroutine that can solve this structured version of \threeSUM{} on $r$ elements in $O(r^{1-\alpha})$ quantum time. We can now implement $\phaseFlipUnitary$ in the following way. Call the subroutine that is optimal for solving \threeSUM{} on this  of ordered input. The data-structure queries $q_i(S[V])$ to the structured input can be simulated with an $n^{o(1)}$ overhead in time, because the data structure $D(S[V])$ supports efficient data-structure queries. The time complexity of the original Ambainis' walk algorithm then becomes
\begin{equation}
    r \cdot n^{o(1)} + t_1 \cdot t_2 \cdot n^{o(1)} + t_2 \cdot n^{o(1)} \cdot r^{1-\alpha},
\end{equation}
which, after ignoring all the $n^{o(1)}$ factors, becomes
\begin{equation}
   r + t_1 t_2 + t_2 r^{1-\alpha}.
\end{equation}
Substituting the values of $t_1=O(\sqrt{r})$ and $t_2=O((n/r)^{1.5})$ we obtain a total time complexity of order
\begin{equation}
\label{eq:InformalJustBeforeFinal}
   r+ \frac{n^{1.5}}{r} + \frac{n^{1.5}}{r^{1.5}}\cdot n^{1-\alpha}.
\end{equation}
The total time taken in Equation~\ref{eq:InformalJustBeforeFinal} roughly becomes 
\begin{equation}
\label{eq:InformalFinal}
   r+ \frac{n^{1.5}}{r} + \frac{n^{1.5}}{r^{1.5}} \cdot r^{1-\alpha}.
\end{equation}
Given that $r=n^\beta$ for a $\beta \in (0,1)$, it is easy to see that for every $0< \alpha <1$, there exists a $\beta$ such that $\max(\frac{1}{2},\frac{1}{2\alpha +1}) < \beta <1$, and then the value of (\ref{eq:InformalFinal}) becomes strictly sublinear. It then follows that there is no sub-linear quantum time algorithm for solving the structured version of \threeSUM{}, unless \QthreeSUMconjecture{} is false.
\end{proof}

\medskip\noindent
We have omitted several details from the above proof sketch. One omission is that we neglected to account for the error (in the quantum walk and in the invoked subroutine for \threeSUM{}). This is simple to account for and we will do so in Section \ref{sec:HardnessOfSortedUniqueSpaceInefficient}. The most crucial omission is that we will actually require \textit{probabilistic} dynamic data-structures in our reductions. Randomness seems to be required because no dynamic sorting data-structure is known that is simultaneously time-efficient, space-efficient, history-independent, and deterministic. However, a solution exists if any of these four requirements is removed. We will first (in Section \ref{sec:HardnessOfSortedUniqueSpaceInefficient}) present a solution which uses a deterministic data-structure, but large space, and then (in Section \ref{sec:HardnessOfSortedUniqueSpaceEfficient}) a probabilistic solution which is also efficient in space. It is an interesting open question in classical data-structures to provide, or disprove the existence of, a dynamic data-structure that simultaneously satisfies all four requirements.

\subsection{Applications}\label{sec:SummaryQuantumSumHardProblems}

We use our proof strategy to show, conditional on \QthreeSUMconjecture{}, tight lower-bounds on several computational-geometry problems, on \ConvolutionThreeSum{} and on \ZeroWtTriangle{} problem. Our lower-bounds show that the quantum speed-up is at most quadratic for all of these problems.

Our lower-bounds on \ConvolutionThreeSum{} and \ZeroWtTriangle{} tightly match the Grover-based speed-up that quantum algorithms can get for these problems.

Our quantum reductions from \threeSUM{} to computational-geometry problems are complementary to a recent paper by Ambainis and Larka~\cite{Ambainis-QuantumGeometryProblems-2020}, where they present quantum speed-ups for several such problems. Our results show, under the \QthreeSUMconjecture{}, that all of the speed-ups obtained by Ambainis and Larka are optimal. There are also computational-geometry problems for which the \QthreeSUMconjecture{} gives us a lower-bound, but for which no quantum speed-up is known.

Table \ref{table:SummaryQuantumSumHardProblems} (in page \pageref{table:SummaryQuantumSumHardProblems}) summarizes our results. It also includes the best-known \textit{classical} upper and lower-bounds.

\subsection{Future directions and open questions}
\label{sec:FutureDirections}

The study of quantum fine-grained complexity is just beginning. Classically, there are many fine-grained reductions laying out the structure of the class \P, but only a few of such reductions have been established for \BQP. This forms an appealing avenue for future work, as not only is the topic very much unexplored, any tight lower-bounds given by quantum fine-grained reductions will allow us to understand how much quantum speed-up is possible.

The following is a non-exhaustive list of questions which are currently open, and which we hope will benefit from the approach contained in our paper:

\begin{itemize}
    \item Table \ref{table:SummaryQuantumSumHardProblems} contains four problems for which we can prove some quantum lower-bound, conditioned on the \QthreeSUMconjecture{}. Is this lower-bound tight, i.e., are there matching algorithms? Or can we prove a higher lower-bound, perhaps based on a different conjecture?
    \item In the classical setting, there are problems, other than \threeSUM{}, which serve as a basis for fine-grained reductions, e.g. the Orthogonal Vectors problem, the all-pairs shortest-path problem \cite{Survey-VVWilliams-15}. What lower-bounds can we prove in the quantum setting, based on these problems? Can we prove tight bounds on quantum speed-ups?
    \item The \CthreeSUMconjecture{} itself gives various other lower-bounds in the classical setting, which we did not study in the quantum setting, namely lower-bounds against dynamic data-structure problems. Can these lower-bounds be proven in the quantum regime, also?
    \item More generally, for what other problems can we prove that the known quantum speed-up is optimal, under a reasonable hardness hypothesis such as the \QthreeSUMconjecture{}?
\end{itemize}

\medskip
The various papers using dynamic data-structures in quantum walks, including \cite{Ambainis-ElementDistinctness-2004,Aaronson-ClosestPair-2019} and our paper, give rise to an interesting question in classical data-structures. The vast majority of space-efficient dynamic data-structures are not history-independent: history-independence is a feature which cannot be properly motivated if one is only interested in classical algorithms, but which is fundamentally necessary for using the dynamic data-structure as part of a quantum walk. One can then attempt to understand for which problems do history-independent, memory and time-efficient dynamic data-structures exists. For sorting, the only known solution (skip lists) is randomized. Is this necessary? More generally, what dynamic data-structure problems have solutions that are simultaneously deterministic, time-efficient, space-efficient, and history-independent? Can we prove lower-bounds against data-structures obeying all four criteria simultaneously, which we cannot prove against data-structures obeying only three among the four criteria?

\subsection{Structure of the paper} 
The structure of the rest of the paper is as follows. In Section~\ref{sec:ModelsOfComputation} we describe our model of computation, and in Section~\ref{sec:OtherVersionsThreeSum} we describe various simple variants of the \threeSUM{} problem and show that the \QthreeSUMconjecture{} is equivalent for these versions. (These are not the structured versions we mentioned earlier, here the proof of equivalence is very simple.)

In Section~\ref{sec:ImplicationsOfSortedUniqueSUM}, using the approach we sketched above (in Section~\ref{sec:OrderingDoesntHelpSUM}), we give a full proof that, under the \QthreeSUMconjecture{}, two ``structured'' variants of \threeSUM{} also require $\Omega(n)$ time on a quantum computer. We give two separate proofs: The first proof (in Section~\ref{sec:HardnessOfSortedUniqueSpaceInefficient}) uses a deterministic data structure which is \emph{space-inefficient}, and the second proof (in Section~\ref{sec:HardnessOfSortedUniqueSpaceEfficient}) uses a probabilistic data structure which is space-efficient. As direct implications of these hardness results, in Section~\ref{sec:AppendixThreeSumHardGeometryProblems} we present conditional quantum time lower bounds for several computational geometry problems.

Lastly, in Section~\ref{sec:TimeLowerBoundsThreeSum}, we present conditional quantum time lower bound for \ConvolutionThreeSum{} and \ZeroWtTriangle{} problems. This requires us to prove, under the \QthreeSUMconjecture{}, that a third ``structured'' variant of \threeSUM{} also requires $\Omega(n)$ time on a quantum computer.

\begin{table}[h]
    \begin{center}
\begin{tabular}{ | m{13em} | m{3em}| m{8em} | m{3em}|} 

\multicolumn{2}{r}{\makecell{\textbf{\threeSUM{}-based quantum$\qquad$}\\\textbf{lower-bounds (our results)} $\quad$\rotatebox[origin=c]{90}{$\dlsh$}$\quad$}} & \multicolumn{2}{r}{\makecell{\textbf{Classical}$\qquad\qquad$\\\textbf{complexity} ($\ast\ast$)$\quad$\rotatebox[origin=c]{90}{$\dlsh$}$\;$}}\\
\hline
\centering \textbf{Problems} & & \makecell{\bf Quantum\\\bf upper-bound}  & \\
\hline
\centering \geomBase{} & \centering $\Omega(n)$ & $\widetilde{O}(n)$ ($\ast$) & $\Theta(n^2)$\\ 
\hline
\centering \threePointsOnLine{} &  \centering $\Omega(n)$ & $O(n^{1+o(1)})$ \cite{Ambainis-QuantumGeometryProblems-2020} & $\Theta(n^2)$ \\
\hline
\centering \PointOnThreeLines{} &  \centering $\Omega(n)$ &  $O(n^{1+o(1)})$  \cite{Ambainis-QuantumGeometryProblems-2020} & $\Theta(n^2)$ \\
\hline
\centering \separator{} & \centering $\Omega(n)$ & $O(n^{1+o(1)})$ \cite{Ambainis-QuantumGeometryProblems-2020} & $\Theta(n^2)$\\
\hline
\centering \stripsCoverBox{} & \centering $\Omega(n)$ & $O(n^{1+o(1)})$ \cite{Ambainis-QuantumGeometryProblems-2020} & $\Theta(n^2)$ \\
\hline
\centering \trianglesCoverTriangle{} & \centering $\Omega(n)$ & $O(n^{1+o(1)})$ \cite{Ambainis-QuantumGeometryProblems-2020} & $\Theta(n^2)$\\
\hline
\centering \PointCovering{}  & \centering $\Omega(n)$ & $O(n^{1+o(1)})$ \cite{Ambainis-QuantumGeometryProblems-2020} & $\Theta(n^2)$ \\
\hline
\centering \VisibilityBetweenSegments{}  & \centering $\Omega(n)$ & $O(n^{1+o(1)})$ \cite{Ambainis-QuantumGeometryProblems-2020} & $\Theta(n^2)$ \\
\hline
\centering \HoleInUnion{} & \centering $\Omega(n)$ & $O(n^{1+o(1)})$ ($\dag$) & $\widetilde{\Theta}(n^2)$\\
\hline
\centering \TriangleMeasure{} & \centering $\Omega(n)$ & Open! & $\Theta(n^2)$ \\
\hline
\centering \VisibilityFromInfinity{} & \centering $\Omega(n)$ &  Open! & $\Theta(n^2)$\\
\hline
\centering \VisibleTriangle{} & \centering $\Omega(n)$ & $O(n^{1+o(1)})$ ($\dag$) & $\Theta(n^2)$ \\
\hline
\centering \PlanarMotionPlanning{} & \centering $\Omega(n)$ & Open! & $\Theta(n^2)$ \\
\hline
\centering \threeDmotionPlanning{} & \centering $\Omega(n)$ & Open! & $\Theta(n^2)$ \\
\hline
\centering \GeneralCovering{} & \centering $\Omega(n)$ & $O(n^{1+o(1)})$ \cite{Ambainis-QuantumGeometryProblems-2020} & $\Theta(n^2)$
\\
\hline
\centering \ConvolutionThreeSum{} & \centering $\Omega(n)$ & $O(n)$ ($\ast$) & $\Theta(n^2)$ \\
\hline
\centering \ZeroWtTriangle{} & \centering $\Omega(n^{1.5})$ & $O(n^{1.5})$ ($\ast$) & $\Theta(n^3)$ \\
\hline
\end{tabular}
\end{center}
\ \\
\begin{itemize}
    \item[($\ast$)] Using a simple Grover speed-up on the classical algorithm.

    \item[($\dag$)] Implicit in \cite{Ambainis-QuantumGeometryProblems-2020}, by using the classical reduction to \trianglesCoverTriangle{} and then using the corresponding quantum algorithm.
    
    \item[($\ast\ast$)] All upper-bounds are straightforward: For problems like \ConvolutionThreeSum{} and \ZeroWtTriangle{} the best known algorithms use brute force, for the computational-geometry problems, the upper-bounds follow from geometry arguments \cite{Overmars-ComputationalGeometry-1995}. All lower-bounds for computational geometry problems are from \cite{Overmars-ComputationalGeometry-1995}, the lower-bound for \ConvolutionThreeSum{} follows from \cite{Patrascu-Convoluted3SUM-2010}, and, the lower-bound for \ZeroWtTriangle{} follows from \cite{Williams-FindingTypesOfTriangles-2009}.
\end{itemize}

\caption{This is a summary of all the Quantum-\threeSUM-hard problems mentioned in this paper, with (almost) matching upper bounds for most of them.}

\label{table:SummaryQuantumSumHardProblems}
\end{table}

\newpage
\section{Preliminaries}
\label{sec:Preliminaries}

\subsection{Model of Computation: Standard Quantum Circuit Model augmented with Random Access Gates}
\label{sec:ModelsOfComputation}

We assume that the input is given as a black-box and can be accessed in superposition. One application of this black-box (also known as an oracle) is called a \textit{query}. An input to the oracle consists of two registers: An index\footnote{The query access could be of various forms, for example: In a problem where the input is a graph, the oracle gives query access to the adjacency matrix of the graph, which is indexed by two variables instead of one.} $i$ and space for the description ($d_i$) of the object that is to be queried. The oracle $\mathcal{O}$ is then the unitary given by $\mathcal{O}\ket{i}\ket{b}= \ket{i}\ket{(b + d_i) \mod D}$, where $D$ is the maximum value of the range of values that the description $d_i$ of the object takes. If there is a state $\ket{\psi}=\sum_{i}\alpha_i\ket{i}\ket{0}$ then $\mathcal{O}\ket{\psi}=\sum_i \alpha_i \ket{i}\ket{d_i}$. 

For the purpose of this paper, we are interested in the time complexity of quantum  algorithms, where we define the time complexity as the total number of elementary gates in a circuit implementing the algorithm. By elementary gates we mean all the one-qubit quantum unitaries, the two-qubit CNOT gate \cite{Smolin-ElementaryGates-95} and the multi-qubit random access gates (RAGs) which we define below. Note that any circuit with $s$ number of CNOTS and single-qubit unitaries can be implemented using a discrete gate set comprising of Hadamard, CNOT, phase and $\pi/8$ gates to an accuracy $\epsilon$ with $s\log^c(s/\epsilon)$ gates for a constant $c\approx 2$ \cite{Kitaev-ContinousToDiscreteGateSet-1997}.

A random access gate (RAG) is defined as follows: It takes three inputs, $\ket{i}, \ket{b}, \ket{z}$ for an $i\in [m]$, where $i$ is a register of $\log m$ qubits, $b$ is a single qubit and $z$ is a register of $m$ qubits; and it implements the mapping:
\begin{equation*}
    \ket{i,b,z} \mapsto \ket{i, z_i, z_1...z_{i-1}bz_{i+1}...z_m}
\end{equation*}

The usual quantum circuit model does not include RAGs. Adding RAGs is necessary in our case because, without such gates, even simple data-structure operations, that would take logarithmic time on a classical random-access machine, would take polynomial time in the quantum model. 

The inclusion of RAGs makes the usual quantum circuit model equivalent, up to logarithmic factors, to the time complexity of quantum random-access machines (QRAMs), in that time for QRAMs corresponds to the number of gates in the circuit, and memory for QRAMs corresponds to the number of wires (i.e. the number of qubits, the dimension of the Hilbert space). It is also not hard to see that RAGs can be implemented using a $O(\log n)$-depth, $O(n)$-size ``parallel'' quantum circuit comprising only of CNOT and single-qubit gates. RAGs were used in the sub-linear quantum-time algorithm for the Element Distinctness problem presented by \cite{Ambainis-ElementDistinctness-2004}, and in the sub-linear quantum-time algorithm for the closest-pair problem appearing in \cite{Aaronson-ClosestPair-2019}.

\subsection{Simple variants of \threeSUM{}}
\label{sec:OtherVersionsThreeSum}

The standard \threeSUM{} problem is defined as follows: Given a list $S$ of $n$ integers, do there exist elements $a, b, c \in S$ such that $a+b+c=0$?
Furthermore, there are several other variants of the \threeSUM{} problem that have been useful as intermediary steps for reductions in the classical case. We will begin by considering the following two simple variants:
\begin{enumerate}
    \item \threeSUMpatrascuVersion{}: Given a list $S$ of $n$ integers, are there $a, b, c \in S$ such that $a+b=c$?
    \item \threeSUMthreeList{}: Given three lists $A, B, C$ of $n$ integers each, are there $a \in A$, $b \in B$ and $c \in C$ such that $a+b=c$?
\end{enumerate}

We now show that \threeSUM{}, \threeSUMpatrascuVersion{}, and \threeSUMthreeList{} can all be quantumly reduced to each other with $O(\sqrt{n})$ pre-computation time, followed by an \textit{on-the-fly} fast reduction for any query. Meaning, the reduction is given input $X$ and outputs $Y$ in the following sense: after some pre-computation on $X$, obtaining any of the integers in the list or lists in $Y$ can be done in $O(1)$ time (``on-the-fly'') by querying $X$. This establishes, therefore, that the \QthreeSUMconjecture{} can be equivalently stated for any of these simple variants.

The reduction from \threeSUM{} to \threeSUMthreeList{} is simple, and requires no pre-computation.
We set $A=S, B=S$ and $C=-S$, and now $a\in A, b\in B, c \in C$ have $a+b=c$ if and only if $a, b, (-c) \in S$ have $a+b+(-c)=0$.
The reduction from \threeSUMpatrascuVersion{} to \threeSUMthreeList{} is also simple and on-the-fly with no required pre-computation. Simply set $A=S$, $B=S$ and $C=S$. 

The reduction from \threeSUMthreeList{} to \threeSUMpatrascuVersion{}, is slightly more complicated and is almost identical to the reduction presented in Theorem~3.1 by \cite{Overmars-ComputationalGeometry-1995}, and is as follows: As a pre-computation step, compute the element $m=2\max (A,B,C)$. This part takes $O(\sqrt{n})$ quantum time. Create a list $S$ of size $3n$: For each $a \in A$ put $a'=a+m$ in $S$, for each $b \in B$ put $b'=b+3m$ in $S$, and, for each $c \in C$ put $c'=c+4m$ in $S$. Clearly, if
$a + b = c$ then $a' + b' = c'$. Without loss of generality one can assume the elements of the lists $A,B,C$ are strictly positive because one can add a big number $k$ to all the elements in lists $A,B$ and $2k$ to the elements in list $C$. Additionally, with some elementary calculations one can easily see that whenever there are three elements in $S$ such that $a'+b'=c'$, the corresponding $a,b,c$ come from three different sets $A,B,C$, respectively.

The reduction from \threeSUMthreeList{} to \threeSUM{} is very similar to the above, and is given in Theorem~3.1 by \cite{Overmars-ComputationalGeometry-1995}. We will not repeat it here. As above, the reduction contains a pre-computation step where the maximum of all the lists $A, B, C$ is computed, making the quantum reduction take $O(\sqrt{n})$ pre-computation time. Thereafter, it is an on-the-fly fast reduction for any query. 
 
Hence it follows that the \QthreeSUMconjecture{} is equivalent to the same conjecture stated for \threeSUMpatrascuVersion{}, or for \threeSUMthreeList{}.

\section{Lower-bounds for two structured versions of \threeSUM{}}
\label{sec:ImplicationsOfSortedUniqueSUM}
In 1995, Gajentaan and Overmars \cite{Overmars-ComputationalGeometry-1995} showed that \threeSUM{} can be reduced to several computational geometry problems, proving that these problems cannot have truly sub-quadratic classical algorithms unless the \CthreeSUMconjecture{} is false. These results are proven by first exhibiting a reduction from \threeSUM{} to some fundamental computational geometry problems (\sortedGeomBase{} and \threePointsOnLine{}), and then constructing further reductions among computational geometry problems. See Figure \ref{fig:GeometryProblemsRelations} for an overview of such reductions.

The reductions among computational geometry problems are all simple to adapt to the quantum setting. We will do so in Section \ref{sec:AppendixThreeSumHardGeometryProblems}. However, the fundamental reduction to \sortedGeomBase{} requires sorting the \threeSUM{} instance, and the fundamental reduction to \threePointsOnLine{} requires both sorting and removing duplicate elements. This means that we cannot trivially adapt these reductions to the quantum setting. We overcome this obstacle by employing dynamic data structures, as sketched in the proof Theorem \ref{thm:HardnessOfOrderedSUM}. We will now show how we can dynamically maintain a list in sorted order and free of duplicate elements, by way of a history-independent dynamic data-structure.

\begin{figure}[h]
    \centering
    \includegraphics[scale=0.49]{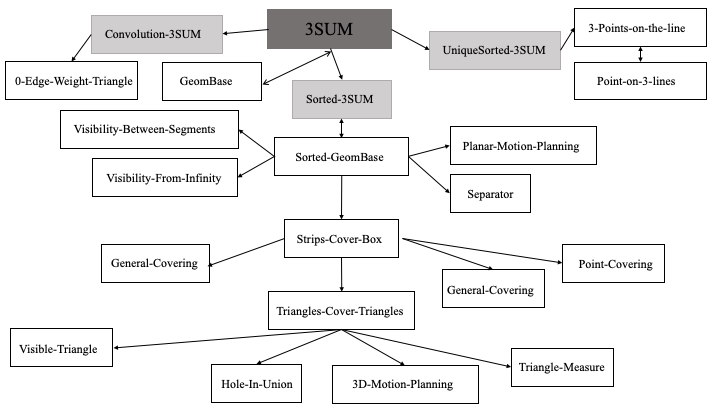}
    \caption{Overview of the different reductions between \threeSUM{}, \ConvolutionThreeSum{}, \ZeroWtTriangle{} and some Computational Geometry problems. The same reductions can be shown both classically and quantumly, but the classical reductions from \threeSUM{} to \ConvolutionThreeSum{}, \sortedThreeSUM{}, and \uniqueSortedSUM{} cannot be trivially translated to the quantum setting.}
    \label{fig:GeometryProblemsRelations}
\end{figure}

\subsection{Hardness of \sortedThreeSUM{} and \uniqueSortedSUM{} using Space Inefficient Data Structures}
\label{sec:HardnessOfSortedUniqueSpaceInefficient}
In this section we present an example of a dynamic data structure that is deterministic and history-independent and we make use of this data structure along with the quantum walk algorithm to prove hardness results for \sortedThreeSUM{} and \uniqueSortedSUM{} problems based on the conjectured hardness of the \threeSUM{} problem.

\paragraph{Example of such a data structure.} Let $M = \{-n^3,\ldots, n^3\}$, and let $S'=(x_1,x_2,...,x_n) \in M^n$ denote an input to the \threeSUM{} problem.

\begin{figure}[h]
    \centering
    \includegraphics[scale=0.51]{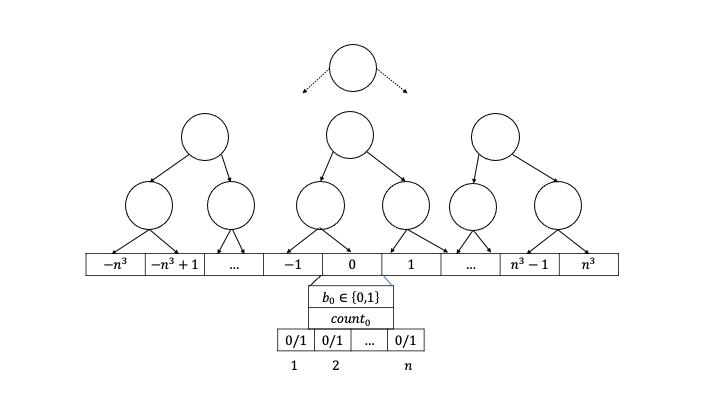}
    \caption{Each of the circular nodes contains two variables \countChildren{} and \countUniqueChildren{} that store the number of marked children, total and unique, respectively. Each of the elements in the bit vector contains three variables: The first indicates if an element is present, the second indicates how many times it is present, and the third variable is a bit vector of length $n$ to indicate the corresponding indices to the element.}
    \label{fig:PrefixTreeBitVector}
\end{figure}

Consider the following bit vector of length $|M|$ with a binary prefix tree built on top of the bit vector as shown in Figure~\ref{fig:PrefixTreeBitVector}. Each element of the bit vector, i.e., the leaf nodes that are indexed by $i \in M$, store the following information: (1) A corresponding bit value $b_i \in \{0,1\}$ where $b_i=1$ if $i \in S'$, also referred as a marked leaf node, otherwise $b_i=0$ referred to as unmarked. (2) A counter ($count_i$) that stores the number of times the element $i$ occurs in $S'$. (3) Lastly, a bit vector on $[n]$ that marks all the indices $j$ for which the corresponding element $x_j=i$. Every other node in the tree has two paths labelled by $0$ or $1$. The path starting at the root node and ending at a leaf node indexed by $i$ is labelled by binary representation of $i$ prepended with a bit that indicates the sign, $0$ for the minus sign and $1$ indicating the positive sign. Nodes other than the leaf nodes have two counters. The first counter stores the number of marked descendant leaf nodes, a variable we will denote by \countChildren, and the second counter stores the number of marked unique descendant leaf nodes, which we will denote by \countUniqueChildren. We claim that the following operations can be implemented efficiently using this data structure. 

\begin{enumerate}
    \item \textbf{Insertion and Deletion.} If an element $x_j$ with $x_j=i$ needs to be inserted then its corresponding bit value $b_i$ needs to be set to $1$ (if it is not already $1$), $count_i=count_i+1$ and the index $j$ in the bit vector for $i$ needs to be marked. All of this can be achieved in $O(1)$ amount of time. There after, the variables \countChildren{} and \countUniqueChildren{} of all its ancestral nodes up to the root node needs to be incremented, which can be done in $\log |M|$ time by traversing the path from root node to the leaf node indexed by $i$. The procedure for deleting an element is just the reverse of the insertion procedure. 
    \item \textbf{Indexing.} This can be requested in two ways: (1) For some $k \in [|S'|]$ return the $k\pth$ largest element of the set~$S'$, or, (2) for some $\bar{k} \in [O(|S'|)]$, return $\bar{k}\pth$ largest unique value of the set $S'$. For example, if $S'=[1, 3, 3, 3, 3, 4, 5]$, then the $4\pth$ largest element of $S'$ is $3$, while the $4\pth$ largest unique value of $S'$ is $5$. For this data structure, both these types of indexing can be implemented efficiently in the following way: Set a pointer to the root node. Clearly the value of \countChildren{} (or \countUniqueChildren) is equal to the total number of marked (unique) leaf nodes. Let $k_{\textit{left}}$ denote the count on the left child of the pointer and similarly let $k_{\textit{right}}$ denote the count on the right child of the pointer. If $k>k_{\textit{left}}$, then set the pointer to the right child and $k=k-k_{\textit{left}}$, and repeat the process recursively with the new $k$. If $k\leq k_{\textit{left}}$ then set the pointer to the left child and repeat the process recursively with the same $k$. Repeat until you reach a marked (unique) leaf node. This entire process takes $\log |M|$ amount of time.
    \item \textbf{Finding.} One can in $O(1)$ time find whether an element $i$ is marked, i.e.~whether it belongs to the set $S'$, by checking if $b_i=1$. 
\end{enumerate}

Whereas the solution above is efficient in terms of the number of gates, it is not efficient in terms of the number of necessary qubits. Indeed, we need about $O(n^4)$ qubits of memory to implement this data structure. As our computational model is the standard quantum circuit model augmented with RAGs (cf.~Section~\ref{sec:ModelsOfComputation}), having a space-inefficient data structure does not affect the runtime of our algorithm. 

\paragraph{The \QthreeSUMconjecture{} for $\sortedThreeSUM$.} The data structure in the above example maintains the input in a sorted order. Furthermore, it is history-independent, as required by Theorem \ref{thm:HardnessOfOrderedSUM}. Let us then define \sortedThreeSUM{} to be equal to \threeSUM{}, with a promise that the input list is sorted. Hence we may use Theorem~\ref{thm:HardnessOfOrderedSUM} to prove the following statement.

\begin{thm}
\label{thm:HardnessOfSortedSUMInefficientDS}
If there is a bounded-error quantum algorithm that solves \sortedThreeSUM{} in $\widetilde{O}(n^{1-\alpha})$ time for some constant $\alpha > 0$ using $g(n)$ qubits of memory, then there exists a constant $\delta>0$ such that \threeSUM{} can be solved in $\widetilde{O}(n^{1-\delta})$ quantum time with probability $1-o(1)$ using at most $O(n^4)+g(n^{1-O(1)})$ qubits of memory.
\end{thm}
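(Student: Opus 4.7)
The plan is to instantiate the generic quantum-walk-plus-data-structure recipe of Theorem \ref{thm:HardnessOfOrderedSUM} with the concrete history-independent prefix-tree/bit-vector data structure defined just above the theorem statement, and then optimize the parameter $r$. First, I would fix $r = \lceil n^\beta \rceil$ for a parameter $\beta \in (0,1)$ to be chosen at the end, and run Ambainis' quantum walk on the Johnson graph $J(n,r)$ in the form given by Childs--Eisenberg for finding constant-size marked subsets. At each vertex $V \subseteq [n]$ with $|V|=r$, I would maintain, entangled with $\ket{V}$, both the tuple $\inputsetS[V]$ of values and the unique in-memory representation $D(\inputsetS[V])$ of the prefix-tree/bit-vector on the domain $M = \{-n^3,\ldots,n^3\}$. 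Since that data structure is deterministic and canonical (each multiset on $M$ has a unique bit pattern), it satisfies the history-independence requirement of Theorem \ref{thm:HardnessOfOrderedSUM}, and the setup, insert, delete, and sorted-index operations each run in $\widetilde O(1)$ RAG-time using $O(n^4)$ qubits of memory.

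Next, I would assemble the three standard pieces of the walk. The \emph{setup} prepares $\ket s = c^{-1/2}\sum_{|V|=r}\ket{V,S[V],D(S[V])}\sum_{k\notin V}\ket k$ in time $\widetilde O(r)$ by inserting the $r$ elements of $S[V]$ into an initially empty data structure in superposition. The \emph{walk} unitary $\walkUnitary$ swaps an element of $V$ with one outside, which amounts to one delete plus one insert on $D(\cdot)$ and therefore runs in $\widetilde O(1)$ RAG-time; crucially, history-independence guarantees that two different histories leading to the same $V$ produce the same basis state, so no amplitude interference is destroyed. The \emph{phase-flip} $\phaseFlipUnitary$ marks $V$ iff $\inputsetS[V]$ contains a \threeSUM{} triple, and this is where the assumption is used: because the prefix-tree supports the ``return the $k$-th smallest element'' query in $\widetilde O(1)$ time, I can simulate, with only a polylog overhead, oracle access to the sorted version of $\inputsetS[V]$, and then invoke the hypothetical $\widetilde O(r^{1-\alpha})$-time algorithm for \sortedThreeSUM{} on a list of size $r$ to produce the mark.

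Adding up the costs of $(\walkUnitary^{t_1}\phaseFlipUnitary)^{t_2}$ with $t_1 = \Theta(\sqrt r)$ and $t_2 = \Theta((n/r)^{3/2})$ gives total runtime (ignoring polylog factors)
\begin{equation*}
   r \;+\; t_1 t_2 \;+\; t_2 \, r^{1-\alpha} \;=\; r \;+\; \tfrac{n^{3/2}}{r} \;+\; \tfrac{n^{3/2}}{r^{1/2+\alpha}}.
\end{equation*}
Choosing $\beta \in \bigl(\max(\tfrac12, \tfrac1{1+2\alpha}),\,1\bigr)$ makes all three terms strictly $O(n^{1-\delta})$ for some $\delta = \delta(\alpha)>0$, and the space used is $O(n^4)$ for the data-structure register plus $g(r) = g(n^{1-O(1)})$ for the invoked \sortedThreeSUM{} solver, as required.

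The main technical obstacle that I expect to treat carefully is the error analysis. The \sortedThreeSUM{} subroutine is assumed only to be bounded-error, so naively composing it inside $\phaseFlipUnitary$ would corrupt the reflection operator and destroy the quadratic walk speed-up. I would handle this in the standard way: amplify the subroutine's success probability to $1 - 1/\mathrm{poly}(n)$ by $O(\log n)$ repetitions (which only costs polylog factors and is already absorbed in the $\widetilde O(\cdot)$), and invoke the standard ``noisy amplitude amplification'' / approximate-reflection result (e.g. that an $\epsilon$-approximate phase-flip called $T$ times contributes $O(T\epsilon)$ error in trace distance) to conclude that the overall walk succeeds with probability $1-o(1)$. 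Combined with a final $\tO(r)$ classical check inside the marked subset to certify the witness, this yields a bounded-error quantum algorithm for the original (unstructured) \threeSUM{} in time $\widetilde O(n^{1-\delta})$, contradicting the \QthreeSUMconjecture{} unless no such subroutine exists.
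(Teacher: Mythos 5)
Your proposal is correct and takes essentially the same route as the paper: the paper's own proof consists of observing that the deterministic prefix-tree/bit-vector data structure satisfies the hypotheses of Theorem \ref{thm:HardnessOfOrderedSUM} and that the only remaining work is to amplify the bounded-error \sortedThreeSUM{} subroutine to success probability $1 - 1/\mathrm{poly}(n)$ so that $(1-\varepsilon)^{t_2}=1-o(1)$, exactly the error argument you give. The runtime expression, choice of $\beta \in (\max(\tfrac12,\tfrac1{1+2\alpha}),1)$, and the $O(n^4)+g(n^{1-O(1)})$ space accounting all agree with the paper's treatment.
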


As the used data-structure is deterministic and history-independent, the sketch of the proof of Theorem \ref{thm:HardnessOfOrderedSUM} is quite complete, and we need only account for the error in the algorithm. The only additional source of error, other than the error of $O(1/\poly(n))$ induced by the quantum walk-based query algorithm (see, e.g., Equation~56 in \cite{Andrew-SubsetFinding-2005}), comes from our invocation of the bounded-error quantum algorithm that solves \sortedThreeSUM{}. If this algorithm succeeds with error probability $\leq 1/3$, we can cheaply reduce the error probability to any small $\varepsilon >0$ by running the subroutine $O(\log (1/\varepsilon))$ times and taking the majority. We would like to chose $\varepsilon$ such that $(1-\varepsilon)^{t_2}=1-o(1)$, where $t_2$ is the number of times the subroutine is called (it is the same parameter as the in the proof sketch of Theorem \ref{thm:HardnessOfOrderedSUM}). Given that $t_2=(n/r)^{1.5}$ with $r$ polynomially related to $n$, it therefore suffices to choose $\varepsilon=(1/\poly(n))$, which means that the total number of times we will invoke the \sortedThreeSUM{} subroutine to compute the original \threeSUM{} problem is only worse by a factor of $O(\log n)$. This is satisfactory, since we ignore all poly-logarithmic factors of $n$ in our analysis.

Apart from the expensive (in terms of space) data structure, the rest of quantum walk algorithm uses only a poly-logarithmic number of qubits, hence, at most $O(n^4 + m')$ qubits suffice to implement this algorithm, where $m'$ denotes the number of qubits required by the subroutine for \sortedThreeSUM{}.

An immediate implication of  Theorem~\ref{thm:HardnessOfSortedSUMInefficientDS} is that a sublinear quantum time algorithm for \sortedThreeSUM{} would imply a sublinear quantum time algorithm for \threeSUM{}, and therefore would contradict the \QthreeSUMconjecture{}.

\begin{corollary}\label{cor:HardnessSortedSUM}
$\sortedThreeSUM$ can be solved in time $\tilde O(n^{1-\eps})$, for some $\eps > 0$, if and only if the \QthreeSUMconjecture{} is false.
\end{corollary}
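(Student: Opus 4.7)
The proof will simply combine the two trivial directions of the biconditional, each of which essentially invokes a result already at hand. Since \sortedThreeSUM{} is literally the \threeSUM{} problem restricted to sorted inputs (a promise variant), one direction is immediate and the other follows from Theorem \ref{thm:HardnessOfSortedSUMInefficientDS}.

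For the harder direction, namely that a sub-linear quantum algorithm for \sortedThreeSUM{} would refute the \QthreeSUMconjecture{}, I would assume a bounded-error quantum algorithm solving \sortedThreeSUM{} in $\tilde O(n^{1-\eps})$ time for some $\eps > 0$. Plugging this hypothesis directly into Theorem \ref{thm:HardnessOfSortedSUMInefficientDS} with $\alpha = \eps$ and, say, $g(n)$ polynomial in $n$, yields a bounded-error quantum algorithm for \threeSUM{} running in $\tilde O(n^{1-\delta})$ time for some $\delta > 0$ with success probability $1 - o(1)$. Since $\tilde O(\cdot)$ absorbs polylogarithmic factors, this in particular gives an $O(n^{1-\delta'})$-time algorithm for any $\delta' < \delta$, contradicting the \QthreeSUMconjecture{}.

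For the easy direction, I would suppose the \QthreeSUMconjecture{} is false, i.e., there exist a $\delta > 0$ and a quantum algorithm solving arbitrary \threeSUM{} instances in $O(n^{1-\delta})$ time. Since any \threeSUM{} algorithm correctly handles every input, it in particular correctly handles inputs satisfying the sortedness promise, so the same algorithm solves \sortedThreeSUM{} within the same time bound, which is in $\tilde O(n^{1-\eps})$ for $\eps = \delta$.

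No step here poses a real obstacle: all of the technical work, namely adapting Ambainis' quantum walk to maintain a history-independent dynamic sorted data structure while invoking a hypothetical fast subroutine for \sortedThreeSUM{} inside the phase-flip step, has already been done inside Theorem \ref{thm:HardnessOfSortedSUMInefficientDS}. The corollary merely repackages that theorem as a tight characterisation under the \QthreeSUMconjecture{}.
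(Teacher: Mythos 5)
Your proof is correct and follows exactly the route the paper intends: the paper treats the corollary as an immediate consequence of Theorem~\ref{thm:HardnessOfSortedSUMInefficientDS} for the forward direction, and the backward direction is trivial because \sortedThreeSUM{} is a promise restriction of \threeSUM{}. Your extra remark that $\tilde O(n^{1-\delta})$ can be replaced by $O(n^{1-\delta'})$ for any $\delta' < \delta$ correctly bridges the gap between the theorem's $\tilde O$ bound and the $O$ bound in the statement of the \QthreeSUMconjecture{}.
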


\paragraph{Lower bounds conditioned on hardness of \sortedThreeSUM{}.} Employing the results of Corollary \ref{cor:HardnessSortedSUM} we are able to show that conditioned on the \QthreeSUMconjecture{}, the following computational geometry problems and many more (refer to Figure~\ref{fig:GeometryProblemsRelations}) require $\widetilde{\Omega}(n)$ quantum time.
\begin{enumerate}
    \item The \separator{} problem: Given a set of $n$ (possibly half-infinite) closed horizontal line segments, is there a non-horizontal separator?
    \item The \stripsCoverBox{} problem: Given a set of strips in the plane does their union contain a given axis-parallel rectangle?
    \item The \trianglesCoverTriangle{} problem: Given a set of triangles in the plane, does their union contain another given triangle?
\end{enumerate}
Having established the quantum hardness of the \sortedThreeSUM{} problem, we can now directly reduce \sortedThreeSUM{} to these problems by a simple adaptation to the quantum setting of the classical reductions presented in \cite{Overmars-ComputationalGeometry-1995}.

\paragraph{The \QthreeSUMconjecture{} for $\uniqueSortedSUM$.}
In addition to storing the input in a sorted order, the above data structure also allows efficient access to the sorted list of unique elements (indexing of type (2)). Let us then define \uniqueSortedSUM{} to be equal to \threeSUM{}, with a promise that the input list is sorted and all its elements are distinct. It then follows from Theorem \ref{thm:HardnessOfOrderedSUM}:

\begin{thm}
\label{thm:HardnessOfUniqueSortedSUMInefficientDS}
If there is a bounded-error quantum algorithm that solves \uniqueSortedSUM{} in $\widetilde{O}(n^{1-\alpha})$ time for some constant $\alpha > 0$ using $g(n)$ qubits of memory, then there exists a constant $\delta>0$ such that \threeSUM{} can be solved in $\widetilde{O}(n^{1-\delta})$ quantum time with probability $1-o(1)$ using at most $O(n^4)+g(n^{1-O(1)})$ qubits of memory.
\end{thm}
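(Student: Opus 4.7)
The plan is to re-run the quantum-walk argument behind Theorem \ref{thm:HardnessOfSortedSUMInefficientDS} almost verbatim, with the same history-independent, space-inefficient data structure from Section \ref{sec:HardnessOfSortedUniqueSpaceInefficient}, but now supplying as the query family $\{q_i\}$ (a) the $i$-th smallest \emph{distinct} element of $S[V]$ (indexing of type (2) in that section), (b) multiplicity counts at a given value, and (c) membership queries. All three are answered in $\tO(1)$ time by the data structure and are deterministic functions of the multiset $S[V]$, so history-independence is preserved and Theorem \ref{thm:HardnessOfOrderedSUM} applies directly, once we explain how a fast \uniqueSortedSUM{} subroutine is used to implement $\phaseFlipUnitary$.

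The one genuinely new point -- and the only place where I expect the proof to need care -- is that \uniqueSortedSUM{} only detects triples of \emph{pairwise distinct} values summing to zero, whereas a \threeSUM{} solution inside $S[V]$ may reuse a value that occurs in $S[V]$ with multiplicity at least $2$. My plan is to split the marked-vertex check inside $\phaseFlipUnitary$ into three coherent sub-routines, one for each way a solution can be structured: (i) $0$ appears at least three times in $S[V]$; (ii) there exists $a \in S[V]$ of multiplicity $\ge 2$ with $-2a \in S[V]$; (iii) there exist three pairwise distinct values $a,b,c \in S[V]$ with $a+b+c = 0$. Sub-case (i) is resolved in $\tO(1)$ time by inspecting the multiplicity counter at the leaf indexed by $0$. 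Sub-case (ii) is a Grover search over the at most $r$ distinct values of $S[V]$, running in $\tO(\sqrt{r})$ time with error $o(1)$, since for each candidate $a$ the predicate ``$\mathrm{count}(a) \ge 2$ and $-2a \in S[V]$'' is $\tO(1)$-checkable using (b) and (c). Sub-case (iii) is precisely an instance of \uniqueSortedSUM{} on the $u \le r$ sorted distinct values of $S[V]$, which by hypothesis is solved in $\tO(r^{1-\alpha})$ time; its bounded error is amplified down to $1/\mathrm{poly}(n)$ by $O(\log n)$-fold majority, exactly as in the proof of Theorem \ref{thm:HardnessOfSortedSUMInefficientDS}.

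Combining the three sub-cases by a coherent OR gives $\UnitaryTime(\phaseFlipUnitary) = \tO(r^{1-\alpha'})$ for $\alpha' = \min(\alpha, 1/2) > 0$. Substituting into the balance $r + n^{1.5}/r + (n^{1.5}/r^{1.5})\cdot r^{1-\alpha'}$ from the proof sketch of Theorem \ref{thm:HardnessOfOrderedSUM}, and choosing $r = n^\beta$ with $\max(1/2,\, 1/(2\alpha'+1)) < \beta < 1$, yields an overall runtime of $\tO(n^{1-\delta})$ for some $\delta > 0$. The memory footprint is $O(n^4)$ for the data structure plus $g(n^{1-O(1)})$ for the recursive \uniqueSortedSUM{} calls, matching the stated bound.

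The main obstacle, in my view, is purely clerical rather than conceptual: one must implement the three sub-case checks as reversible sub-routines whose scratch registers are properly uncomputed, so that the phase kickback inside $\phaseFlipUnitary$ is clean and the amplitude interference of the Johnson-graph walk is not disturbed by leftover entanglement with ancillary registers. All the conceptual work has already been done by Theorem \ref{thm:HardnessOfOrderedSUM} and by the data-structure construction of the previous subsection.
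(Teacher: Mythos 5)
Your proof is correct and follows the same quantum-walk-plus-data-structure route as the paper's (extremely terse) treatment. The one place you deviate is the three-way case split inside $\phaseFlipUnitary$, which you introduce because you read \uniqueSortedSUM{} as detecting only \emph{pairwise-distinct} triples. Under the convention the paper actually uses --- where a \threeSUM{} witness $(a,b,c)$ is just a triple of \emph{values} from $S$, so $a=b$ and even $a=b=c$ are permitted (see the paper's own footnote on the reduction to \threePointsOnLine{}, where $a=b=1$ is taken as a solution for $S=\{1,-2,3\}$) --- de-duplication of $S[V]$ preserves the answer, and the paper simply invokes Theorem~\ref{thm:HardnessOfOrderedSUM} with the ``$i$-th smallest distinct element'' query and lets the \uniqueSortedSUM{} subroutine do all the work. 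Your decomposition is therefore not strictly needed, but it is harmless: cases (i)--(iii) do exhaust the multiplicity structure of a zero-sum triple, the extra Grover pass costs only $\tO(\sqrt{r})$, and you still land at $\alpha' = \min(\alpha, 1/2) > 0$, which after the same choice $r = n^\beta$ with $\beta \in (\max(1/2, 1/(2\alpha'+1)), 1)$ gives the required sublinear bound. In fact your version is mildly stronger, since it would go through even under the stricter ``pairwise-distinct values'' reading of \uniqueSortedSUM{}; the small price is the slightly degraded exponent via $\alpha'$. The clerical point you flag about reversible, cleanly-uncomputed sub-routines for the phase kickback is indeed the only implementation care required, and the error-handling (amplification to $1/\poly(n)$ and accounting for the quantum walk's $O(1/\poly(n))$ error) matches what the paper does for Theorem~\ref{thm:HardnessOfSortedSUMInefficientDS}.
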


\medskip\noindent
As above, it follows as a corollary that the \QthreeSUMconjecture{} can be stated equivalently for $\uniqueSortedSUM$:
\begin{corollary}\label{cor:HardnessUniqueSortedSUMInefficientDS}
$\uniqueSortedSUM$ can be solved in time $\tilde O(n^{1-\eps})$, for some $\eps > 0$, if and only if the \QthreeSUMconjecture{} is false.
\end{corollary}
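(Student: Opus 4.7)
The corollary should drop out as a clean packaging of Theorem~\ref{thm:HardnessOfUniqueSortedSUMInefficientDS} together with the trivial observation that \uniqueSortedSUM{} is a promise restriction of \threeSUM{}. So the plan is just to verify both implications carefully and flag where the real work sits.

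For the forward direction, I would assume that a bounded-error quantum algorithm solves \uniqueSortedSUM{} in $\widetilde O(n^{1-\eps})$ time for some $\eps > 0$ (any memory function $g(n)$ is harmless since we only care about the time bound). Theorem~\ref{thm:HardnessOfUniqueSortedSUMInefficientDS} then immediately supplies a constant $\delta > 0$ and a bounded-error quantum algorithm for \threeSUM{} running in $\widetilde O(n^{1-\delta})$ time with success probability $1-o(1)$, which contradicts the \QthreeSUMconjecture{}. Contrapositively: if the \QthreeSUMconjecture{} holds, then no such $\eps > 0$ can exist for \uniqueSortedSUM{}.

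For the reverse direction, I would unpack the definition: if the \QthreeSUMconjecture{} is false, there is a quantum algorithm for \threeSUM{} running in $O(n^{1-\delta})$ time for some $\delta > 0$. Every input to \uniqueSortedSUM{} is simultaneously a valid input to \threeSUM{}, and both problems ask the same existential question about a zero-sum triple. Running the assumed \threeSUM{} algorithm with the same input oracle $\ket{i,b} \mapsto \ket{i, b \oplus x_i}$ therefore solves \uniqueSortedSUM{} within $O(n^{1-\delta}) \subseteq \widetilde O(n^{1-\delta})$ time.

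The main obstacle has already been dealt with inside Theorem~\ref{thm:HardnessOfUniqueSortedSUMInefficientDS}: the hard step was threading a \uniqueSortedSUM{} subroutine, which expects a sorted and duplicate-free input, through a quantum walk on the Johnson graph $J(n,r)$ whose vertices carry arbitrary $r$-subsets of the unstructured \threeSUM{} list. This is what the history-independent prefix-tree data structure of Section~\ref{sec:HardnessOfSortedUniqueSpaceInefficient} — with its indexing of the $\bar k^{\text{th}}$ largest unique value via \countUniqueChildren{} — was built to handle. The corollary itself therefore requires no further machinery; it is just Theorem~\ref{thm:HardnessOfUniqueSortedSUMInefficientDS} read as a biconditional.
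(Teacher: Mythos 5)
Your proposal is correct and matches the paper's (implicit) derivation: the forward direction is exactly Theorem~\ref{thm:HardnessOfUniqueSortedSUMInefficientDS} together with the observation that $\tilde O(n^{1-\delta})$ is in particular $O(n^{1-\delta'})$ for any $\delta' < \delta$, and the reverse direction is the trivial fact that \uniqueSortedSUM{} is a promise restriction of \threeSUM{}. The paper compresses this to ``it follows as a corollary,'' so you have simply spelled out the same two-line argument.
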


\paragraph{Lower bounds conditioned on hardness of \uniqueSortedSUM{}.} As implications to Corollary \ref{cor:HardnessUniqueSortedSUMInefficientDS} we show that conditioned on the \QthreeSUMconjecture{}, the following computational geometry problems require $\widetilde{\Omega}(n)$ quantum time.
\begin{enumerate}
    \item The \threePointsOnLine{} problem: Given a set of points in the plane, is there a line that contains at least three of the points?
    \item The \PointOnThreeLines{} problem: Given a set of lines in the plane, is there a point that lies on at least three of them?
\end{enumerate}
Both of these problems are computationally equivalent, as the second problem is the exact dual of the first problem under the Point-Line dualization. The classical reduction from \threeSUM{} to these two problems assumes that the input to \threeSUM{} is unique, i.e.\ there are no duplicate elements in the input. As discussed earlier in Section \ref{sec:OrderingDoesntHelpSUM}, the \CthreeSUMconjecture{} also trivially holds for this promise version of \threeSUM{}, but such a claim can not be easily made in the quantum setting. Therefore, we use the results of Corollary~\ref{cor:HardnessSortedSUM} and Corollary~\ref{cor:HardnessUniqueSortedSUMInefficientDS} to establish that both \uniqueSortedSUM{} and \threeSUM{} are equally hard as the original \threeSUM{} problem in the quantum setting as well.

We give an illustration of the relations between the different geometry problems in Figure~\ref{fig:GeometryProblemsRelations} and we point the readers to Section~\ref{sec:AppendixThreeSumHardGeometryProblems} for details of some of these reductions.

\subsection{Hardness of \sortedThreeSUM{} and \uniqueSortedSUM{} using Space Efficient Data Structures}

\label{sec:HardnessOfSortedUniqueSpaceEfficient}
Recall the statement of Theorem~\ref{thm:HardnessOfOrderedSUM} in Section~\ref{sec:OrderingDoesntHelpSUM}. This theorem states that, under the \QthreeSUMconjecture{}, structured versions of \threeSUM{} require linear quantum time given the existence of a particular data structure. Apart from the requirement that the operations on the data structure are efficient, there are two other requirements: Firstly that the data structure has a unique, history-independent representation in memory, and secondly that every data structure operation terminates within a fixed amount of time $t = n^{o(1)}$. We would now like to improve the results from the earlier subsection by using \emph{space-efficient} data structures to prove conditional hardness of the \sortedThreeSUM{} and \uniqueSortedSUM{} problems.
However, no deterministic space-efficient data structures are known, and in fact the known probabilistic data structure for sorting (skip lists) no longer satisfies the termination condition, meaning, there will exist some ``bad'' inputs for which some of the data-structure operations take too much time. In this subsection we will  show that the claims of Theorem~\ref{thm:HardnessOfSortedSUMInefficientDS} hold even after a relaxation towards using probabilistic data structures, provided that the probability that an input is ``bad'' is small.

The approach is the same as before: we assume that there are sublinear-time quantum algorithms that solve \sortedThreeSUM{} or \uniqueSortedSUM{}, and we obtain a sublinear-time quantum-walk-based algorithm for \threeSUM{}

\paragraph{Recap of proof strategy of Theorem \ref{thm:HardnessOfSortedSUMInefficientDS} from Section~\ref{sec:HardnessOfSortedUniqueSpaceInefficient}} To prove our result, we use the quantum walk algorithm presented by Ambainis that solves the Element Distinctness problem using an optimal number of queries and in an optimal amount of time \cite{Ambainis-ElementDistinctness-2004} (up to poly-logarithmic factors). As shown by \cite{Andrew-SubsetFinding-2005}, this algorithm can be used to solve the \threeSUM{} problem in $\Theta(n^{3/4})$ queries, with a matching lower bound given by Belovs and \v Spalek \cite{Belovs-KSUMLowerBound-2012}. This optimal query algorithm for \threeSUM{} is a quantum walk on the \textit{Johnson graph} ($J(n,r)$ with $r=n^{3/4}$), for which the high-level idea is as follows: Consider a graph of $n \choose r$ vertices with each vertex of the graph labelled by an $r$-sized subset of $[n]$. Put an edge between two vertices if and only if these two sets differ by exactly two elements. The optimal query algorithm is a quantum walk on this resultant graph $J(n,r)$. In order to minimise the number of queries required for solving \threeSUM{}, the key idea is to store the list of query values of every $r$-sized subset along with the state that represents that subset. Having done that, as a part of the quantum walk algorithm, a subroutine checks if there is a \threeSUM{} solution in the queried values corresponding to each of these $r$-sized subsets, but because it can check in superposition there is a quantum speedup that is guaranteed. While this checking step requires no additional queries, actual implementation of such a subroutine does require a significant amount of run time, so that the quantum walk algorithm for \threeSUM{} takes at best linear time. It is this subroutine, i.e.~the one that checks for a \threeSUM{} solution in the $r$-sized set of queries, that we would like to further speed up. To do so we use two components: First, a dynamic data structure that maintains a sorted list with additional support of efficient (i.e.~$n^{o(1)}$) insertions, deletions, look-ups and indexing, and secondly, the assumed sublinear time algorithm that solves \sortedThreeSUM{} (or \uniqueSortedSUM{}). 

What differs from the results presented earlier in Section~\ref{sec:HardnessOfSortedUniqueSpaceInefficient} is that the data structure we use here is  \emph{space efficient} and \emph{probabilistic}. In spite of that, the probability of error is small and does not affect the quantum walk algorithm significantly. 

Please note that, the proof of this theorem can only be understood if the reader is familiar with the results from Section~6 in \cite{Ambainis-ElementDistinctness-2004}. 

\begin{thm}
\label{thm:HardnessOfThreeSumProbabilisticDataStructure}
If there is a bounded-error quantum algorithm that solves \sortedThreeSUM{}, i.e.~\threeSUM{} with a promise that the input is sorted, in $\widetilde{O}(n^{1-\alpha})$ time for some constant $\alpha >0$ using $g(n)$ qubits of memory, then there exists constants $\beta < 1,\delta>0$ such that \threeSUM{} can be solved in $\widetilde{O}(n^{1-\delta})$ quantum time with high probability using at most $\widetilde O(n^\beta) + g(n^\beta)$ qubits of memory.
\end{thm}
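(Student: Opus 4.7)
The plan is to replay the argument of Theorem~\ref{thm:HardnessOfSortedSUMInefficientDS}, but with the prefix-tree/bit-vector data structure replaced by a skip list whose random choices are fixed once and for all by a hash function sampled at the very beginning of the algorithm. Concretely, I would sample a hash function $h: M \to \mathbb{Z}_{\geq 1}$ (from, e.g., a pairwise-independent family with geometrically distributed output) and define, for every $T \subseteq M$, the skip list $D_h(T)$ to be the canonical skip list in which element $i \in T$ appears on levels $1,\ldots,h(i)$. Because $h$ is fixed classically before the quantum walk starts, the state $D_h(T)$ is a deterministic function of $T$, and so the history-independence requirement of Theorem~\ref{thm:HardnessOfOrderedSUM} is satisfied. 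Skip lists use only $\widetilde O(r)$ bits to represent an $r$-element set, giving the desired $\widetilde O(n^\beta)$ memory bound.

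Next, I would set up the same quantum walk on $J(n,r)$ with $r = n^\beta$, storing $\ket{V, S[V], D_h(S[V]), k}$ at each vertex. The setup builds $D_h(S[V])$ in superposition by $r$ insertions; each step of $\walkUnitary$ is one replace (one delete + one insert); and $\phaseFlipUnitary$ calls the hypothesized $\widetilde O(r^{1-\alpha})$-time \sortedThreeSUM{} subroutine on $D_h(S[V])$, simulating its data-structure queries at $\mathrm{polylog}(n)$ cost each. If every operation really took $\mathrm{polylog}(n)$ time, the cost balance of Theorem~\ref{thm:HardnessOfOrderedSUM} would immediately give $\widetilde O(n^{1-\delta})$ quantum time for a suitable choice of $\beta < 1$.

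The main obstacle is that skip-list operations can be slow for bad choices of $h$ or for unlucky subsets: a worst-case operation is not $\mathrm{polylog}(n)$ but only \emph{expected} $\mathrm{polylog}(n)$. I would handle this by enforcing a hard cutoff $t = \mathrm{polylog}(n)$ on every skip-list operation, letting the algorithm write a ``failure flag'' into an auxiliary register whenever an operation does not terminate within $t$ steps. Call a vertex $V$ \emph{bad} (for the fixed $h$) if any of the $O(r)$ operations used to build $D_h(S[V])$, or any operation invoked by the \sortedThreeSUM{} subroutine on $D_h(S[V])$, exceeds the cutoff. Standard tail bounds for skip lists (insertion/lookup depth is geometric with parameter $1/2$) together with a union bound over the $r^{O(1)}$ operations show that, for a uniformly random $h$ from a sufficiently large family, any \emph{fixed} $V$ is bad with probability at most $1/\mathrm{poly}(n)$. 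Taking the cutoff large enough and averaging, with high probability over $h$ only a $1/\mathrm{poly}(n)$ fraction of the vertices of $J(n,r)$ are bad.

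Finally, I would argue that such a small fraction of bad vertices perturbs the quantum walk by only $o(1)$ in trace distance. This uses the same hybrid/error-per-step analysis Ambainis employs in Section~6 of \cite{Ambainis-ElementDistinctness-2004}: the ideal walk $\walkUnitary$ and its truncated, error-flagged version agree on all good vertices, so after $t_1 t_2 = \widetilde O((n/r)^{3/2}\sqrt{r})$ applications the accumulated error is at most $t_1 t_2 \cdot \sqrt{\Pr[\text{bad}]} = o(1)$, provided the polynomial in the ``bad'' bound is chosen large enough relative to $\beta$. The contributions from the inner \sortedThreeSUM{} subroutine and from the quantum-walk amplitude estimation are absorbed the same way as in Theorem~\ref{thm:HardnessOfSortedSUMInefficientDS} (boost the subroutine's success to $1 - 1/\mathrm{poly}(n)$ by $O(\log n)$ repetitions and majority vote). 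Putting everything together gives a bounded-error $\widetilde O(n^{1-\delta})$-time quantum algorithm for \threeSUM{} using $\widetilde O(n^\beta) + g(n^\beta)$ qubits, contradicting the \QthreeSUMconjecture{} unless no such $\alpha$ exists.
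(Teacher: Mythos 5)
Your overall plan is sound — quantum walk on the Johnson graph plus a history-independent skip list, followed by a truncation-and-hybrid error analysis — and it matches the paper in spirit. But there are two genuine gaps, one of which is critical.

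The critical gap is that the basic skip list you describe does not support the queries the \sortedThreeSUM{} subroutine actually needs. The sorted-input subroutine must read entries by \emph{rank} (``return the $k\pth$ smallest element of $S[V]$'') in $n^{o(1)}$ time, because that is how one simulates oracle access to a \emph{sorted array} of $r$ values. A vanilla skip list, in which node $i$ simply appears on levels $1,\dots,h(i)$, supports search by \emph{value} in polylog expected time, but rank selection degenerates to walking $\Theta(k)$ nodes at level $0$. The paper's proof spends most of its technical effort on exactly this point: it augments each skip-list node with $2\lceil\log n\rceil$ counter variables (the distance to its left/right neighbour at each level), and it proves that these counters can be maintained under insert/delete in $\mathrm{polylog}$ time via the recursive distance computation. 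Your proposal silently asserts that data-structure queries can be simulated at polylog cost, which is precisely the thing that needs to be constructed. Without the augmentation, $\phaseFlipUnitary$ cannot be implemented in the claimed time, and the whole cost balance fails.

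The second issue concerns your choice to fix the hash function $h$ classically rather than carrying the hash functions $(h_1,\ldots,h_{l_{\max}})$ in superposition as Ambainis (and the paper) do. This is a defensible alternative in principle, because $h$ is never acted on coherently during the walk — but it does change what you need to prove, and your error estimate does not account for that change. When the hash functions ride along in superposition, Ambainis's Lemma~6-type argument works because for every \emph{fixed} $(S',y)$ the fraction of hash tuples causing a failure is small, and the superposition over hash tuples is uniform for every $(S',y)$. When $h$ is fixed, the state is a superposition only over $V$, and the amplitudes $\alpha_V^t$ are \emph{not} uniform — they depend on the type of $V$, with the marked types acquiring much larger per-vertex amplitude as the walk proceeds. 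Your bound $t_1 t_2\cdot\sqrt{\Pr[\text{bad}]}$ implicitly takes $\Pr[\text{bad}]$ to be the overall fraction of bad vertices, but what actually controls $\|\psi^{\textit{bad}}_t\|$ is the fraction of bad vertices \emph{within each type}. You can recover this with a per-type Markov bound plus a union bound over the $O(1)$ types (losing polynomial factors that a sufficiently strong tail bound absorbs), but that argument is missing. Relatedly, you propose a \emph{pairwise}-independent hash family; Ambainis needs $\Theta(\log n)$-wise independence to get the $n^{-\Omega(c)}$ tail on skip-list operation length, and pairwise independence is too weak to push the ``bad'' probability down far enough to survive the $n^{1-\Omega(1)}$ walk steps (let alone the extra Markov loss your fixed-$h$ route incurs).

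In short: your high-level structure is right, but (a) you must augment the skip list with rank/distance counters — this is not optional, it is the main new data-structure contribution of the theorem — and (b) if you insist on sampling $h$ once classically rather than keeping it in superposition, you must strengthen the tail bound (via $O(\log n)$-wise independence and a larger cutoff) and redo the error accounting type-by-type.
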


\begin{proof}
We first prove our result for the case where there is at most one solution to the \threeSUM{} problem. Our results can be generalised to the situation where there are multiple solutions by running our single-solution algorithm repeatedly on different sized subsets of input $x_i, i \in [n]$, as in Algorithm~3 in Section~5 of  \cite{Ambainis-ElementDistinctness-2004}. The analysis of the multi-solution algorithm given by \cite{Ambainis-ElementDistinctness-2004} holds in our situation as well.

\begin{figure}[t]
    \centering
    \includegraphics[scale=0.51]{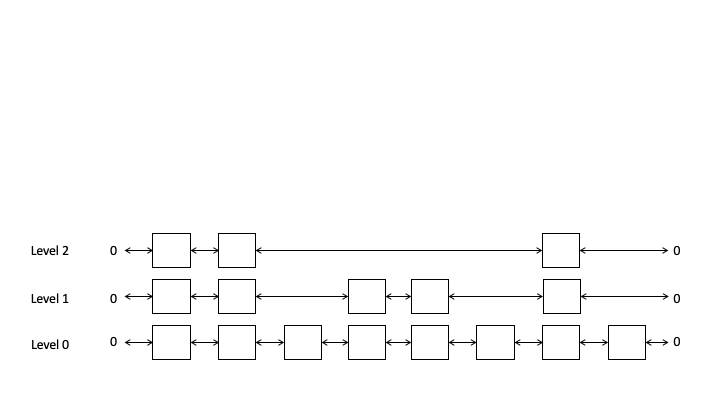}
    \caption{Illustration of a three-level skip list}
    \label{fig:SkipList}
\end{figure}

As the first part of the proof is identical to the proof sketch of Theorem~\ref{thm:HardnessOfOrderedSUM}, we directly jump onto discussing the data structure. 

\paragraph{Data structure.}As mentioned above, the proof of this theorem follows the same general approach as to the quantum walk algorithm by \cite{Ambainis-ElementDistinctness-2004, Andrew-SubsetFinding-2005}. The key idea of the quantum walk algorithm is to use an $r$-sized set of indices from $[n]$ along with their respective query values as nodes of the quantum walk. Let us denote this set, which will describe the data we want to store in the data structure, by $S'$.

Consider the data structure used in Section $6$ of the quantum walk algorithm for solving the Element Distinctness problem (on $n$ elements) by \cite{Ambainis-ElementDistinctness-2004}. This data structure supports efficient insertions, deletions and finding. Furthermore, elements stored in this data structure are of the type $(i,x_i)$, where $i$ is an index and $x_i$ denotes the query value corresponding to index $i$ with elements first sorted by the value of $x_i$ and then sorted by the value of $i$. 

We use the same data structure as the one presented by Ambainis \cite{Ambainis-ElementDistinctness-2004}, a combination of a hash table and a skip list, extended to also allow for efficient indexing.

\begin{enumerate}
    \item \textbf{Hash table.} The hash table consists of $r$ buckets each of which is equipped to store a maximum of $\ceil{\log n}$ entries. We will soon see how each entry uses $O(\log^2 n + \log m)$ qubits, here $m$ is the upper bound on number of qubits required to store the query values. The total memory used is therefore $O(r\log^3(n+m))$.
    
    Whenever there is a request for inserting a new element, say $(i,x_i)$, a memory location is allocated for $(i,x_i)$ using the following hash function on the value of $i$: 
    \begin{equation}
    \label{eq:HashTableForUniquenessCondt}
        h(i)=\floor{r \cdot i/n} + 1. 
    \end{equation}
    This is done to ensure that the data structure always stores the same set $S'$ in the same way in the memory, no matter how it was created (satisfying the \textit{history-independence} condition mentioned in the statement of Theorem~\ref{thm:HardnessOfOrderedSUM}).
    
    The entry for $(i,x_i)$ contains enough memory to store the element $(i,x_i)$ together with storing $\ceil{\log n}+1$ pointers that will be used for the skip list as shown in the Figure \ref{fig:SkipList}.
    
    \item \textbf{Skip list.} Once the memory is allocated for the new element, the node referring to this element has to be included in a skip list pointer structure with $\ceil{\log n}+1$ levels -  Figure \ref{fig:SkipList}. Every element $(i,x_i)$ has a randomly assigned level $l_i$ between $0$ and $\ceil{\log n}$. The element $(i,x_i)$ with level value $l_i$ will be present on all the levels $l$ with $l\le l_i$. The elements in each level of skip list are stored in the order of increasing $x_i$. If there are several $i$ with same $x_i$ then they are ordered by $i$.
    
    \item Additionally, for the data structure operations to be efficient, it is required that the probability for any element to be present in a level $l$ is $1/2^l$. This ensures that with very high probability, no more than $O(\log n)$ elements in any level are encountered while inserting/deleting or finding an element in the skip list. For the details of this analysis, refer to Section $6$ of \cite{Ambainis-ElementDistinctness-2004}.
 
\end{enumerate}
However, this skip list is not yet equipped with an \emph{efficient indexing}\footnote{The term \textit{indexing} in context of a data structure is mentioned in the example presented in Section \ref{sec:ImplicationsOfSortedUniqueSUM}. }, as required by our reduction. We therefore make the following modifications.

\paragraph{Modifications to Ambainis' data structure in order to incorporate efficient indexing.} The type of indexing that we are interested in is as follows: Given some $k \in [|S'|]$ return the $k\pth$ largest element of the set $S'$, i.e.~the element which has the $k\pth$ largest query value among the elements in $S'$, in $n^{o(1)}$ time. This is required for the following reason: Recall that as a part of our reduction, we invoke the subroutine that presumably solves \sortedThreeSUM{} in $O(r^{1-\alpha})$ (for some $\alpha > 0$) quantum time on an $r$-sized subset of $[n]$.
We need to make sure that implementing oracle access to the inputs to this subroutine isn't too costly.
With a few additions to the skip list data structure used by Ambainis, we are able to implement efficient (i.e.~$n^{o(1)}$ time) indexing.

The changes we make to the data structure are as follows: Let $\textit{node}_i$ denote the node that stores the element $(i,x_i)$. Additional to that, the node also stores $O(\log n)$ pointers that are used in the skip list. We add a total of $2\ceil{\log n}$ new variables, with two count variables associated with each level except for the lowest level. For every level that the node is in, the number of elements present between this node and the nodes to its left and right are stored, thereby using two variables for this purpose. If at any level there are no nodes on the left (right) of $\textit{node}_i$ then it stores the number of elements present before (after) this element $(i,x_i)$ in the skip list at level 0. 

We will first analyze how much additional time is required to update these variables, and the variables respective to its neighbouring nodes, when a node is inserted or deleted.
\begin{enumerate}
    \item \textbf{Insertion.} Any new node is inserted using the same procedure as in \cite{Ambainis-ElementDistinctness-2004} which is as follows: Start with the pointer at the top-most level of the skip list, find the (\textit{left} and \textit{right}) positions of the nodes between which the new node can be inserted. Insert it in this level if the level assigned\footnote{The level for each node is assigned using a family of $d$-wise independent hash functions with $d=4\log n+1$ but this $d$ can be set to $c_1 \log n$ for any $c_1$ by using a bigger ($O(n^{\ceil{d/2}})$) family of $d$-wise independent functions which exists. Refer to Theorem~1 by \cite{Ambainis-ElementDistinctness-2004} for more information on this family of hash functions.} to the node agrees with such an insertion otherwise, move to the next level starting from the current position. Repeat this process till the lowest level is reached. 
    
    Consider a scenario where we are investigating the position of the new node on a level $l$. The node could either get inserted at this level or not. The rules of updating the $l$-level counters for neighbouring nodes and for \nodeSkipList{i} depend on whether or not the insertion happens, which is why we consider these two scenarios separately.
    
    \begin{enumerate}
        \item If \nodeSkipList{i}  doesn't get inserted at level $l$, then right (left) $l$-level variable of the left (right) neighbouring node has to be incremented by $1$, while the $l$-level variables of $\textit{node}_i$ can be assigned a $-1$.
        \item The process to update these $l$-level variables when \nodeSkipList{i} gets inserted at level $l$ is slightly trickier. Note that, if the new node does get inserted at level $l$, all the $t$-level variables with $t\leq l$ also need to be updated for \nodeSkipList{i} as well as for its left and right neighbours. For that, we use the following procedure.
     \end{enumerate}   
    \paragraph{Recursive procedure to update $l$-level variables.} Let's say \nodeSkipList{i} is to be inserted at level $l$ and therefore at all the subsequent levels $t$ where $t \leq l$. The insertion procedure, starting from level $l_{max}=\ceil{\log n}+1$ finds the position i.e.~the left and right neighbours of \nodeSkipList{i} at level $l_{max}$, having found that, starting from (what could have been) the left most neighbour of \nodeSkipList{i} at level $l_{max}$ we then traverse along level $(l_{max}-1)$ pointer to find \nodeSkipList{i}'s immediate neighbours at level $(l_{max}-1)$. We repeat this process until we reach the lowest level. Computation of the level variables also uses a similar approach. For any $t>0$, the $t$-level values of each of these neighbouring nodes of \nodeSkipList{i} cannot be computed until we reach the lowest level. Therefore, to achieve this, we use the pointers to these (left and right) nodes at every level during insertion. Once we know the position of the new node on the lowest level, we use a recursive procedure (that we will soon discuss) to compute the values of the count variables for \nodeSkipList{i} and its neighbouring nodes at every level, and update them accordingly.
    
    Let $\nodeNeighbourLeft{i}{}$ denote the left neighbour of \nodeSkipList{i} and similarly let $\nodeSkipList{i}\rightarrow \textit{right}$ denote its right neighbour. The skip list data structure allows us to access these neighbours in constant time. We extend this notation by adding a value on the arrow to indicate the neighbours of \nodeSkipList{i} on  a particular level. For example $\nodeSkipList{i} \xrightarrow[]{l} \textit{left}$ denotes the left neighbour of \nodeSkipList{i} at level $l$, similarly $\nodeSkipList{i} \xrightarrow[]{l} \textit{right}$ denotes its right neighbour at level $l$. Additionally, when \nodeSkipList{i} appears before \nodeSkipList{j} in the skip list, we denote that by $\nodeSkipList{i} <_{order} \nodeSkipList{j}$.   
    
    Furthermore, let $d(\textit{node}_i, \textit{node}_j)$ denote the distance between the two nodes indexed by $i,j$, i.e., the number of elements present between the two nodes at level $0$. The distance function $d(\cdot, \cdot)$ has the following properties:
    \begin{enumerate}
        \item For any node \nodeSkipList{i}, $d(\nodeSkipList{i}, \nodeSkipList{i})=0$.
        \item For any two neighbouring nodes \nodeSkipList{i} and \nodeSkipList{j} at level $0$ the $d(\nodeSkipList{i}, \nodeSkipList{j})=0$. 
        \item For any two nodes \nodeSkipList{i} and \nodeSkipList{j} we have $d(\nodeSkipList{i}, \nodeSkipList{j})=d(\nodeSkipList{j}, \nodeSkipList{i})$.
        \item For any three consecutive nodes \nodeSkipList{i}, \nodeSkipList{j}, \nodeSkipList{k} at any level, we have $d(\nodeSkipList{i}, \nodeSkipList{k})=d(\nodeSkipList{i}, \nodeSkipList{j})+d(\nodeSkipList{j}, \nodeSkipList{k})+1$.
        \item For the \nodeSkipList{i} at any level $t$,  the following holds:
        \begin{equation*}
            d(\nodeNeighbourLeft{i}{t},\nodeNeighbourRight{i}{t})=d(\nodeSkipList{i},\nodeNeighbourRight{i}{t})+d(\nodeNeighbourLeft{i}{t},\nodeSkipList{i})+1
        \end{equation*}
    \end{enumerate} 
    The insertion procedure guarantees the following: If \nodeSkipList{i} has to be inserted at level $l$ then for all levels $1\leq t \leq l$ the following statement holds, 
    \begin{equation*}
        d(\nodeNeighbourLeft{i}{t}, \nodeSkipList{i})=\underbrace{d(\nodeNeighbourLeft{i}{t}, \nodeNeighbourLeft{i}{t-1})}_{\text{Part 1}}+\underbrace{d(\nodeNeighbourLeft{i}{t-1}, \nodeSkipList{i})}_{\text{Part 2}}+1.
    \end{equation*}
    Using the above-mentioned formula one can calculate the value for the $t$-level variables of $\nodeSkipList{i}$ and its neighbours recursively.  Calculation of `Part 1' involves accessing some $(t-1)$-level variable values that were computed even before \nodeSkipList{i} was  inserted, because $(\nodeNeighbourLeft{i}{t-1})\xrightarrow[]{t-1}\textit{left}$ is in actuality \nodeNeighbourLeft{i}{t}. The calculation of `Part 2' requires a recursion. Recall that in our insertion process we store the pointers to the left and right neighbours of \nodeSkipList{i} at every level, so that, once the distance $d(\nodeNeighbourLeft{i}{t},\nodeSkipList{i})$ is computed for any $t\leq l$,  we can use those pointers to update the corresponding left $t$-level variables of the \nodeSkipList{i} and the right $t$-level variable of \nodeNeighbourLeft{i}{t} to the value $d(\nodeNeighbourLeft{i}{t},\nodeSkipList{i})$. For any $t\leq l$, calculating $d(\nodeSkipList{i},\nodeNeighbourRight{i}{t})$ is easy once $d(\nodeNeighbourLeft{i}{t},\nodeSkipList{i})$ is known because
    \begin{equation*}
       d(\nodeSkipList{i},\nodeNeighbourRight{i}{t})=\underbrace{d(\nodeNeighbourLeft{i}{t},\nodeNeighbourRight{i}{t})}_{\text{distance value prior to insertion}}-d(\nodeNeighbourLeft{i}{t},\nodeSkipList{i}). 
    \end{equation*}
    Furthermore, the update can be done in $\log n$ steps because the pointers to the right neighbours of \nodeSkipList{i} were also stored. Therefore, in addition to Ambainis' insertion subroutine, the number of steps required to update all these $t$-level variables of the \nodeSkipList{i} and its neighbouring nodes is $O((\log n)^2)$.\footnote{Note that this recursive procedure can also be implemented iteratively in $O(\log n)$ steps using an additional $O(\log n)$ size array.}
    
    \item \textbf{Deletion.} This process is very simple. Start with the highest level by setting $t=l_{max}$, if the \nodeSkipList{i} is not at level $t$ then reduce the right (left) value of the left (right) possible neighbour of \nodeSkipList{i}. Update $t=t-1$, repeat this process until reaching a $t$ where \nodeSkipList{i} is in fact inserted. All we need to do is to traverse until reaching the left and right neighbours of \nodeSkipList{i} and delete \nodeSkipList{i} from that level and all the subsequent levels. However, before deleting the node, note the sum of $t$-level values of \nodeSkipList{i}. Set the right(left) $t$-level value for the \nodeNeighbourLeft{i}{t}(\nodeNeighbourRight{i}{t}) to the sum after deleting this node from level $t$. Once again update $t=t-1$ and repeat this process until reaching the lowest level $0$.
\end{enumerate}

\paragraph{Indexing procedure and its time complexity.} It is to implement efficient indexing that we modified the skip list data structure used by \cite{Ambainis-ElementDistinctness-2004}. Recall that every node in the skip list now has two variables dedicated to every level $t$ for all $0\leq t\leq (\ceil{\log n}+1)$. Let us denote these variables as $\variableForIndexing{t}{l},\variableForIndexing{t}{r}$, where $t$ indicates the level, $l$ and $r$ indicate the left and right variables, respectively. Given a node $\nodeSkipList{i}$ the variable $\variableForIndexing{t}{l}$ ($\variableForIndexing{t}{r}$) corresponding to this node (which we will refer to as $\nodeSkipList{i}.\variableForIndexing{t}{l/r}$) indicates the number of elements present at the lowest level between \nodeSkipList{i} and its left (right) $t$-level neighbour \nodeNeighbourLeft{i}{t} (\nodeNeighbourRight{i}{t}).

Say given a value $k$, we would like to know the $k\pth$ element of the list. The procedure is identical to finding a particular element, the only difference being that we compare the value of the variables associated with every node ($\nodeSkipList{i}.\variableForIndexing{t}{l/r}$) instead of the node's query value $x_i$. 

The procedure is as follows: 
\begin{enumerate}
    \item Start at level $t=\ceil{\log n}+1$, let the first node that we encounter at this level be some node $\nodeSkipList{i}$.
    \item Check if $k=\nodeSkipList{i}.\variableForIndexing{t}{l}+1$. If yes then return the value of the \nodeSkipList{i}.
    \item If $k\neq \nodeSkipList{i}.\variableForIndexing{t}{l}+1$ then:
    \begin{enumerate}
        \item If $k\leq \nodeSkipList{i}.\variableForIndexing{t}{l}$ then it is guaranteed that the $k\pth$ level element is smaller than the element stored in \nodeSkipList{i}, hence, take a step down and then traverse the new level to the first left neighbour of \nodeSkipList{i} at level $t-1$ that is \nodeNeighbourLeft{i}{t-1}. Then repeat the process from (2) by setting $\nodeSkipList{i}= \nodeNeighbourLeft{i}{t-1}$.
        \item If $k> \nodeSkipList{i}.\variableForIndexing{t}{l}$ then repeat the process from (2) by setting $k=k-\nodeSkipList{i}.\variableForIndexing{l}{t}$ and  $\nodeSkipList{i}=\nodeNeighbourRight{i}{t}$.
    \end{enumerate}
\end{enumerate}

We will soon see that the probability that any indexing operation takes too long is small.

Additionally, to ensure that the quantum walk algorithm has a guaranteed running time, similar to Ambainis' walk algorithm for Element Distinctness, we also modify the insertion, deletion, finding or indexing algorithm to abort if takes more than $c\log^4(n+m)$ steps\footnote{Note that our reduction works even if we modify our algorithm to abort the data structure operations that took more than $n^{o(1)}$ time.}. We observe that this modification has no significant effect on the quantum walk algorithm, details of which are discussed below.

\paragraph{Overview of the errors.}
\begin{enumerate}
    \item As a part of our reduction, we invoke a bounded-error quantum algorithm that solves \sortedThreeSUM{} (presumably) in sublinear amount of time with error probability $\leq 1/3$ as a subroutine. We can cheaply reduce this error to any small $\varepsilon >0$ by running the subroutine  $O(\log (1/\varepsilon))$ times. We need to choose $\varepsilon$ such that $(1-\varepsilon)^{t_2}=1-o(1)$. (Here $t_2$ is the number of times the subroutine is called.) Given that $t_2=(n/r)^{1.5}$, with $r$ polynomially related to $n$, it suffices to choose $\varepsilon=(1/\poly (n))$. This means that the total number of times we will invoke the \sortedThreeSUM{} subroutine to compute the original \threeSUM{} problem is $O(\log n)\cdot t_2$. As we ignore all  poly-logarithmic factors of $n$ in our analysis, we can easily afford to pay for reducing the error probability for the bounded-error \sortedThreeSUM{} subroutine.
    
    \item The data structure we use is probabilistic, therefore occasionally the data structure operations take too long or fail. We will show that, for any $r$-sized set $S'$ that is stored on this modified skip list data structure, the probability that insertion, deletion, finding or indexing takes more than $O(\log^4 n)$ steps is small.
\end{enumerate}

\paragraph{Detailed analysis of errors.} 
Our analysis almost directly follows from the error analysis presented in Section 6 by \cite{Ambainis-ElementDistinctness-2004}.  

We first distinguish between two types of data structure operations (\DSone, \DStwo), the first type (\DSone) constitutes all the data structure operations that take place as the part of the Ambainis' original walk algorithm, the second type (\DStwo) constitutes all the indexing operations that simulate query access to the inputs to the \sortedThreeSUM{} subroutine. The operations of type \DStwo{} are unique to our algorithm. 

Following similar notation as Section 6 of \cite{Ambainis-ElementDistinctness-2004}, we define
\begin{equation}
    \ket{\psi_t}=\sum_{S', y, h_1,..., h_{l_{max}}}\alpha^{t}_{S',y}\ket{\psi_{S',y,h_1,...,h_{l_{max}}}}\ket{y}\ket{h_1,...,h_{l_{max}}}
\end{equation}
as the state of the algorithm after $t$ steps of \emph{perfect} data structure operations (including both types \DSone{} and \DStwo{}) which could take (a lot) more than $\poly \log n$ time but are operations that definitely terminate. Let $\ket{\psi_t}=\ket{\psi^{\textit{good}}_t}+\ket{\psi^{\textit{bad}}_t}$ be the decomposition of state $\ket{\psi_t}$ into \emph{good} and \emph{bad} (un-normalised) states where the good state consists of all terms indexed by $(S', h_1,...h_{l_{max}})$ on which the next (i.e.~the $(t+1)\pth$) operation terminates successfully in $c\log^4(n+m)$ steps (for some fixed constant $c$). Analogously, the bad state consists of all terms indexed by $(S', h_1,...h_{l_{max}})$ on which the $(t+1)\pth$ operation does not terminate in $c\log^4(n+m)$ steps. 

Let $\ket{\psi_{t}'}$ be the state of the quantum algorithm after $t$ (possibly \emph{imperfect}) data structure operations with none of them taking more than $c\log^4(n+m)$ steps. By the term \emph{imperfect} we mean operations that are abruptly terminated in $c\log^4(n+m)$ steps. Combining the results of Lemmas~\ref{lem:traceDistanceBetweenPerfectImperfectStateBounded} and \ref{lem:traceDistanceBetweenPerfectImperfectStateSmall} we show that independent of the value or type (whether \DSone{} or \DStwo{}) of $t$, the trace distance between the two states $\ket{\psi_t}$ and $\ket{\psi_t'}$ is small.

\begin{lem}[Reformulation of Lemma 5 by \cite{Ambainis-ElementDistinctness-2004}]
\label{lem:traceDistanceBetweenPerfectImperfectStateBounded}
For both types of data structure operations (\DSone{},\DStwo{}), we have
\begin{equation}
    \traceNorm{\psi_t}{\psi_t'} \leq \sum_{i=1}^{t-1}||2\psi^{\textit{bad}}_i||.
\end{equation}
\end{lem}
\begin{proof}
The proof of Lemma 5 by \cite{Ambainis-ElementDistinctness-2004} applies to both \DSone{} and \DStwo{} types of data structure operations. 

Let $\ket{\psi_t''}$ be an intermediary state that is obtained with first $(t-1)$ \emph{perfect} data structure operations of any of \DSone{} or \DStwo{} type, but, the last $t\pth$ operation could be \emph{imperfect} as it terminates in $c\log^4(n+m)$ steps. Then for any $t$,
\begin{equation}
\label{eq:IntermediateState1}
    \traceNorm{\psi_t}{\psi_t'} \leq \traceNorm{\psi_t}{\psi_t''}+\traceNorm{\psi_t''}{\psi_t'}.
\end{equation}
The second term on the right hand side of the equation, $\traceNorm{\psi_t''}{\psi_t'}$ is equal to  $\traceNorm{\psi_{t-1}}{\psi_{t-1}'}$ because $\ket{\psi_t''}$ and $\ket{\psi_t'}$ are obtained by applying same (possibly imperfect) unitary transformation to $\ket{\psi_{t-1}}$ and $\ket{\psi_{t-1}'}$, respectively.

The analysis of the first term however is slightly more complicated and is as follows: Let $U_p$ and $U_i$ represent the perfect and imperfect ways of implementing the $t\pth$ data structure operation. Then $\ket{\psi_t}=U_p\ket{\psi_{t-1}}=U_p\ket{\psi^{\textit{good}}_{t-1}}+U_p\ket{\psi^{\textit{bad}}_{t-1}}$ and $\ket{\psi_t''}=U_i\ket{\psi_{t-1}}=U_i\ket{\psi^{\textit{good}}_{t-1}}+U_i\ket{\psi^{\textit{bad}}_{t-1}}$. Clearly, $U_p\ket{\psi^{\textit{good}}_{t-1}}=U_i\ket{\psi^{\textit{good}}_{t-1}}$ because of how $\ket{\psi^{\textit{good}}_{t-1}}$ state is defined. Therefore,
\begin{equation}
\label{eq:IntermediateState2}
    \traceNorm{\psi_t}{\psi_t''} = \traceNorm{U_p \ket{\psi_{t-1}}}{U_i \ket{\psi_{t-1}}} = \traceNorm{U_p \ket{\psi^{\textit{bad}}_{t-1}}}{U_i \ket{\psi^{\textit{bad}}_{t-1}}}\leq 2||\psi^{\textit{bad}}_{t-1}||.
\end{equation}
Combining Equations \ref{eq:IntermediateState1} and \ref{eq:IntermediateState2} we get 
\begin{equation}
    \traceNorm{\psi_t}{\psi_t'} \leq \traceNorm{\psi_{t-1}}{\psi_{t-1}'}+2||\psi^{\textit{bad}}_{t-1}||,
\end{equation}
therefore, inductively proving the statement of Lemma \ref{lem:traceDistanceBetweenPerfectImperfectStateBounded}.
\end{proof}

We will now show that, for any $t$, no matter the type (\DSone{} or \DStwo{}) of data structure operation, the value of $||\psi^{\textit{bad}}_t||$ is small.

\begin{lem}[Modified version of Lemma 6 by \cite{Ambainis-ElementDistinctness-2004} to include \DStwo{} type operations]
\label{lem:traceDistanceBetweenPerfectImperfectStateSmall}
For every $t$, irrespective of the operation being of type \DSone{} or \DStwo{},
\begin{equation}
    ||\psi^{\textit{bad}}_t|| = O(\frac{1}{n^{1.5}}).
\end{equation}
\end{lem}
\begin{proof}
Let again $\ket{\psi_t}$ denote the state of the algorithm after $t$ perfect data structure operations. More formally,
\begin{equation}
\label{eq:StateDescription}
    \ket{\psi_t}=\sum_{S', y, h_1,..., h_{l_{max}}}\alpha^{t}_{S',y}\ket{\psi_{S',y,h_1,...,h_{l_{max}}}}\ket{y}\ket{h_1,...,h_{l_{max}}}.
\end{equation} 
Ambainis' quantum walk algorithm for Element Distinctness that we apply ensures that every basis state $\ket{S',y}$ of same type\footnote{In the context of the constant-sized-subset finding problem where the constant is $k$, the type of a basis state $\ket{S',y}$ is parameterized by $(s,b)$, where $s=|S''\cap \{i_1,...,i_k\}|$ and $b=|\{y\} \cap \{i_1,...,i_k\}|$ for the set $S''\subseteq [n]$ with $|S''|=r$ and $y \in [n] \setminus S''$, with $S''$ denoting the set of indices from the set $S'$. Therefore, there are total of $(2k+1)$ types of basis state. One can also refer to the result in Lemma 1 and 6 by \cite{Ambainis-ElementDistinctness-2004} for the definition of type. } has the same amplitude throughout the algorithm, furthermore, all $h_1,...,h_{l_{max}}$ have equal probabilities. Therefore, we can re-write Equation \ref{eq:StateDescription} as
\begin{equation}
\label{eq:StateDescription1}
    \ket{\psi_t}=\sum_{g}\alpha^{t}_{g} \sum_{Type(S',y)=g, h_1,..., h_{l_{max}}}\ket{\psi_{S',y,h_1,...,h_{l_{max}}}}\ket{y}\ket{h_1,...,h_{l_{max}}},
\end{equation} 
where $g=(\cdot, \cdot)$ denotes a type and $|g|=7$ in the context of the \threeSUM{} problem. 

For the \DSone{} type operations errors can occur in the following two ways:
\begin{enumerate}
    \item The hash table used as mentioned in Equation~\ref{eq:HashTableForUniquenessCondt} can overflow, as more than $\ceil{\log n}$ elements could get hashed to the same bucket. It is shown in the proof of Lemma~6 by \cite{Ambainis-ElementDistinctness-2004}) that, given a type $g=(s,b)$, the fraction of basis states $\ket{S',y}$ of type $g$ for which this operation fails is $o(\frac{1}{n^4})$.
    
    \item The other error stems from the fact that the insertion, deletion, or lookup on the skip list could take more than $c\log^4(n+m)$ steps. We will soon see that, for any fixed $S'$, the probability of that happening is $O(\frac{1}{n^3})$. One can either refer to the proof of Lemma~\ref{lem:AnalysisForDSTwoOps} mentioned below or to Lemma~6 by \cite{Ambainis-ElementDistinctness-2004} for the analysis. 
\end{enumerate}

To summarize: Ambainis in \cite{Ambainis-ElementDistinctness-2004} shows that for any fixed type $g$, the fraction of the states $\ket{S',y,h_1,...,h_{l_{max}}}$ satisfying $\textit{Type}(S',y)=g$ on which the \DSone{} type data structure operations fail is $O(\frac{1}{n^3})$. 

Furthermore, one can re-write the state $\ket{\psi_t}$ from Equation~\ref{eq:StateDescription1} as

\begin{equation}
\label{eq:StateDescription2}
    \sum_{g}\alpha^{t}_{g}( \underbrace{\sum_{\substack{Type(S',y)=g, \\  h_1,..., h_{l_{max}}, \\ \textit{Bad states}}}\ket{\psi_{S',y,h_1,...,h_{l_{max}}}}\ket{y}\ket{h_1,...,h_{l_{max}}}}_{\textit{Small fraction of states}}+\sum_{\substack{Type(S',y)=g, \\ h_1,..., h_{l_{max}}, \\ \textit{Good states}}}\ket{\psi_{S',y,h_1,...,h_{l_{max}}}}\ket{y}\ket{h_1,...,h_{l_{max}}}).
\end{equation} 

Therefore, for any $t$ that is a data structure operation of type \DSone{} we have $||\psi^{\textit{bad}}_t||^2=O(\frac{1}{n^3})$. We will now show that the same result holds for the data structure operation of type \DStwo{} as well.

\paragraph{Detailed error analysis for the \DStwo{} type data structure operations.} We now focus on the errors that stem from the \DStwo{} type data structure operation unique to our algorithm. In the context of a quantum walk on the Johnson graph $J(n,r)$, these operations correspond with checking whether a node is marked. To implement this check, we invoke an assumed \sortedThreeSUM{} quantum subroutine that presumably takes $O(r^{1-\alpha})$ time on an $r$-sized sorted input. To simulate the query access to this subroutine's input we use efficient indexing for the data structure that stores this input in sorted order. The error stems from  the fact that the data structure we use is probabilistic, as our indexing operation will not always terminate in a fixed time. 

Unlike the analysis before, here we don't need to consider different types separately, because the failure of indexing operations is only determined by the set of hash functions $\{h_1,...,h_{l_{max}}\}$ and is same for any fixed set $S'$. Therefore, it is sufficient to estimate the probability of error of implementing any indexing operation on a fixed set $S'$, irrespective of its type, stored on a skip list data structure determined by a set of randomly chosen hash functions $h_1,...,h_{l_{max}}$. This probability turns out to be ${o}(\frac{1}{n^3})$. More formally,

\begin{lem}[Extension of Lemma~6 by \cite{Ambainis-ElementDistinctness-2004} to include analysis for \DStwo{} operations]
\label{lem:AnalysisForDSTwoOps}
Given a fixed $S'$ stored on a skip list determined by randomly chosen hash functions $(h_1,h_2,...,h_{l_{max}})$ from a family of $d$-wise independent hash functions\footnote{Note that the value of $d$ can be set to $c_1\log n$ for any $c_1$ by just increasing the size of the family to $O(N^{\ceil{d/2}})$. Also see Theorem~1 by \cite{Ambainis-ElementDistinctness-2004}.}, with $d=4\log n+1$ , the probability that any indexing operation accesses more than $c\log^2 n$~pointers  (for some constant $c$)  is $o(\frac{1}{n^3})$.
\end{lem}
\begin{proof}
We are given a set $S'$ such that $|S'|=r$ and $S'$ is stored in a skip list determined by randomly chosen $h_1,...,h_{l_{max}}$ from a family of $d$-wise independent hash functions. The first part of the result is to show that for any index $k \in [r]$, the probability that finding the element with $k\pth$ largest query value from the skip list that stores $S'$ takes too long is $O(\frac{1}{n^4})$. This claim directly follows from the calculations in the proof of Lemma~6 by \cite{Ambainis-ElementDistinctness-2004} which we will now summarize. 

Recall that the skip list stores the set $S'$, which contains elements of the form $(i,x_i)$ with elements first ordered by its query values and then ordered by the index value. Our earlier-mentioned indexing algorithm starts with the level $l=l_{max}$ and finds the two neighbouring elements at this level between which the $k\pth$ largest element of $S'$ resides. Lets call these elements of level $l$ as $E_{\textit{left},l}$ and $E_{\textit{right},l}$. Having found the (relative) position of the $k\pth$ largest element at level $l$, it then traverses the level $(l-1)$ pointers starting from element $E_{\textit{left}}$ to find the elements $E_{\textit{left},(l-1)}$ and $E_{\textit{right},(l-1)}$. Following the similar argument as to that of Ambainis we show that the probability of our indexing algorithm taking more than $c\log^4 (n+m)$ steps is small.

The probability that the indexing algorithm accesses more than $d(=4\log n +1)$ elements at level $l_{max}$ is $\frac{1}{n^{O(\log n)}}$. (In fact for the level $l_{max}$, one can prove something stronger: the probability that there are more than $d$ elements at level $l_{max}$ is $\frac{1}{n^{O(\log n)}}$.) The reason is as follows: For any $i \in [n]$, the element $i$ belongs to the level $l_{max}$ if $h_1(i)=h_2(i)=...=h_{l_{max}}(i)=1$. Suppose that the $d$ elements at level $l_{max}$ are $i_1,i_2,....,i_d$, then for all these elements we have $h_{l}(i_1)=h_{l}(i_2)=...=h_{l}(i_d)=1$ for all $l \in \{1,...,l_{max}\}$. For any $l$, the probability of $h_{l}(i_1)=h_{l}(i_2)=...=h_{l}(i_d)=1$ is equal to $\frac{1}{2^d}$ because $h_l$ is a randomly chosen hash function from a family of $d$-wise independent hash functions. Which means that the probability that for all $l$, $h_{l}(i_1)=h_{l}(i_2)=...=h_{l}(i_d)=1$ is $\frac{1}{2^{d\cdot O(\log n)}}=\frac{1}{n^{O(\log n)}}$.

Now we will argue that the probability that our indexing algorithm visits more than $d$ elements at any level $l$ with $1 \leq l < l_{max}$ is $O(\frac{1}{n^4})$. Suppose that the indexing algorithm is currently traversing the $l$ level pointers starting from the element $E_{\textit{left},l+1}$. Additionally, we are guaranteed that the $k\pth$ largest element is less than the element $E_{\textit{right},l+1}$. The probability that there are more than $d$ elements between $E_{\textit{left},l+1}$ and $E_{\textit{right},l+1}$ at level $l$ is $O(\frac{1}{2^d})$. The reason is as follows: Suppose that the $d$ elements are indexed by $i_1,i_2,...,i_d$. If these elements are present at level $l$, but not present at level $l+1$, then $h_{l+1}(i_1)=h_{l+1}(i_2)=...=h_{l+1}(i_d)=0$, which happens with probability $\frac{1}{2^d}$, and also for all $l' \in \{1,...,l\}$ $h_{l'}(i_1)=h_{l'}(i_2)=...=h_{l'}(i_d)=1$, which happens with probability $\frac{1}{2^{ld}}=O(\frac{1}{2^d})=O(\frac{1}{n^4})$. 

As there are a total of $O(\log n)$ levels, using the union bound we can see that the probability of an indexing operation failing at any level is $O(\frac{\log n}{n^4})$.

Also recall that we run the \sortedThreeSUM{} quantum algorithm on an $r$-sized sorted input, which implies that there are $r$ possible indices that can be queried. Therefore by invoking the union bound again we see that the probability of at least one of these $r$ indexing operations failing on a fixed set $S'$ is $o(\frac{1}{n^3})$. 
\end{proof}

To summarise, what we have proved is the following: Having fixed a set $S'$, the probability that a random choice of hash functions fails any indexing operation is $o(\frac{1}{n^3})$. As all $h_1,...h_{l_{max}}$ have equal probabilities, the quantum state in our walk algorithm has an $o(\frac{1}{n^3})$ fraction of \emph{bad} hash functions associated with every $S'$. Which means that the square of the amplitude of the \emph{bad} part of the state of the algorithm after a \DStwo{} operation is $||\psi^{\textit{bad}}_t||^2=O(\frac{1}{n^3})$. Therefore, we see that the bound of Lemma 6 by \cite{Ambainis-ElementDistinctness-2004} holds for \DStwo{} type data-structure operations as well.

Hence we conclude the following: Irrespective of the $t\pth$ data-structure operation being of type \DSone{} or \DStwo{}, we have $||\psi^{\textit{bad}}_t||=O(\frac{1}{n^{1.5}})$. 
\end{proof}

We have already seen that the quantum walk algorithm that we describe uses at most $O(n^{1-O(1)})$ data-structure operations which means, employing the results of Lemma~\ref{lem:traceDistanceBetweenPerfectImperfectStateBounded}, \ref{lem:traceDistanceBetweenPerfectImperfectStateSmall} and~\ref{lem:AnalysisForDSTwoOps}, we get that the distance between the final state of the algorithm in the ideal scenario, i.e.,~the situation where the data structure operations that took too long were not aborted, and the real scenario, where the data structure operations that took longer than $O(\log^4(n+m))$ were aborted, is at most $O(\frac{1}{n^{0.5}})$. Thereby proving that a sublinear algorithm for \sortedThreeSUM{} (if it exists) can be used to construct a sublinear algorithm for \threeSUM{} which succeeds with very high probability. Thus proving  Theorem~\ref{thm:HardnessOfThreeSumProbabilisticDataStructure} which is the main theorem of this section.
\end{proof}

Given the \QthreeSUMconjecture{}, an immediate implication of Theorem~\ref{thm:HardnessOfThreeSumProbabilisticDataStructure} is that there are no bounded-error quantum sublinear time algorithms to solve \sortedThreeSUM{}. Interestingly enough, we can further extend this result to prove quantum hardness for \uniqueSortedSUM{} problem as well. More precisely, we show the following:

\begin{figure}[t]
    \centering
    \includegraphics[scale=0.50]{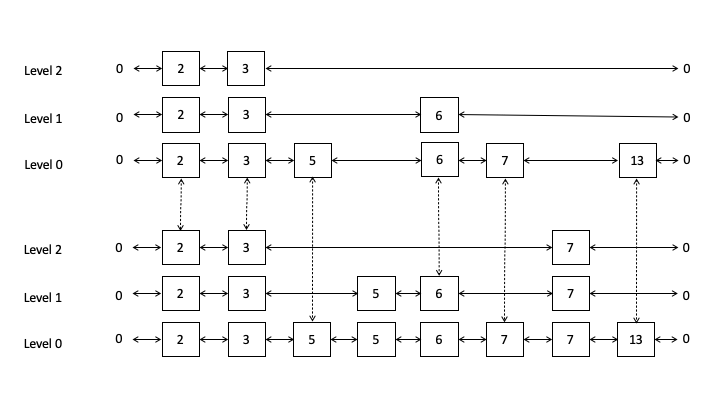}
    \caption{There are two skip lists maintained here, one that contains only distinct queried values, the other contains all queried elements including duplicates. Every element in the first skip list has a pointer to the first occurrence of that element in the second skip list. Note that the nodes in the second skip list also stores the index values, which we do not show in the figure, along with the query values.}
    \label{fig:SkipListUnique}
\end{figure}

\begin{corollary}[to Theorem \ref{thm:HardnessOfThreeSumProbabilisticDataStructure}]
\label{cor:HardnessUniqueSortedSUM}
If there is a bounded-error quantum algorithm that solves \uniqueSortedSUM{}, i.e.~\threeSUM{} with a promise that the input is sorted and distinct, in $\tO(n^{1-\alpha})$ time for some $\alpha >0$ using $g(n)$ qubits of memory, then there exist constants $\beta <1,\delta>0$ such that \threeSUM{} can be solved in $\tO(n^{1-\delta})$ quantum time with high probability using at most $\tO(n^\beta)+g(n^{\beta})$ qubits of memory.
\end{corollary}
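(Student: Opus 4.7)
The plan is to mirror the proof of Theorem~\ref{thm:HardnessOfThreeSumProbabilisticDataStructure} while upgrading the dynamic data structure. In addition to the sorted skip list over $S[V]$ used in that proof, the algorithm would maintain a second skip list containing only the distinct query values appearing in $S[V]$, as sketched in Figure~\ref{fig:SkipListUnique}. Each node of the distinct-values list stores a multiplicity counter together with a pointer to the first occurrence of that value in the primary list; both lists are built from the same family of $d$-wise independent hash functions, and the first-occurrence pointer would target the occurrence whose index $i \in V$ is smallest. The combined structure is therefore history-independent: conditional on the hash functions, its memory contents depend only on $S[V]$, not on the update sequence that produced it. I would then augment both lists with the counters $\variableForIndexing{t}{l}, \variableForIndexing{t}{r}$ introduced in Theorem~\ref{thm:HardnessOfThreeSumProbabilisticDataStructure} to preserve $n^{o(1)}$-time insertion, deletion, finding, and indexing, with the $o(1/n^3)$ failure bound of Lemma~\ref{lem:AnalysisForDSTwoOps} extending by a union bound across the two lists.

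Next I would revise the phase-flip subroutine $\phaseFlipUnitary$ that decides whether $S[V]$ is marked. A \threeSUM{} triple in $S[V]$ either (a) uses three pairwise-distinct values, or (b) reuses some value at least twice. Case (a) is handled by invoking the assumed \uniqueSortedSUM{} subroutine on the distinct-values skip list, with each query to the $k\pth$-smallest distinct value simulated in $n^{o(1)}$ time by the indexing procedure. Case (b) is handled by two auxiliary checks: first, return ``yes'' if the multiplicity of $0$ stored in the distinct-values list is at least $3$; second, run Grover search over the distinct-values list for a value $v$ of multiplicity $\geq 2$ such that $-2v$ also appears in $S[V]$, where membership is decided by an $n^{o(1)}$-time lookup on the primary list. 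The auxiliary cost is $O(r^{1/2 + o(1)})$, so the checking time is dominated by $O(r^{1-\alpha'})$ with $\alpha' = \min(\alpha, 1/2) > 0$. Plugging this into the cost expression $r + t_1 t_2 + t_2 \cdot r^{1-\alpha'}$ and choosing $r = n^\beta$ with $\beta \in (\max(\tfrac{1}{2}, \tfrac{1}{1+2\alpha'}), 1)$ as in the parent theorem gives a sublinear-time bounded-error quantum algorithm for \threeSUM{}.

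For the error analysis, the \DSone{} operations on the combined structure still fail with probability $o(1/n^3)$ per step by applying the argument of Lemma~\ref{lem:traceDistanceBetweenPerfectImperfectStateSmall} to each list separately, and the \DStwo{} indexing operations inherit the same bound from Lemma~\ref{lem:AnalysisForDSTwoOps}; the $O(\log n)$ additional indexings required for the Grover and multiplicity checks contribute only lower-order terms, so the bound $||\psi^{\textit{bad}}_t|| = O(1/n^{1.5})$ is unchanged and the distance-to-ideal analysis of Theorem~\ref{thm:HardnessOfThreeSumProbabilisticDataStructure} applies essentially verbatim. The step I expect to be the main obstacle is case (b) of the checking subroutine: one must be careful that the unique-sorted promise of the assumed subroutine does not cause genuine \threeSUM{} solutions involving repeated values to be overlooked, and the auxiliary Grover-based duplicate check must itself run in time subsumed by $O(r^{1-\alpha'})$ so as not to destroy the overall sublinear runtime; the second mild obstacle is anchoring the inter-list pointer to a canonical occurrence of each distinct value so that history-independence is preserved under arbitrary insertion orders.
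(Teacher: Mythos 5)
Your proposal takes the same route as the paper's proof: maintain, alongside the primary skip list over $S[V]$, a second history-independent skip list over the distinct query values (with multiplicity counters and a canonical pointer into the primary list), augment both with the level counters for $n^{o(1)}$-time indexing, and feed the distinct-values list to the assumed $\uniqueSortedSUM$ subroutine inside $\phaseFlipUnitary$. That is exactly the construction the paper cites via Figure~\ref{fig:SkipListUnique}, and the error analysis does carry over verbatim, so the core of your argument is right.

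However, the ``main obstacle'' you flag at the end is not an obstacle, and your entire case~(b) is dead code. The paper defines $\uniqueSortedSUM$ as $\threeSUM$ under a \emph{promise on the input} (sorted, duplicate-free) — the decision question itself is unchanged and still permits repeated choices $a=b$ or $a=b=c$. This is stated explicitly in the footnote to the reduction $\uniqueSortedSUM \reducedTo{\sqrt{n}} \threePointsOnLine$, where $S=\{1,-2,3\}$ is declared a YES-instance via $a=b=1,c=-2$. Consequently, invoking the assumed $\uniqueSortedSUM$ subroutine on the deduplicated $S[V]$ already detects \emph{every} $\threeSUM$ solution in $S[V]$, including those of the form $(v,v,-2v)$ and $(0,0,0)$; no auxiliary Grover pass or multiplicity test is needed. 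Dropping case~(b) lets you set $\alpha'=\alpha$ (no $\min$ with $1/2$), recovering exactly the parameter window $\beta\in\bigl(\max(\tfrac{1}{2},\tfrac{1}{1+2\alpha}),1\bigr)$ used in Theorem~\ref{thm:HardnessOfThreeSumProbabilisticDataStructure}.

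It is also worth noting that \emph{if} case~(b) had actually been necessary, the version you describe would be incorrect: a valid solution reusing some value $v$ does not require $v$ to have multiplicity $\geq 2$ in $S[V]$, since the semantics of $\threeSUM$ permit reuse even at multiplicity one (that is precisely the point of the footnote's counterexample). The correct form of the check would be to look for any $v$ with $-2v\in S[V]$, with no multiplicity condition. In your algorithm this latent bug is masked because case~(a) catches everything anyway, but it signals a misreading of what the unique-sorted promise restricts — it constrains the input, not the triple.
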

\begin{proof} The proof idea and the error analysis are almost identical to the proof of Theorem \ref{thm:HardnessOfThreeSumProbabilisticDataStructure} with only difference being the required data structure. Along with the skip list data structure that we used in the proof of Theorem~\ref{thm:HardnessOfThreeSumProbabilisticDataStructure}, it is possible to maintain an additional skip list that stores only distinct queried values, as described in  Figure~\ref{fig:SkipListUnique}.
\end{proof}

\section{\threeSUM{}-hard Geometry Problems}
\label{sec:AppendixThreeSumHardGeometryProblems}

In this section, we present the quantum reductions from (some variants and structured versions of) \threeSUM{} to many problems in Computational Geometry which are adaptations (or in some case, slight modifications) of the classical reductions by \cite{Overmars-ComputationalGeometry-1995}. 

Most quantum models of computation assume that the input is given as an oracle and can be accessed in superposition, therefore it is possible to have strictly sub-linear quantum time algorithms even for problems that depend on all elements of the input, for example, Grover's search algorithm \cite{GroverSearch-1996}. For the same reasons, it is possible to have quantum reductions which use zero amount of (pre-)computation time as long as query access to the input of the reduced problem is efficiently implementable using the input oracle to the original problem. Therefore, it is possible to quantize these reductions given by \cite{Overmars-ComputationalGeometry-1995} to run in sublinear quantum time even though they take (at~least) linear amount of time classically.

Aaronson et al.~in their paper \cite{Aaronson-ClosestPair-2019} formally define quantum reductions, but, rather than using their more rigorous and general definition we use a quantized version of the notations and terminology used by \cite{Overmars-ComputationalGeometry-1995} because we think they are better suited for our problems and gives more intuition. 

We use \reducedTo{f(n)} to denote a quantum reduction that uses $f(n)$ pre-computation time followed by an efficient on-the-fly reduction for any query. Formally,

\begin{definition}
Given two problems \PRone{} and \PRtwo{} we say \PRone{} reduces to \PRtwo{}, denoted as \PRone{} \reducedTo{f(n)} \PRtwo{}, in $f(n)$ time iff every input of length $n$ to \PRone{} can be solved using a constant number of inputs to \PRtwo{} of length $O(n)$ and $O(f(n))$ pre-computation quantum time, furthermore, query access to the inputs of \PRtwo{} can be implemented efficiently (i.e.~with at most $n^{o(1)}$ overhead) using the query oracle of the input to \PRone{}.
\end{definition}

Consequently, what immediately follows from the definition is the next lemma.

\begin{lem}
Let \PRone{} \reducedTo{f(n)} \PRtwo{}. Let $f(n)$ and $g(n)$ be polynomials. If \PRtwo{} can be solved in $\widetilde{O}(g(n))$ quantum time and $f(n)= \widetilde{O}(g(n))$ and query access to inputs of \PRtwo{} can be efficiently implemented using the query oracle to \PRone{}, then \PRone{} can be solved in $\widetilde{O}(g(n))$ quantum time. Reversely, if $\widetilde{\Omega}(g(n))$ is a quantum lower bound for \PRone{} and $f(n)=O(g(n)^{1-\epsilon})$ for some $\epsilon>0$ and query access to inputs of \PRtwo{} can be efficiently implemented using the query oracle to \PRone{}, then $\widetilde{\Omega}(g(n))$ is
also a quantum lower bound for \PRtwo{}.
\end{lem}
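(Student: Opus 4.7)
The plan is to observe that the lemma is essentially a direct unpacking of the definition of $\reducedTo{f(n)}$. For the forward (upper-bound) direction, I will start with an arbitrary input of size $n$ to \PRone{}, run the pre-computation step of the reduction in $O(f(n))$ quantum time, and then invoke the assumed $\widetilde{O}(g(n))$-time quantum algorithm for \PRtwo{} on each of the $O(1)$ resulting size-$O(n)$ instances. Every input query that this \PRtwo{}-algorithm makes is simulated on-the-fly using the \PRone{}-oracle, at a cost of $n^{o(1)}$ per query by the definition of the reduction. The total time is therefore $O(f(n)) + \widetilde{O}(g(n)) \cdot n^{o(1)}$, and the hypothesis $f(n) = \widetilde{O}(g(n))$ lets us collapse this to $\widetilde{O}(g(n))$.

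For the reverse (lower-bound) direction, I will argue by contrapositive. Suppose \PRtwo{} admitted a quantum algorithm running in time $\widetilde{O}(g(n)^{1-\delta})$ for some $\delta > 0$. Plugging this into the same construction as above yields a quantum algorithm for \PRone{} whose running time is bounded by $O(f(n)) + \widetilde{O}(g(n)^{1-\delta}) \cdot n^{o(1)}$. Using the hypothesis $f(n) = O(g(n)^{1-\epsilon})$ with $\epsilon > 0$, this bound simplifies to $\widetilde{O}(g(n)^{1-\min(\epsilon,\delta)})$, which directly contradicts the assumed $\widetilde{\Omega}(g(n))$ lower bound on \PRone{}. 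Hence no such algorithm for \PRtwo{} can exist, giving the desired $\widetilde{\Omega}(g(n))$ quantum lower bound for \PRtwo{}.

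The only thing to watch is the bookkeeping of subpolynomial overheads arising from simulating a query to \PRtwo{}'s input via the \PRone{}-oracle. Since the definition of $\reducedTo{f(n)}$ already bakes this overhead in at the $n^{o(1)}$ level, and since both the hypotheses and conclusions are stated in terms of $\widetilde{O}$ and $\widetilde{\Omega}$ which absorb polylogarithmic (and indeed subpolynomial) factors, no separate analysis is needed. There is no genuine obstacle; the proof is a routine verification that the polynomial separations in the two hypotheses ($f(n) = \widetilde{O}(g(n))$ for the upper-bound direction, $f(n) = O(g(n)^{1-\epsilon})$ for the lower-bound direction) are exactly what is required to swallow the pre-computation cost into the dominant $\widetilde{O}(g(n))$ or $\widetilde{O}(g(n)^{1-\delta})$ term respectively.
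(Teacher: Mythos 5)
Your proof is correct and matches the paper's intent. The paper does not supply any argument for this lemma beyond the phrase ``Consequently, what immediately follows from the definition is the next lemma,'' and the routine unpacking you give --- run the $O(f(n))$ pre-computation, invoke the \PRtwo{}-algorithm on the $O(1)$ instances of size $O(n)$ while simulating each input query at $n^{o(1)}$ cost, then observe that the pre-computation term is dominated under the stated hypothesis on $f$; and argue the lower-bound direction by contrapositive --- is exactly that verification.
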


If \PRone{} \reducedTo{f(n)} \PRtwo{} and \PRtwo{} \reducedTo{f(n)} \PRone{} we say that \PRone{} and \PRtwo{} are $f(n)$-equivalent, and we denote that as \PRone{} \equivalentTo{f(n)} \PRtwo{}.

\subsection{Hardness results for some computational geometry problems.}
In this section, we present the reductions that help us to achieve the complexity bounds for some of the computational geometry problems (listed by \cite{Overmars-ComputationalGeometry-1995}) in the quantum setting. One can find the summary of these results in Table~\ref{table:SummaryQuantumSumHardProblems} (in page \pageref{table:SummaryQuantumSumHardProblems}). Most of these reductions are on-the-fly adaptations of the classical ones, hence, we only present the detailed proofs for only a few of these reductions. The proofs for the rest of the reductions are along the same lines. 

\textbf{Problem:} \geomBase{}\\
Given a set of $n$ points with integer coordinates on three horizontal lines $y = 0$, $y = 1$,
and $y = 2$, determine whether there exists a non-horizontal line containing three of the
points.
\begin{thm}
\label{thm:GeomBaseHardness}
\threeSUMthreeList{} \equivalentTo{0} \geomBase{}.
\end{thm}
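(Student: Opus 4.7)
The plan is to establish the equivalence by giving two on-the-fly quantum reductions, each with zero pre-computation time, relying on the classical point placement underlying \cite{Overmars-ComputationalGeometry-1995} but observing that the resulting coordinate transformations can be simulated per-query rather than materialised in advance.

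For the direction \threeSUMthreeList{} \reducedTo{0} \geomBase{}, given an instance $(A,B,C)$ with each list of size $n$, I would construct a \geomBase{} instance on $3n$ points by placing $(2a,0)$ for each $a\in A$, $(-b,1)$ for each $b\in B$, and $(-2c,2)$ for each $c\in C$. Three points $(x_1,0),(x_2,1),(x_3,2)$ are collinear iff $2x_2=x_1+x_3$, which here reduces to $-2b=2a-2c$, i.e.\ $c=a+b$. Thus the yes-instances match exactly. On-the-fly query access is immediate: a query for the $i$-th point with $i\in[3n]$ is routed to the appropriate list among $A,B,C$ using an $O(1)$ arithmetic index transformation, a single oracle query to the original \threeSUMthreeList{} input, and a constant-cost arithmetic rescaling and sign change, all implementable with $O(1)$ elementary gates together with the input oracle.

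For the reverse direction \geomBase{} \reducedTo{0} \threeSUMthreeList{}, I would split the $n$ input points by $y$-coordinate into $A'$, $B'$, $C'$ (the $x$-coordinates of points on $y=0,1,2$ respectively), and output the \threeSUMthreeList{} instance $A=A'$, $C=B'$, $B=\{2b':b'\in B'\}$. A non-horizontal collinear triple must take exactly one point from each line, and collinearity of $(a',0),(b',1),(c',2)$ is equivalent to $a'+c'=2b'$, which is precisely $a+b=c$ under our relabeling. On-the-fly queries to any of $A,B,C$ are answered with one query to the original \geomBase{} oracle (at a shifted index determined by which of the three blocks is meant) and a multiplication by $2$ in the case of $B$; this presumes the mild indexing convention that the input presents points grouped by the line they lie on, which is standard for this problem and needs no pre-computation.

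The only step that requires a genuine (though short) argument is the ``no spurious collinearity'' claim underlying both directions: any non-horizontal collinear triple of input points must draw exactly one point from each of the three horizontal lines. This is immediate, since two points sharing a $y$-coordinate already determine a horizontal line, and any third collinear point would then also lie on that same horizontal line, contradicting non-horizontality. I expect no further obstacle: the arithmetic identities $2x_2=x_1+x_3\iff c=a+b$ and $a'+c'=2b'\iff a+b=c$ are routine, and the query bookkeeping in both directions is visibly $O(1)$-overhead, which is well within the $n^{o(1)}$ budget implicit in \reducedTo{0}.
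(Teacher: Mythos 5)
Your approach is essentially the same as the paper's: both directions are on-the-fly reductions obtained by affine transformations of the coordinates, exploiting the collinearity identity $2x_2=x_1+x_3$ for points on lines $y=0,1,2$, with the observation that a non-horizontal line picks up at most one point from each horizontal line (which you spell out explicitly, whereas the paper leaves it implicit). Your forward direction differs cosmetically --- you put $B$ on the middle line with a sign flip and clear the factor of $2$ by scaling $A$ and $C$, rather than the paper's choice of putting $C$ on the middle line and assuming all entries are even so that $c/2$ is an integer --- but these are equivalent affine rewritings of the same reduction.

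However, your reverse direction as written contains a slip. You set $A=A'$, $C=B'$, $B=\{2b':b'\in B'\}$, so both $B$ and $C$ are built from $B'$ (the middle line) and $C'$ (the line $y=2$) never appears. That cannot be right: under those definitions $a+b=c$ becomes $a'+2b'_1=b'_2$ with $b'_1,b'_2\in B'$, which has no relation to the collinearity condition $a'+c'=2b'$ that you correctly derive in the very next sentence. The intended assignment, matching your stated identity (and matching the paper up to renaming), is $A=A'$, $B=C'$, $C=\{2b':b'\in B'\}$: then $a+b=a'+c'=2b'=c$. With that correction the reduction is fine, and the rest of your bookkeeping (including the remark that the input is grouped by $y$-coordinate so that on-the-fly indexing needs no pre-computation) is sound.
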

\begin{proof}
The proof of \threeSUMthreeList{} \reducedTo{0} \geomBase{}: For each element $a \in A$ create a point $(a,0)$, for every $b \in B$ create a point $(b,2)$, and, for every $c \in C$ create a point $(c/2,1)$, which means query access to the instance of \geomBase{} can be directly implemented by using the query oracle of \threeSUMthreeList{} instance. W.l.o.g.~we can assume all the elements of the lists $A,B,C$ are even. (If not then multiply each of these elements with 2.) It is easy to see that three points $(a,0),(b,2)$ and $(c/2,1)$ are collinear iff $a+b=c$, hence, a quantum on-the-fly reduction with $0$ pre-computation time.

The reduction, \geomBase{} \reducedTo{0} \threeSUMthreeList{} is also proved in the similar way, for each point $(a,0)$ create an element $a \in A$, for each point $(b,2)$ create an element $b \in B$, and, for each point $(c,1)$ create an
element $2c \in C$.
\end{proof}

\textbf{Problem:} \threePointsOnLine{} \\
Given a set of points in the plane, is there a line that contains at least three of the points?

\begin{thm}
\uniqueSortedSUM{} \reducedTo{\sqrt{n}} \threePointsOnLine{}.
\end{thm}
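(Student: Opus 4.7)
The plan is to adapt the classical cubic-curve reduction of Gajentaan--Overmars. The underlying algebraic fact is that, for pairwise distinct $a, b, c$, the three points $(a, a^3)$, $(b, b^3)$, $(c, c^3)$ lying on the curve $y = x^3$ are collinear if and only if $a + b + c = 0$. The ``only if'' direction follows from Vieta's formulas: any non-vertical line $y = mx + k$ meets $y = x^3$ at the roots of $x^3 - mx - k = 0$, whose $x^2$-coefficient vanishes, forcing the three roots to sum to zero. The ``if'' direction is a short verification: setting $m = a^2 + ab + b^2$ and $k = abc$, the line through $(a,a^3)$ and $(b,b^3)$ also passes through $(c, c^3)$ whenever $a+b = -c$, since $mc + k = c(a+b)^2 = c^3$.

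Given a \uniqueSortedSUM{} instance $S = (x_1 < x_2 < \cdots < x_n)$ accessible via an oracle, I construct a \threePointsOnLine{} instance whose $i$-th point is $P_i = (x_i, x_i^3)$. In the on-the-fly reduction framework of this section, a query to $P_i$ is implemented by a single query to the \uniqueSortedSUM{} oracle for $x_i$ followed by an $O(1)$-time cubing, so the per-query overhead is $n^{o(1)}$ and can be realized as a unitary on the relevant registers. Because the values $x_i$ are pairwise distinct (by the promise of \uniqueSortedSUM{}), the $n$ points $P_i$ are also pairwise distinct, and three of them $P_i, P_j, P_k$ with $i, j, k$ pairwise distinct are collinear if and only if $x_i + x_j + x_k = 0$; thus the reduction preserves YES/NO answers exactly.

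The $O(\sqrt{n})$ pre-computation budget in the reduction is not in fact consumed by the cubic-curve map itself, which is fully on-the-fly; it merely absorbs any incidental sublinear bookkeeping --- for instance, a Grover sweep that locates a zero element of $S$ if one wishes to excise degenerate configurations before invoking the geometric routine, or a maximum/minimum computation of $S$ (which takes $O(\sqrt{n})$ quantum time, exactly as in the proof of Theorem~\ref{thm:GeomBaseHardness}) if the \threePointsOnLine{} instance must be shifted or scaled into a canonical coordinate range. The only subtle point is to confirm that no two of the three sampled points coincide, which follows immediately from the distinctness promise, so I do not anticipate any real obstacle. Combined with Corollary~\ref{cor:HardnessUniqueSortedSUMInefficientDS}, this reduction then propagates the $\widetilde{\Omega}(n)$ quantum lower bound under the \QthreeSUMconjecture{} from \uniqueSortedSUM{} to \threePointsOnLine{}.
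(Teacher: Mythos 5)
Your algebraic core (the Vieta argument on the cubic $y = x^3$) is correct, and your on-the-fly implementation of the oracle for $P_i = (x_i, x_i^3)$ with $n^{o(1)}$ overhead per query is also fine. But there is a genuine gap in the degeneracy handling, and it is exactly the bug the paper points out in Gajentaan--Overmars' original reduction.

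The promise of \uniqueSortedSUM{} is only that the \emph{list} $S$ has distinct entries; it does not restrict the problem to triples of pairwise-distinct elements. By the paper's definition, \uniqueSortedSUM{} equals \threeSUM{}, and a \threeSUM{} solution may reuse an element: for $S = \{1,-2,3\}$ one has $1 + 1 + (-2) = 0$, so this is a YES instance. Under your map this $S$ produces the three points $(1,1)$, $(-2,-8)$, $(3,27)$, and no three of those points are collinear --- a NO instance of \threePointsOnLine{}. So your claim that ``the reduction preserves YES/NO answers exactly'' is false, and the sentence relating collinearity of $P_i,P_j,P_k$ (for pairwise distinct indices) to the answer of \uniqueSortedSUM{} conflates two different conditions.

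The paper repairs this by first computing $k = 2\max\{|x| : x\in S\}$ (a Grover-style max over $n$ values, $O(\sqrt{n})$ quantum time) and then mapping each $x \in S$ to \emph{three} shifted values $x+k$, $x-3k$, $x+2k$; the three copies are offset differently so that even a repeated-element solution $a+a+c=0$ becomes a triple of three genuinely distinct values $a+k$, $a-3k$, $c+2k$ summing to zero, while the offsets cancel because $k - 3k + 2k = 0$. This also explains the $\sqrt{n}$ budget that you treated as merely ``incidental bookkeeping'': it is consumed by computing $k$, which is essential to the construction, not optional. Your proposal needs this (or an equivalent) degeneracy fix to be correct.
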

\begin{proof}
An input to \uniqueSortedSUM{} is a list $S$ of $n$ unique integers (also sorted but that is not a requirement for this reduction), and the question is whether there exist $a,b,c \in S$ such that $a+b+c=0$. Let $k=2\max(\{|x| \mid x \in S)\}$, this can be computed quantumly in $O(\sqrt{n})$ time. Create a list $S'$ of size $3n$ in the following way: For every $x \in S$, put $x+k, x-3k, x+2k \in S'$. For every element $y \in S'$ create a point $(y,y^3)$. If there exists $a,b,c \in S$ such that $a+b+c=0$ then there will exist a triple $a',b',c' \in S'$ such that $a'+b'+c'=0$ and $a',b',c'$ are all unique.\footnote{The classical reduction from \threeSUM{} to \PointOnThreeLines{} by \cite{Overmars-ComputationalGeometry-1995} is slightly incorrect because of the following counter example: Let $S=\{1,-2,3\}$. There are elements $a, b, c \in S$ such that $a+b+c=0$, set $a=b=1, c=-2$. The classical reduction on such an $S$ is going to create three points $(1,1), (-2,-8), (3,27)$ and will therefore miss out on the \threeSUM{} solution. We rectify this situation by making three (almost) copies of the original list so that solutions to \threeSUM{} where $a=b$ are not missed. The three copies in the intermediate step are deliberately made nonidentical so that the reduction creates two unique points corresponding to $a,b \in S$ even if $a=b$.} Furthermore, with some elementary calculations we can show that $a'+b'+c'=0$ iff $(a',(a')^3),(b',(b')^3),(c',(c')^3)$ are collinear.
\end{proof}

\textbf{Problem: } \PointOnThreeLines{} \\
Given a set of lines in the plane, is there a point that lies on at least three of them?

\begin{thm}
\threePointsOnLine{} \equivalentTo{0} \PointOnThreeLines{}.
\end{thm}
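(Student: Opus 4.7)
The plan is to use the standard projective point-line duality in the plane, which exchanges incidence between points and lines, and to verify that this duality is cheap enough to run quantumly on-the-fly with zero pre-computation time.

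First I would fix the duality $\delta$ that sends a point $p = (a,b)$ to the line $\delta(p) : y = a x - b$, and that sends a line $\ell : y = m x + c$ to the point $\delta(\ell) = (m, -c)$. The elementary (and well-known) fact to check is: three points $p_1, p_2, p_3$ are collinear if and only if the three dual lines $\delta(p_1), \delta(p_2), \delta(p_3)$ are concurrent. This is a direct algebraic manipulation: if $p_i = (a_i, b_i)$ all lie on the line $y = m x + c$, then $b_i = m a_i + c$, i.e.\ $c = a_i \cdot m - b_i$, which says precisely that the dual lines $y = a_i x - b_i$ all pass through the point $(m, -c)$, and conversely.

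With $\delta$ in hand, both reductions become obvious. For \threePointsOnLine{} $\reducedTo{0}$ \PointOnThreeLines{}, on input points $\{p_i\}_{i \in [n]}$, I define the \PointOnThreeLines{} instance to be $\{\delta(p_i)\}_{i \in [n]}$. To implement a query to index $i$ of the reduced instance, I make one query to the oracle of the original instance and apply a constant-time coordinate transformation; hence no pre-computation is needed and the query overhead is $O(1)$, which fits the $\reducedTo{0}$ notion defined in Section~\ref{sec:AppendixThreeSumHardGeometryProblems}. By the incidence-preserving property of $\delta$, there is a line through three original points iff there is a point on three dual lines. The reverse reduction is symmetric: send each line $\ell_i$ of the \PointOnThreeLines{} instance to the point $\delta(\ell_i)$, again on-the-fly in $O(1)$ per query.

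The only subtlety, which I expect to be the main (though minor) obstacle, is the handling of degeneracies: vertical lines do not have a slope-intercept form, so the naive map $\delta$ is undefined on them. The clean fix is to work in projective coordinates, representing a point as $[a:b:1]$ and a line by its coefficient vector $[A:B:C]$ with incidence $Aa + Bb + C = 0$, under which duality is literally the identity on coordinate triples and is manifestly symmetric; this preserves the $O(1)$-per-query cost. Alternatively, one can assume that the input lines to \PointOnThreeLines{} are given by two defining points (or by the coefficient triple $(A,B,C)$), in which case vertical lines are already handled, and the corresponding dual point is $[A:B:C]$ in homogeneous coordinates. Either formulation makes the reduction well-defined on all inputs without changing the $\reducedTo{0}$ bound, completing the equivalence.
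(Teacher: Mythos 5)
Your proposal is correct and uses exactly the same approach as the paper, namely the standard point-line duality; the paper's proof is a one-line appeal to this duality, whereas you spell out the concrete map, verify the incidence-preserving property, check the $O(1)$-per-query on-the-fly cost, and address the vertical-line degeneracy via homogeneous coordinates. One small typo: from $b_i = m a_i + c$ you should get $-c = a_i m - b_i$ (not $c = a_i m - b_i$); your stated conclusion that the dual lines meet at $(m,-c)$ is nonetheless correct.
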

\begin{proof}
Both  these  problems  are  computationally  equivalent  as  the  second  problem  is  the  exact  dual  of the first problem under the Point-Line dualization.
\end{proof}

Lower bounds for the following problems are based on reductions from another promise version of \threeSUM{} (or its variants), namely the \sortedThreeSUM{}. Most of the problems below are reduced from \sortedGeomBase{} instead of \sortedThreeSUM{} because they both are computationally equivalent, which directly follows from the result in Theorem \ref{thm:GeomBaseHardness}.

\begin{corollary}[to Theorem \ref{thm:GeomBaseHardness}]
\sortedThreeSUMThreeList{} \equivalentTo{0} \sortedGeomBase{}.
\end{corollary}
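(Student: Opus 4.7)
The plan is to observe that the reduction of Theorem~\ref{thm:GeomBaseHardness} is already order-preserving, so it transports the sortedness promise in both directions without any modification. I simply need to verify this carefully for each direction of the equivalence, and check that the on-the-fly access required of the reduction still costs $0$ pre-computation (up to the already-noted multiplication by $2$).

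For the forward direction \sortedThreeSUMThreeList{} \reducedTo{0} \sortedGeomBase{}, I would recall the map from the proof of Theorem~\ref{thm:GeomBaseHardness}: $a \in A \mapsto (a,0)$, $b \in B \mapsto (b,2)$, and $c \in C \mapsto (c/2,1)$. Each of these three maps is strictly monotone in the underlying integer. Hence if the three input lists $A$, $B$, $C$ are sorted, the three corresponding lists of points on the lines $y=0$, $y=2$, $y=1$ are sorted by $x$-coordinate, which is precisely the promise of \sortedGeomBase{}. Query access to the $i$th point on any of the three lines is implemented directly by one query to $A$, $B$, or $C$ (with a trivial multiplication by $2$ to clear the factor of $1/2$, or equivalently the observation that we may assume the entries of $C$ are even by scaling all three lists by $2$, again order-preserving).

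For the reverse direction \sortedGeomBase{} \reducedTo{0} \sortedThreeSUMThreeList{}, I would apply the inverse map used in Theorem~\ref{thm:GeomBaseHardness}: a point $(a,0)$ yields $a \in A$, a point $(b,2)$ yields $b \in B$, and a point $(c,1)$ yields $2c \in C$. Again each of these is strictly monotone in the $x$-coordinate, so if the three sequences of points along the three horizontal lines of the \sortedGeomBase{} instance are sorted by $x$-coordinate, then $A$, $B$, $C$ are sorted as integer lists, meeting the promise of \sortedThreeSUMThreeList{}. The on-the-fly access is again immediate.

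There is no real obstacle here, since both directions of the classical/quantum equivalence in Theorem~\ref{thm:GeomBaseHardness} are implemented by strictly monotone coordinate maps. The only minor subtlety worth stating explicitly is the \emph{ordering convention}: \sortedGeomBase{} must be understood as promising that the $n$ points provided on each of the three horizontal lines are individually sorted by $x$-coordinate (rather than the whole input being sorted in some combined way); under this natural convention the correspondence is immediate, and the conclusion \sortedThreeSUMThreeList{} \equivalentTo{0} \sortedGeomBase{} follows at once.
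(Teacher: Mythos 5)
Your proof is correct and spells out precisely the observation the paper relies on but leaves implicit: the coordinate maps in both directions of the Theorem~\ref{thm:GeomBaseHardness} reduction are strictly monotone, so sortedness of the input lists is transported to $x$-coordinate sortedness along each horizontal line and back, and the optional scaling by $2$ is likewise order-preserving. This matches the paper's intent exactly; there is nothing to add or fix.
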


Notice the simple trick that we are employing to adapt these classical reductions in the quantum setting. For all those classical reductions from \threeSUM{} that requires sorting the input, we  directly reduce from \sortedThreeSUM{} instead, because we have shown that \QthreeSUMconjecture{} applies to \sortedThreeSUM{} as well. With this result in spotlight, we present the rest of the reductions.

\textbf{Problem: } \separator{} \\
Given a set $S$ of $n$ possible half-infinite, closed horizontal line segments, is there a non-horizontal separator?

\begin{thm}
\sortedGeomBase{} \reducedTo{0} \separator{}.
\end{thm}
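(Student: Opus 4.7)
The plan is to adapt the classical reduction of Gajentaan--Overmars from \geomBase{} to \separator{} directly, using the sortedness promise to make the reduction on-the-fly with zero pre-computation. Given the \sortedGeomBase{} instance whose $x$-coordinates on line $y=i$ are $x_{i,1} < x_{i,2} < \cdots < x_{i,n_i}$ for $i \in \{0,1,2\}$, I would construct the \separator{} instance by placing, on each line $y=i$, a sequence of $n_i+1$ closed horizontal segments separated by small gaps of width $2\eps$ centered at each input point. Explicitly, fixing $\eps = 1/8$ (so $4\eps < 1$), the segments on line $y=i$ are $(-\infty, x_{i,1}-\eps]$, $[x_{i,1}+\eps, x_{i,2}-\eps]$, $\ldots$, $[x_{i,n_i}+\eps, +\infty)$.

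For correctness, any non-horizontal line crosses each of $y=0, y=1, y=2$ at exactly one point, so it is a separator iff each crossing falls inside a gap; equivalently, iff there exist indices $j_0, j_1, j_2$ and reals $a \in [x_{0,j_0}-\eps, x_{0,j_0}+\eps]$, $b \in [x_{1,j_1}-\eps, x_{1,j_1}+\eps]$, $c \in [x_{2,j_2}-\eps, x_{2,j_2}+\eps]$ with $2b = a+c$. A one-line triangle inequality then yields
$$|2 x_{1,j_1} - x_{0,j_0} - x_{2,j_2}| \;\leq\; 2|b - x_{1,j_1}| + |a - x_{0,j_0}| + |c - x_{2,j_2}| \;<\; 4\eps \;<\; 1,$$
and since the left-hand side is an integer it must vanish, giving the \geomBase{} collinearity condition $2 x_{1,j_1} = x_{0,j_0} + x_{2,j_2}$. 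Conversely, any collinear triple of input points yields a straight line through the three gap centers, which is a valid non-horizontal separator. Hence the two instances have the same YES/NO answer.

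It remains to confirm the on-the-fly property with zero pre-computation. Each query to the constructed \separator{} instance asks for the endpoints of a specific segment on a specific line; on line $y=i$ the $j$-th segment has endpoints derived directly from $x_{i,j-1}$ and $x_{i,j}$ (with the natural convention for the two extreme half-infinite segments). Because the input is sorted within each line, both values are retrievable with a single quantum query to the \sortedGeomBase{} oracle, so every \separator{} query is answered with $O(1)$ overhead. I expect no serious obstacle here: the only nominal point to get right is the $\eps$-perturbation, which is a uniform choice depending only on $n$ and is easily handled by a fixed rational scaling of coordinates before output. The sortedness of the input is precisely what obviates the linear-time pre-sorting that would otherwise be required to locate consecutive points on each horizontal line, so no separate quantum walk is needed at this step --- the hardness work has already been done in establishing \sortedGeomBase{}.
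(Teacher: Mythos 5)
Your proof is correct and follows the same route the paper (implicitly) takes: the paper's proof of this theorem is a one-liner deferring to the classical Gajentaan--Overmars construction ("directly follows from on-the-fly quantum adaptation of the classical reduction from the \sortedGeomBase{} input to \separator{} given by \cite{Overmars-ComputationalGeometry-1995}"), and you have spelled out exactly that construction — gaps around sorted points, integer-slack triangle inequality, and the observation that sortedness makes consecutive-point lookup a single oracle query — in full detail.

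One minor point worth a sentence if you were writing this up formally: your segment list $(-\infty, x_{i,1}-\eps], [x_{i,1}+\eps, x_{i,2}-\eps], \ldots$ tacitly assumes the $x$-coordinates within a line are distinct (otherwise some ``segments'' degenerate to empty or reversed intervals). Since the \sortedGeomBase{} input can carry duplicates inherited from \sortedThreeSUMThreeList{}, you would want to note that adjacent-duplicate detection is local in a sorted list, so the on-the-fly oracle can simply suppress the degenerate segment (e.g.\ emit an empty interval) without extra cost; this does not affect correctness of either direction of the equivalence.
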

\begin{proof}
The proof directly follows from on-the-fly quantum adaptation of the classical reduction from the \sortedGeomBase{} input to \separator{} problem given by \cite{Overmars-ComputationalGeometry-1995}.
\end{proof}

\textbf{Problem: } \stripsCoverBox{} \\
Given a set of strips in the plane, does their union contain a given axis-parallel rectangle?

\begin{thm}
\sortedGeomBase{} \reducedTo{c} \stripsCoverBox{}.
\end{thm}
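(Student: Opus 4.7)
The plan is to quantize the classical reduction of Gajentaan and Overmars from \geomBase{} to \stripsCoverBox{}, taking as input an instance of \sortedGeomBase{} and producing, after $O(1)$ pre-computation time, an on-the-fly implementation of the resulting \stripsCoverBox{} instance.

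The first step is the pre-computation. Because the coordinate range of the \sortedGeomBase{} input is bounded by a fixed polynomial in $n$ (the standard assumption $[-n^3,n^3]$), I can fix, in constant time and independently of the input contents, the placement and dimensions of the target axis-parallel rectangle that the strips are supposed to cover, together with any other global parameters of the output instance (for example, scaling factors needed to make the construction yield integer coordinates). This accounts for the ``$c$'' in the reduction $\sortedGeomBase{} \reducedTo{c} \stripsCoverBox{}$: it is genuinely a constant, since all of the required information comes from the problem specification rather than from reading the input.

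The second step is to describe the oracle for the \stripsCoverBox{} instance, which must answer any query in $n^{o(1)}$ time. In the classical construction, each strip is determined by a constant number of input points on the three horizontal lines, and its slope, offset and width are given by a constant-depth arithmetic formula in the coordinates of those points. Given an index $k$ asking for the $k$-th strip, the quantum reduction determines which input point(s) the strip corresponds to (this is a simple index calculation in $O(1)$ time), issues a constant number of queries to the \sortedGeomBase{} oracle to fetch the relevant coordinates, and then evaluates the defining arithmetic formula. Because the input is provided in sorted order, any auxiliary information needed by the classical reduction (neighbouring points, order statistics, extrema) can be obtained in $O(1)$ further queries by direct indexing into the sorted list. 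The overall overhead per strip query is therefore $n^{o(1)}$, matching the definition of an on-the-fly reduction.

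Correctness is inherited directly from the classical analysis of Gajentaan and Overmars: the union of the constructed strips covers the target rectangle if and only if there is no non-horizontal line through three of the \geomBase{} points. The sortedness hypothesis plays no role in this equivalence; it is needed only to support efficient on-the-fly access in the quantum setting, exactly as in the earlier reductions of this section. Combined with Corollary~\ref{cor:HardnessSortedSUM} (which gives a $\tOmega(n)$ quantum lower bound for \sortedGeomBase{} via its equivalence with \sortedThreeSUMThreeList{}), this yields the desired conditional $\tOmega(n)$ quantum lower bound for \stripsCoverBox{}. The main point to verify carefully is that the specific classical strip construction depends only on a small, easily indexable set of input coordinates; this is where the ``sorted'' promise is actually used, and once it is confirmed, the remaining steps of the reduction are a faithful quantization of the classical argument.
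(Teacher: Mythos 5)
Your proposal follows essentially the same approach as the paper: quantize the Gajentaan--Overmars reduction, observe that each strip depends only on a constant number of input points and so is computable on-the-fly via $O(1)$ oracle queries, and note that only the target rectangle requires any pre-computation, which is $O(1)$. The paper's own proof is exactly this, stated in three sentences. One small wrinkle: the paper phrases the constant-time pre-computation as \emph{computing} the boundaries of the rectangle (i.e., via a constant number of input queries), whereas you argue the rectangle can be fixed a priori from the polynomial bound on coordinates without reading the input at all. Either reading yields a constant, so this does not change the theorem, but you should double-check the specific Gajentaan--Overmars construction before asserting input-independence, since the rectangle there is tied to the dual of the region where a separator could live and may need the extreme input coordinates (which, on sorted input, are still $O(1)$ queries).
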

\begin{proof}
The proof of this statement also directly follows from the (almost) on-the-fly adaptation of the classical reduction. Each query to the input of \stripsCoverBox{} can be efficiently computed using the query oracle to the input of \sortedGeomBase{}. It is only to compute the boundaries of the rectangle that a constant pre-computation time is required in addition to the on-the-fly reduction.
\end{proof}

\textbf{Problem: } \trianglesCoverTriangle{} \\
Given a set of triangles in the plane, does their union contain another given triangle?

\begin{thm}
\stripsCoverBox{} \reducedTo{0} \trianglesCoverTriangle{}.
\end{thm}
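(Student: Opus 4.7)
The plan is to partition the axis-parallel rectangle $B$ along one of its diagonals into two triangles $T^{(1)}$ and $T^{(2)}$, and then reduce the given instance of \stripsCoverBox{} to two instances of \trianglesCoverTriangle{}, one with target triangle $T^{(j)}$ for each $j \in \{1,2\}$. Since $B = T^{(1)} \cup T^{(2)}$, the strips cover $B$ if and only if they cover both $T^{(1)}$ and $T^{(2)}$ separately, so two (a constant number of) instances of \trianglesCoverTriangle{} suffice, matching the definition of \reducedTo{0}.

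For each $j \in \{1,2\}$ and each input strip $S_i$, the triangles sent to the $j$-th instance will be a fixed triangulation of the convex polygon $S_i \cap T^{(j)}$. Since a strip is bounded by two parallel lines and a triangle by three, the intersection $S_i \cap T^{(j)}$ is a convex polygon with at most $5$ vertices, and a fan triangulation from its lexicographically smallest vertex produces at most $3$ triangles. Degenerate intersections (segments, points, or the empty set) are replaced by an arbitrary degenerate triangle, which contributes nothing to the coverage. By construction, the union of these triangles equals $\bigcup_i (S_i \cap T^{(j)})$, so they cover $T^{(j)}$ if and only if the strips $S_i$ cover $T^{(j)}$.

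To confirm that this is an on-the-fly reduction with $0$ pre-computation, note that $T^{(1)}$ and $T^{(2)}$ are determined by the four corners of $B$, which are part of the input description and can be read in $O(1)$ time. Answering a query for the $k$-th triangle of the $j$-th \trianglesCoverTriangle{} instance amounts to recovering the corresponding strip index $i$ and sub-index $k' \in \{1,2,3\}$, invoking the \stripsCoverBox{} oracle once to obtain the two lines defining $S_i$, and computing the $k'$-th triangle of the canonical fan triangulation of $S_i \cap T^{(j)}$ by a constant-time geometric formula from those two lines together with the three lines bounding $T^{(j)}$. Hence the query oracle for the \trianglesCoverTriangle{} input is implemented with at most $n^{o(1)}$ overhead using the query oracle for the \stripsCoverBox{} input.

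The main obstacle is purely combinatorial bookkeeping: fixing a canonical, index-driven triangulation scheme so that the mapping from $(i, k')$ to the output triangle is well-defined, independent of any global view of the input, and computable in constant time. Once this scheme is pinned down, correctness is immediate from $B = T^{(1)} \cup T^{(2)}$ together with the identity $\bigcup_i (S_i \cap T^{(j)}) = \bigl(\bigcup_i S_i\bigr) \cap T^{(j)}$, completing the reduction.
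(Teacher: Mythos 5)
Your proof is correct and takes essentially the same approach as the paper: the paper's proof is a one-line citation to the classical reduction of Gajentaan and Overmars together with the observation that it is entirely local, and your argument fills in the details of a (standard) variant of that construction — splitting the box along a diagonal into two target triangles and replacing each strip by a canonical fan triangulation of its intersection with each target triangle — while verifying locality and the $O(1)$-per-query cost.
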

\begin{proof}
The classical reduction given by \cite{Overmars-ComputationalGeometry-1995} is entirely local and can be efficiently made into a quantum on-the-fly reduction, which proves the statement of this theorem.
\end{proof}

\textbf{Problem: }\HoleInUnion{} \\
Given a set of triangles in the plane, does their union contain a hole?

\begin{thm}
\trianglesCoverTriangle{} \reducedTo{0} \HoleInUnion{}.
\end{thm}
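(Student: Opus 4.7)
The plan is to adapt the classical reduction of \cite{Overmars-ComputationalGeometry-1995} in an entirely local, on-the-fly manner. Given an instance of \trianglesCoverTriangle{} consisting of triangles $T_1, \ldots, T_n$ together with a target triangle $T_0$, I would construct the \HoleInUnion{} instance by taking the same triangles $T_1, \ldots, T_n$ and appending a constant number $k$ of extra triangles $T'_1, \ldots, T'_k$ whose union equals $B \setminus \mathrm{int}(T_0)$, where $B$ is a fixed axis-aligned bounding box large enough to contain every triangle in the instance. Since the coordinates are polynomially bounded (by the same a priori bounds that apply throughout the paper), $B$ can be chosen as a function of $n$ alone, independent of the actual input, so no pre-computation is needed to fix it. The triangulation of $B \setminus \mathrm{int}(T_0)$ depends only on the three vertices of $T_0$ and on $B$, and can be carried out by a canonical constant-size construction (for instance, connect each vertex of $T_0$ to the nearest corner of $B$ and fill in the resulting polygonal strips with a bounded number of triangles).

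For correctness, observe that a hole in a union of closed regions is a bounded connected component of the complement. Any point outside $B$ lies in the unbounded component. Any point inside $B$ but outside the union of the $T'_j$'s lies in $\mathrm{int}(T_0)$ by construction. Hence the only candidates for bounded complement components are subsets of $T_0 \setminus \bigcup_{i=1}^n T_i$. This set is bounded and is nonempty (with nonempty interior, because the $T_i$'s are closed and $T_0$ has nonempty interior) if and only if $T_1, \ldots, T_n$ fail to cover $T_0$. Therefore the \HoleInUnion{} instance answers ``yes'' precisely when the \trianglesCoverTriangle{} instance answers ``no''; the needed answer is obtained by a single negation.

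For the quantum on-the-fly property, each query to the constructed \HoleInUnion{} input is handled as follows. A query for index $i \le n$ is answered by a single oracle call to the original \trianglesCoverTriangle{} instance returning $T_i$. A query for one of the $k = O(1)$ extra triangles $T'_j$ is answered by a single oracle call to retrieve $T_0$, followed by $O(1)$ arithmetic operations using $T_0$ and the fixed bounding box $B$ to output the corresponding vertices of $T'_j$. No data across queries needs to be stored or pre-computed, so this is a genuine \reducedTo{0} reduction in the sense defined earlier in the paper.

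The only potentially subtle point is to ensure the triangulation of $B \setminus \mathrm{int}(T_0)$ is both canonical (so that every query to the same $T'_j$ deterministically returns the same triangle) and of constant size regardless of the shape of $T_0$; this is routine since a box minus a triangle is a polygon of at most $7$ vertices and can always be triangulated into a bounded number of pieces by the same rule. Hence the statement \trianglesCoverTriangle{} \reducedTo{0} \HoleInUnion{} follows, and combined with the hardness results of the preceding sections it yields the desired $\widetilde{\Omega}(n)$ quantum lower bound for \HoleInUnion{} under the \QthreeSUMconjecture{}.
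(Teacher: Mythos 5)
Your proposal is correct and follows the same approach as the paper, which simply observes that the Gajentaan--Overmars reduction (surrounding the target triangle with $O(1)$ extra triangles so that a hole exists iff the target is uncovered) is entirely local and hence quantum on-the-fly with zero pre-computation. You have merely spelled out the details that the paper leaves implicit.
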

\begin{proof}
The classical reduction by \cite{Overmars-ComputationalGeometry-1995} is entirely local and can be efficiently adapted into a quantum on-the-fly reduction.
\end{proof}

The relation between \trianglesCoverTriangle{} and \HoleInUnion{} in the other direction is interesting and is captured in the following statement: 

\begin{thm}
\HoleInUnion{} \reducedTo{n\log^2 n} \trianglesCoverTriangle{}.
\end{thm}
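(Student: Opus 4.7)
The plan is to adapt the classical reduction of Gajentaan and Overmars~\cite{Overmars-ComputationalGeometry-1995}, converting a \HoleInUnion{} instance with $n$ triangles into a bounded number of \trianglesCoverTriangle{} instances. The governing geometric observation is that a hole, if it exists, is a bounded connected component of the complement of the union whose boundary is composed of pieces of input-triangle edges. One can therefore localize any potential hole to a polynomial-size collection of ``witness'' triangles, where a hole exists if and only if at least one witness triangle fails to be covered by the union of the original inputs. This converts a single \HoleInUnion{} query into a constant (or polylogarithmic) number of \trianglesCoverTriangle{} queries, where each sub-instance reuses the same $n$ input triangles but varies only the target triangle.

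The first step is the $O(n\log^2 n)$ pre-computation phase. Here we run, as a classical subroutine on our quantum random-access machine, the standard segment-tree / sweep-line construction on the $3n$ edges and $3n$ vertices of the input triangles: sort the vertices along a sweep direction (this costs $O(n\log n)$ classically, hence quantumly as well), and build a secondary interval tree on the active edges so that, for any candidate witness triangle, we can recover its three vertices and its relation to the input edges in $n^{o(1)}$ time. Because this entire construction is deterministic and classical, it executes on a QRAM in time $O(n\log^2 n)$, comfortably below the $\widetilde\Omega(n)$ hardness bound for \HoleInUnion{}. We also enumerate (implicitly) the witness triangles, so that the $i$\pth{} witness triangle's vertices can be reported on demand.

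The second step is to verify that each resulting \trianglesCoverTriangle{} sub-instance admits efficient on-the-fly query access. Its input consists of (a) the original $n$ input triangles, which are directly accessible via the input oracle of the \HoleInUnion{} instance, and (b) a single target triangle whose vertices can be extracted in $n^{o(1)}$ time from the pre-computed data structure. Hence each query to the sub-instance incurs only polylogarithmic overhead, satisfying the on-the-fly requirement of the $\reducedTo{n\log^2 n}$ notation.

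The main obstacle will not be the quantum translation itself, which is essentially mechanical once the pre-computation fits into the time budget; rather, it is verifying the geometric correctness of the witness-triangle construction. One must argue both soundness (no witness triangle is spuriously flagged as uncovered when the union is in fact hole-free) and completeness (every hole is detected by some witness). This correctness is already established in \cite{Overmars-ComputationalGeometry-1995}, so we inherit it directly and only need to confirm that their construction can be expressed via the kind of sort-and-tree pre-computation described above, together with queries answerable in $n^{o(1)}$ time using the pre-computed structure.
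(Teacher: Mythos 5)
Your proposal is in the same spirit as the paper's (very terse) proof, which simply invokes the classical Gajentaan--Overmars reduction, and your identification of the $O(n\log^2 n)$ pre-computation cost is correct. However, you have a genuine confusion about the role that this bound plays, and it leads you to assert something false. You write that $O(n\log^2 n)$ is ``comfortably below the $\widetilde\Omega(n)$ hardness bound for \HoleInUnion{}.'' It is not: $n\log^2 n$ is \emph{above} $n$, and $\widetilde O(n\log^2 n)$ is not $O(n^{1-\eps})$ for any $\eps>0$. In fact, the whole point of the paper's surrounding remark is that this particular reduction is \emph{not} strictly sublinear, and hence \emph{cannot} be used to transfer a $\widetilde\Omega(n)$ lower bound from \HoleInUnion{} to \trianglesCoverTriangle{} via the ``reverse'' direction of the reduction lemma. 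That direction requires the pre-computation $f(n)$ to satisfy $f(n) = O(g(n)^{1-\eps})$ for the target lower bound $g(n)$, which fails here.

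The theorem is still worth stating, but only because of the ``forward'' direction of the reduction lemma: since \trianglesCoverTriangle{} is quantum-solvable in $O(n^{1+o(1)})$ time and $n\log^2 n = \widetilde O(n^{1+o(1)})$, the reduction yields an $O(n^{1+o(1)})$ \emph{upper bound} for \HoleInUnion{}, which is how Table~\ref{table:SummaryQuantumSumHardProblems} obtains the entry marked $(\dag)$. (The lower bound on \trianglesCoverTriangle{} is already known from the other reduction chain in Figure~\ref{fig:GeometryProblemsRelations} and does not need this theorem.) Your write-up reads as if you are verifying the conditions needed for the lower-bound transfer, which is the wrong reading of what this theorem buys. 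Beyond this logical confusion, your geometric sketch of the Gajentaan--Overmars construction is plausible but vaguer than the source --- you assert a bounded (constant or polylogarithmic) number of witness triangles without explaining how a hole, a bounded face of an arrangement with up to $\Theta(n^2)$ faces, is localized to so few candidates; be careful, because the reduction definition demands a constant number of \trianglesCoverTriangle{} sub-instances of size $O(n)$, not a polynomial family to search over.
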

\begin{proof}
This follows by directly using the classical reduction by \cite{Overmars-ComputationalGeometry-1995}.
\end{proof}

Note that, we don't care that the reduction here is \emph{not} strictly sublinear because we are using the following result to give an upper bound for the \HoleInUnion{} problem using the $O(n^{1+o(1)})$ algorithm for \trianglesCoverTriangle{} by \cite{Ambainis-QuantumGeometryProblems-2020} and not to give a lower bound for \trianglesCoverTriangle{} which we already know.

\textbf{Problem: }\TriangleMeasure{} \\
Given a set of triangles in the plane, compute the measure of their union.

\begin{thm}
\trianglesCoverTriangle{} \reducedTo{c} \TriangleMeasure{}.
\end{thm}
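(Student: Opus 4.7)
The plan is to reduce \trianglesCoverTriangle{} to two calls of \TriangleMeasure{}, using the standard classical observation that a target triangle $T_0$ is contained in the union of triangles $T_1,\ldots,T_n$ if and only if adding $T_0$ to the collection does not increase the measure of the union. Concretely, since each $T_i$ and $T_0$ are closed sets with nonzero interior, we have
\begin{equation*}
T_0 \subseteq \bigcup_{i=1}^{n} T_i \quad\Longleftrightarrow\quad \mathrm{measure}\!\left(\{T_0\}\cup\bigcup_{i=1}^n T_i\right) = \mathrm{measure}\!\left(\bigcup_{i=1}^n T_i\right),
\end{equation*}
because otherwise $T_0 \setminus \bigcup_i T_i$ contains a relatively open subset of $T_0$, which has strictly positive Lebesgue measure.

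First, I would construct the two \TriangleMeasure{} instances on the fly. Instance one is the original family $(T_1,\ldots,T_n)$; its oracle is simply the oracle given for \trianglesCoverTriangle{}, so no preprocessing is needed and query access is implementable with zero overhead. Instance two is the family $(T_0, T_1,\ldots,T_n)$ of size $n+1$; its oracle answers index $0$ by returning the target triangle $T_0$ (three coordinate pairs, available from the \trianglesCoverTriangle{} input in $O(1)$ time) and forwards any index $i\ge 1$ to the original oracle. Both oracles are therefore implemented with at most $O(1)$ overhead, which falls well within the $n^{o(1)}$ slack allowed by the definition of \reducedTo{\cdot}.

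Next, I would invoke the assumed \TriangleMeasure{} algorithm on each instance to obtain values $M_1$ and $M_2$, and answer ``yes'' to \trianglesCoverTriangle{} iff $M_1 = M_2$. Only a constant amount of extra arithmetic is performed (reading off the coordinates of $T_0$ and comparing two numbers), so the total pre-computation time is $O(1)$, proving that \trianglesCoverTriangle{} \reducedTo{c} \TriangleMeasure{}.

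The only mildly delicate point, and the place where I would double-check the argument, is the numerical comparison $M_1 = M_2$: we must make sure that \TriangleMeasure{} returns the exact (rational) measure, or at least with enough precision that a difference of zero can be reliably distinguished from a strictly positive difference. This is standard for geometric instances with integer (or rational) coordinates, since all vertices of the arrangement of $n+1$ triangles have coordinates of bit-length polynomial in the input size, so the measures $M_1$ and $M_2$ lie on a grid whose spacing can be bounded from below. Hence the equality test can be performed in polylogarithmic time, and the reduction is correct.
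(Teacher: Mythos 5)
Your proof is correct, but it takes a genuinely different route from the paper's. The paper cites the Gajentaan–Overmars reduction, which proceeds by precomputing the area of the target triangle $T_0$ (this single arithmetic step is the ``constant pre-computation'' in the paper's statement), clipping each $T_i$ to $T_0$ on the fly and triangulating the clipped polygon into $O(1)$ triangles, and then making a single call to \TriangleMeasure{} on the clipped family: $T_0$ is covered iff the returned measure equals the precomputed area of $T_0$. You instead use two calls to \TriangleMeasure{}, on $(T_1,\ldots,T_n)$ and on $(T_0,T_1,\ldots,T_n)$, and test whether prepending $T_0$ increases the measure; the closedness argument you give for the backward direction (a witness point in $T_0 \setminus \bigcup_i T_i$ has an open neighborhood disjoint from the closed union, and its intersection with the non-degenerate triangle $T_0$ has positive Lebesgue measure) is sound. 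Your approach buys a cleaner on-the-fly oracle (pure forwarding, no clipping or triangulation logic) at the cost of one extra \TriangleMeasure{} call, whereas the paper's approach uses one call but needs the clipping step inside the oracle. Both satisfy the definition of $\reducedTo{c}$, since the number of calls is constant, the pre-computation is $O(1)$, and the query oracles are implementable with $O(1)$ overhead; the precision caveat you raise applies equally to either variant, as the paper's approach must also compare an exact measure against an exact area.
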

\begin{proof}
This follows from the on-the-fly adaptation of the classical reduction by \cite{Overmars-ComputationalGeometry-1995}. A step in the the classical reduction requires that the area of a triangle is computed, this constitutes the constant pre-computation time in our quantum adaptation of the classical reduction.
\end{proof}

\textbf{Problem: }\PointCovering{} \\
Given a set of $n$ halfplanes and a number $k$, determine whether there is a point $p$ that is covered by at least $k$ of the halfplanes.

\begin{thm}
\stripsCoverBox{} \reducedTo{0} \PointCovering{}.
\end{thm}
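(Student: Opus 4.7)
The plan is to view each strip as the intersection of two closed halfplanes and the rectangle $R$ as the intersection of four closed halfplanes, and then to build a single instance of \PointCovering{} with $2n+4$ halfplanes: the four halfplanes bounding $R$, together with, for each input strip $S_i = H_{i,1} \cap H_{i,2}$, the two \emph{open} complementary halfplanes $H_{i,1}^c$ and $H_{i,2}^c$ whose union is $S_i^c$. I set the threshold to $k = n + 4$.

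The key combinatorial claim is that a point $p$ is covered by at least $n+4$ of these halfplanes if and only if $p \in R$ and $p$ lies outside every strip. For the forward direction, note that each pair $(H_{i,1}^c, H_{i,2}^c)$ consists of two disjoint open halfplanes, so each strip contributes at most one to the count, and at most four can come from the rectangle sides. Reaching $n+4$ therefore forces $p$ to lie in all four rectangle halfplanes (so $p\in R$) and in exactly one complementary halfplane of every strip (so $p$ is strictly outside each $S_i$). The reverse direction is immediate. Consequently the rectangle is covered by the union of the strips iff no such $p$ exists, iff the \PointCovering{} instance returns NO, so a single call with the answer negated settles \stripsCoverBox{}.

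For the reduction to be on-the-fly with zero pre-computation, I need to implement query access to the halfplanes of the \PointCovering{} instance using only the oracle for the \stripsCoverBox{} input. Given an index $j \in [2n+4]$, if $j \le 4$ I return one of the four halfplanes of $R$, obtained in $O(1)$ time from the rectangle part of the input; otherwise I decode $j-4$ as a (strip, side) pair, query the corresponding strip once, and output the appropriate complementary halfplane. Each simulated query costs $O(1)$, so no pre-computation is required and the whole reduction fits the \reducedTo{0} framework.

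The only subtle point, which I expect to be the main (though minor) technical issue, is the open/closed convention at strip boundaries: I must take the complementary halfplanes to be open when the input strips are closed (and vice versa) so that a point sitting on a bounding line of $S_i$ contributes $0$ to the count from strip $i$, preserving the ``at most one per strip'' upper bound and the exact correspondence between coverage of $R$ and the threshold $k=n+4$. With that convention in place, the combinatorial argument above goes through cleanly.
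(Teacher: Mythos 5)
Your proposal is correct and is essentially the same reduction the paper invokes: the paper's proof is a one-liner pointing to the classical strips-to-halfplanes construction of Gajentaan and Overmars and noting it is on-the-fly, and your $2n+4$ halfplanes with threshold $k=n+4$, answer negated, is exactly that construction fleshed out, including the open/closed boundary care needed so each strip contributes at most one to the count.
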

\begin{proof}
The proof follows directly from a quantum on-the-fly adaptation of the classical reduction by \cite{Overmars-ComputationalGeometry-1995}.
\end{proof}

\textbf{Problem: }\VisibilityBetweenSegments{} \\
Given a set $S$ of $n$ horizontal line segments in the plane and two particular horizontal segments $s_1$ and $s_2$, determine whether there are points on $s_1$ and $s_2$ that can see each
other, that is, such that the open segment between the points does not intersect any segment in $S$.

\begin{thm}
\label{thm:HardnessOfVisibilityOfSegments}
\sortedGeomBase{} \reducedTo{0} \VisibilityBetweenSegments{}.
\end{thm}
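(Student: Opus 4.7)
The plan is to lift the classical Gajentaan--Overmars reduction to a zero pre-computation quantum on-the-fly reduction by constructing the \VisibilityBetweenSegments{} instance geometrically from the \sortedGeomBase{} input as follows. I will place two long horizontal segments $s_1$ and $s_2$ on the lines $y=-1$ and $y=3$ respectively, each spanning the horizontal interval $[-M, M]$ for a fixed $M$ exceeding every x-coordinate in the input (for example $M = n^3 + 1$). On each of the three horizontal lines $y\in\{0,1,2\}$ I add ``blocker'' segments that together cover $[-M, M]\times\{y\}$ except for tiny gaps of width $2\epsilon$ centered at the x-coordinates of the \sortedGeomBase{} points lying on that line. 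Intuitively, each GeomBase point becomes a ``pinhole'' in an otherwise opaque horizontal wall.

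For correctness, any line of sight between a point of $s_1$ and a point of $s_2$ is necessarily non-horizontal and must cross each of the lines $y=0,y=1,y=2$ at a single point. To avoid being blocked it must thread through a pinhole on each of the three lines, so the three crossing points are GeomBase points (one from each line) that are collinear. Conversely, any non-horizontal line through such a collinear triple extends to hit $s_1$ and $s_2$ inside $[-M,M]$ by the choice of $M$, giving a valid line of sight. Hence visibility between $s_1$ and $s_2$ holds if and only if the \sortedGeomBase{} instance is a ``yes''-instance.

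For the quantum on-the-fly implementation, I will index the blocker segments by (line, position), exploiting the \sortedGeomBase{} promise that within each of the three horizontal lines the points are given sorted by x-coordinate. The $i$-th blocker on line $y$ is the interval running from just right of the $i$-th point on that line to just left of the $(i+1)$-th, with the obvious boundary conventions at the leftmost and rightmost intervals. Producing any such segment therefore costs $O(1)$ queries to the sorted input oracle; the segments $s_1$, $s_2$ and the geometric constants $M, \epsilon$ are fixed once and for all. Hence the entire encoding of the \VisibilityBetweenSegments{} instance is supported by constant-cost queries to the \sortedGeomBase{} oracle, with no pre-computation.

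The main concern I anticipate is purely bookkeeping: setting up a global index over all blocker segments across the three lines requires the per-line cardinalities $n_0, n_1, n_2$ (with $n_0+n_1+n_2 = n$). These are either bundled into the sorted input specification or recoverable in $O(\log n)$ time by binary search, well below the $n^{o(1)}$ overhead tolerated by the on-the-fly framework. Geometrically, only mild care is needed to pick $\epsilon$ small enough that distinct pinholes on the same line remain disjoint and that no pinhole protrudes outside $[-M, M]$; since the GeomBase coordinates are integers, any $\epsilon<1/2$ works.
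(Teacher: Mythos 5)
Your construction is the standard Gajentaan--Overmars pinhole reduction rendered on-the-fly, which is exactly what the paper's one-line proof (``a direct quantum on-the-fly adaptation of the classical reduction by \cite{Overmars-ComputationalGeometry-1995}'') refers to, so you have correctly filled in the argument the paper leaves implicit. One quantitative detail to tighten: a line of sight actually threads through crossing points $x_i$ with $|x_i-p_i|<\epsilon$ rather than through the GeomBase points $p_i$ themselves, so collinearity $2x_1=x_0+x_2$ only yields $|2p_1-p_0-p_2|<4\epsilon$, and you therefore need $\epsilon\le 1/4$ (not merely $\epsilon<1/2$) to force the integer quantity $2p_1-p_0-p_2$ to vanish and conclude that the GeomBase points are exactly collinear.
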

\begin{proof}
The proof is a direct quantum on-the-fly adaptation of the classical reduction by \cite{Overmars-ComputationalGeometry-1995}.
\end{proof}

\textbf{Problem: }\VisibilityFromInfinity{} \\
Given a set $S$ of axis-parallel line segments in the plane and one particular horizontal segment $s$, determine whether there is a point on $s$ that can be seen from infinity, that is, whether there exists an infinite ray starting at the point on $s$ that does not intersect any segment.
\begin{thm}
\sortedGeomBase{} \reducedTo{0} \VisibilityFromInfinity{}.
\end{thm}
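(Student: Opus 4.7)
My plan is to follow the pattern used throughout this section, namely to argue that the classical reduction from \sortedGeomBase{} to \VisibilityFromInfinity{} given by \cite{Overmars-ComputationalGeometry-1995} is entirely local, and therefore admits a quantum on-the-fly implementation with zero pre-computation time. Concretely, starting from sorted lists of points on the three horizontal lines $y=0, 1, 2$, the classical reduction emits $O(n)$ axis-parallel line segments together with one distinguished horizontal segment $s$, in such a way that a point on $s$ is visible from infinity if and only if the original \sortedGeomBase{} instance has a collinear triple.

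The construction I have in mind is essentially the same as the one used for Theorem~\ref{thm:HardnessOfVisibilityOfSegments} (\sortedGeomBase{} to \VisibilityBetweenSegments{}), augmented with a bounded ``cage'' that forces any ray escaping to infinity from a neighborhood of $s$ to pass through a specific gap in the upper boundary. Using the sorted order of the inputs on each line, one places on $y=0$ and $y=2$ a sequence of horizontal blocker segments that fill the entire line except for small gaps centered at each input point; likewise, $s$ is placed at $y=1$ with small gaps at each input point of the middle line. Finally, one adds two far-away vertical segments on the left and right (plus a horizontal segment along the top beyond $y=2$, omitting a controlled window) so that the only escape routes from $s$ to infinity are those rays that first pass through a gap on $y=0$ or on $y=2$; by elementary geometry this happens exactly when a gap on $y=1$, a gap on $y=0$, and a gap on $y=2$ lie on a common line, which corresponds to a collinear triple of points in the original \sortedGeomBase{} instance.

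The key verification is that this whole construction is on-the-fly: given an index $i \in [O(n)]$ into the output set of segments, the $i$-th segment is either (i) one of the $O(1)$ cage segments, which are hard-coded from constants determined by the bounding box $\{-n^3,\ldots,n^3\}$ of the input, or (ii) a blocker segment on $y=0$, $y=1$, or $y=2$ whose endpoints are the $j$-th and $(j+1)$-th point on the corresponding sorted line, for some $j$ computable from $i$ in $O(1)$ time. In either case, the $i$-th output segment can be produced with $O(1)$ queries to the \sortedGeomBase{} oracle and no pre-computation, establishing \sortedGeomBase{} $\reducedTo{0}$ \VisibilityFromInfinity{}.

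The main obstacle I anticipate is purely a bookkeeping one: making sure that the sizes of the gaps and the placement of the cage are chosen so that (a) no spurious visibility-to-infinity ray is introduced by rays that graze a blocker endpoint or escape through a corner of the cage, and (b) every coordinate remains an integer of $\poly(n)$ bit length so that the reduction stays within the QRAM word size. Both issues are standard in the reductions of \cite{Overmars-ComputationalGeometry-1995} and are resolved by scaling all coordinates by a sufficiently large polynomial factor before adding the cage; this scaling is itself local and requires no pre-computation, so it does not obstruct the on-the-fly property.
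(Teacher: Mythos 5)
Your proposal follows exactly the route the paper takes: cite the classical Gajentaan--Overmars reduction and observe that every output segment is locally computable from $O(1)$ oracle queries to the sorted input, so the reduction is on-the-fly with zero pre-computation. The paper's own proof is terser (it simply says ``same as Theorem~\ref{thm:HardnessOfVisibilityOfSegments}''), but your elaboration of the cage construction, gap placement, and coordinate scaling is a faithful unpacking of what that citation relies on.
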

\begin{proof}
Same as the proof of Theorem~\ref{thm:HardnessOfVisibilityOfSegments}.
\end{proof}

\textbf{Problem: }\VisibleTriangle{} \\
Given a set $S$ of opaque horizontal triangles, another horizontal triangle $t$ and a viewpoint $p$, is there a point on $t$ that can be seen from $p$?

\begin{thm}
\trianglesCoverTriangle{} \reducedTo{0} \VisibleTriangle{}.
\end{thm}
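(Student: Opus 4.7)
The plan is to adapt the classical reduction of Gajentaan and Overmars \cite{Overmars-ComputationalGeometry-1995}: embed the 2D covering instance into 3D as a horizontal-triangle visibility instance, by placing the target triangle and the covering triangles in two parallel horizontal planes and taking the viewpoint $p$ very far above, so that central projection from $p$ behaves (up to harmless scaling) like the identity on $(x,y)$-coordinates. Under this embedding, blockage of a ray from $p$ to a point on the target is exactly coverage of that point by some triangle of the covering family.

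Concretely, given a \trianglesCoverTriangle{} instance with covering triangles $T_{1},\ldots,T_{n}$ and target $t_{0}$, I would produce the \VisibleTriangle{} instance by lifting $t_{0}$ to a horizontal triangle $t_{0}'\subset\{z=0\}$ with the same $(x,y)$-coordinates, lifting each $T_{i}$ to a horizontal triangle $T_{i}'\subset\{z=1\}$ with the same $(x,y)$-coordinates, and placing the viewpoint $p=(x_{0},y_{0},H)$ for some $(x_{0},y_{0})$ in the interior of $t_{0}$ and $H$ a fixed value chosen much larger than the bound on input coordinates (which is polynomial and therefore known a priori). For such an $H$, the central projection from $p$ through the plane $z=1$ onto $z=0$ is a contraction so close to the identity in $(x,y)$ that, after an elementary bookkeeping scaling of the lifted triangles $T_{i}'$, a point $(x,y,0)\in t_{0}'$ is blocked from $p$ if and only if $(x,y)\in\bigcup_{i}T_{i}$. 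Hence a visible point exists on $t_{0}'$ iff $t_{0}\not\subseteq\bigcup_{i}T_{i}$, so negating the answer of the \VisibleTriangle{} oracle solves \trianglesCoverTriangle{}.

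The reduction is on-the-fly with $0$ pre-computation time: a query to $T_{i}'$ returns the three vertices of $T_{i}$ (obtained by one query to the \trianglesCoverTriangle{} oracle) with $z$-coordinate $1$ appended and with the fixed scaling applied, the target $t_{0}'$ is obtained from a query to $t_{0}$ by appending $z=0$, and $p$ is a fixed point determined by the a priori bound on coordinates. All of this is computable with $n^{o(1)}$ overhead per query, so query access to the \VisibleTriangle{} input is implemented efficiently using the \trianglesCoverTriangle{} oracle.

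The main obstacle I anticipate is getting the viewpoint and scaling correct: because \VisibleTriangle{} is stated with a finite viewpoint rather than parallel projection, the central projection from $p$ is not literally the identity on $(x,y)$. I expect this is handled cleanly by choosing $H$ large enough relative to the polynomially bounded coordinate range and pre-composing the lift of the $T_{i}$ with the inverse of the (fixed, uniform) projective scaling, so that blockage on $t_{0}'$ agrees exactly with coverage in the original plane; this is an $O(1)$-time per-triangle modification and does not affect the on-the-fly property. Once this correspondence is verified, the rest is direct.
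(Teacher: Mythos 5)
Your proposal is correct and takes essentially the same approach as the paper: an on-the-fly adaptation of the Gajentaan--Overmars reduction, lifting the target and the covering triangles to parallel horizontal planes so that blockage of a ray to a point on $t_0'$ corresponds exactly to coverage of that point in the original instance. The paper's proof simply assumes the viewpoint $p$ lies at infinity (so the projection is parallel and no rescaling is needed), whereas you keep $p$ finite and fold the projective dilation into the lifted $T_i'$; this is a minor technical variant rather than a different route, and once you apply the exact inverse scaling any $H>1$ suffices, so the ``$H$ much larger than the coordinate bound'' precaution is not actually needed.
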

\begin{proof}
The proof follows from the quantum on-the-fly adaptation of the classical reduction by \cite{Overmars-ComputationalGeometry-1995} with the only assumption that the point $p$ of the \VisibleTriangle{} problem is a point at infinity.
\end{proof}

The result in the other direction is only relevant for us to present a quantum upper bound for the \VisibleTriangle{} problem, hence, we directly use the classical reduction to make the following statement.

\begin{thm}[Theorem~7.3 by \cite{Overmars-ComputationalGeometry-1995}] \VisibleTriangle{} \reducedTo{n} \trianglesCoverTriangle{}.
\end{thm}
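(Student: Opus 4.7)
The plan is to invoke the classical reduction of Gajentaan and Overmars (Theorem~7.3 in \cite{Overmars-ComputationalGeometry-1995}) and verify that it quantizes cleanly within our on-the-fly framework. The classical argument takes a \VisibleTriangle{} instance $(S, t, p)$, where $S = \{T_1, \ldots, T_n\}$ is a set of opaque horizontal triangles, $t$ is a target horizontal triangle and $p$ is the viewpoint, and produces a \trianglesCoverTriangle{} instance as follows: for each $T_i$, let $T_i'$ be the central projection of $T_i$ from $p$ onto the supporting plane of $t$. The key geometric fact is that a point $q \in t$ is occluded from $p$ precisely when $q$ lies inside some $T_i'$, so $t$ contains a point visible from $p$ if and only if $t$ is \emph{not} contained in $\bigcup_i T_i'$. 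Returning the negation of the \trianglesCoverTriangle{} answer on $(\{T_1', \ldots, T_n'\}, t)$ then solves \VisibleTriangle{}.

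To quantize this, I would first do $O(1)$ pre-computation: query $t$ and $p$ from the \VisibleTriangle{} oracle and extract the supporting plane of $t$ together with the coordinates of $p$. Thereafter, any query from the \trianglesCoverTriangle{} subroutine for the $i$-th projected triangle $T_i'$ can be answered on-the-fly by querying $T_i$ from the \VisibleTriangle{} oracle and computing the three projected vertices, each via intersection of the ray from $p$ through a vertex of $T_i$ with the plane of $t$, a closed-form $O(1)$ computation. A query for the target triangle of \trianglesCoverTriangle{} is forwarded directly to $t$. This gives a quantum reduction with $O(1)$ pre-computation and $n^{o(1)}$ overhead per query, which comfortably satisfies the looser $\reducedTo{n}$ bound stated in the theorem.

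The only step requiring care is checking that the geometric equivalence survives the standard degenerate cases — for instance a triangle $T_i$ whose supporting plane contains $p$, or a vertex of $T_i$ that projects to infinity — but these are already handled in the classical proof and require no new quantum ideas, since any such case can be resolved locally by a bounded amount of per-query computation. The generous $O(n)$ pre-computation allowance in the statement is essentially a non-issue: even a tight $O(1)$ bound would be achievable. The reason we do not bother to sharpen it is that this direction of the reduction is used only to transfer the $O(n^{1+o(1)})$ quantum upper bound for \trianglesCoverTriangle{} from \cite{Ambainis-QuantumGeometryProblems-2020} back to \VisibleTriangle{}, for which linear pre-computation is more than sufficient and does not impact the upper-bound entry recorded in Table~\ref{table:SummaryQuantumSumHardProblems}.
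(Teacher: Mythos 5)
The paper gives no proof of this theorem at all: it is stated as a verbatim citation of Theorem~7.3 in \cite{Overmars-ComputationalGeometry-1995}, and the sentence immediately preceding it says explicitly that ``we directly use the classical reduction to make the following statement.'' The $\reducedTo{n}$ bound is chosen precisely because the classical reduction, which writes down all $n$ projected triangles, is invoked without modification; the paper does not attempt to quantize this direction since it is used only to transport the $O(n^{1+o(1)})$ quantum upper bound for \trianglesCoverTriangle{} to \VisibleTriangle{}, for which linear pre-computation is harmless.

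Your reconstruction of the classical geometric argument — central projection of each $T_i$ from $p$ onto the supporting plane of $t$, then negating the \trianglesCoverTriangle{} answer — is correct and matches the cited theorem. Your further observation that the map $T_i \mapsto T_i'$ is local, so that the reduction can actually be done on-the-fly with $O(1)$ pre-computation, goes beyond anything the paper claims; it is a valid sharpening, not a disagreement. You have also correctly diagnosed \emph{why} the paper does not bother to sharpen. The one caveat worth flagging is that to actually cash in your claimed $O(1)$ bound, you would need to verify that the degenerate cases in the classical proof (a $T_i$ behind $p$ or on the far side of $t$, a vertex projecting to infinity or onto the plane at infinity) can each be dispatched by constant per-query work — for instance by emitting a fixed dummy triangle that covers nothing. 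This is plausible, but you assert rather than verify it, and the paper sidesteps the question entirely by not claiming the tighter bound.
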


Due to this result we get a $O(n^{1+o(1)})$ quantum time algorithm for \VisibleTriangle{} using the $O(n^{1+o(1)})$ \trianglesCoverTriangle{} algorithm given \cite{Ambainis-QuantumGeometryProblems-2020}.

\textbf{Problem: }\PlanarMotionPlanning{} \\
Given a set of non-intersecting, non-touching, axis-parallel line segment obstacles in the plane and a line segment robot (a rod or ladder), determine whether the rod can be moved (allowing both translation and rotation) from a given source to a given goal
configuration without colliding with the obstacles.

\begin{thm}
\sortedGeomBase{} \reducedTo{0} \PlanarMotionPlanning{}.
\end{thm}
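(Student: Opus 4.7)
The plan is to obtain this reduction by a direct quantum on-the-fly adaptation of the classical Gajentaan--Overmars construction from \cite{Overmars-ComputationalGeometry-1995}. The classical reduction takes a \geomBase{} instance — points on three parallel horizontal lines — and builds a corridor of axis-parallel segment obstacles together with a rod (and a fixed source and goal configuration) such that the rod can be manoeuvred from source to goal if and only if \emph{no} three input points (one per line) are collinear, i.e.\ if and only if the \geomBase{} instance is a NO-instance. Composing with the $0$-reduction \sortedThreeSUMThreeList{} \equivalentTo{0} \sortedGeomBase{} (Theorem~\ref{thm:GeomBaseHardness} and the sorted corollary), the input can equivalently be viewed as a sorted three-list \threeSUM{} instance.

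The key observation that makes the reduction fit into the \reducedTo{0} framework is that the classical construction is entirely \emph{local}: the $j$-th obstacle segment in the output instance is determined by the coordinates of a constant number of input points, whose indices are themselves computable from $j$ in $O(1)$ time. Concretely, for every input point Gajentaan and Overmars attach a constant-size gadget of axis-parallel segments whose endpoints are simple affine functions of that point's coordinates; the source and goal rod configurations, as well as the outer bounding obstacles of the corridor, are fixed absolute constants (or can be set so, since the input points to \sortedGeomBase{} lie in a known coordinate range $\{-n^3,\ldots,n^3\}$). No aggregate quantity — maximum, sum, permutation, or sort order — needs to be computed, so no pre-computation beyond a universal constant is required.

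Accordingly, I would (i) recall the obstacle gadget and corridor layout from \cite{Overmars-ComputationalGeometry-1995}; (ii) write out, for each obstacle index $j$, the explicit $O(1)$-computable map $j \mapsto (i_1,\ldots,i_k)$ returning the input indices whose queried values determine obstacle $j$; (iii) define the query oracle to the \PlanarMotionPlanning{} instance by invoking the \sortedGeomBase{} input oracle on those indices and performing the constant-time arithmetic to produce the requested obstacle coordinates; and (iv) invoke the classical correctness proof, which establishes the ``rod passes iff no collinear triple'' equivalence, as a black box.

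The main obstacle will be handling the (very mild) geometric bookkeeping needed to show that all constants appearing in the construction are truly input-independent. In particular, one must double-check that the corridor width, the rod length, and the placement of the source/goal configurations can be chosen as fixed functions of $n$ alone (not of the specific input point coordinates), so that no maximum or bounding-box computation sneaks in. This follows from the a priori bound $|x| \le n^3$ on all coordinates of a \sortedGeomBase{} input: one may simply pad the construction with a bounding box of radius, say, $10 n^3$, which is computed in $O(1)$ time from $n$ and depends on no queried value. With that in hand the reduction is genuinely on-the-fly with zero pre-computation, yielding \sortedGeomBase{} \reducedTo{0} \PlanarMotionPlanning{}.
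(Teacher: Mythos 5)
Your proposal is correct and takes the same route as the paper, which simply states that the result follows by a direct on-the-fly quantum adaptation of the classical Gajentaan--Overmars reduction. You spell out in more detail why this adaptation has zero pre-computation cost (locality of the obstacle construction, input-independent bounding box via $|x|\le n^3$), but this is just an elaboration of the paper's one-line argument.
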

\begin{proof}
The proof is a direct on-the-fly adaptation of the classical reduction by \cite{Overmars-ComputationalGeometry-1995}.
\end{proof}

\textbf{Problem: }\threeDmotionPlanning \\
Given a set of horizontal (that is, parallel to the xy-plane) non-intersecting, non-touching triangle obstacles in 3D-space, and a vertical line segment as a robot, determine whether the robot can be moved, using translations only, from a source to a goal position without colliding with the obstacles.
\begin{thm}
\trianglesCoverTriangle{} \reducedTo{0} \threeDmotionPlanning.
\end{thm}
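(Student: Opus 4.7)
The plan is to adapt the classical reduction of \cite{Overmars-ComputationalGeometry-1995} so that it becomes an on-the-fly quantum reduction with no pre-computation: each query to the resulting \threeDmotionPlanning{} instance will be answerable using $O(1)$ queries to the \trianglesCoverTriangle{} input.

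Given an input $(T_1,\ldots,T_n,t)$ to \trianglesCoverTriangle{}, I would build the 3D instance as follows. Fix a bounding box $B$ containing $t$ and all $T_i$, and a small spacing parameter $\varepsilon>0$. First, place a horizontal copy of each $T_i$ at height $z=i\varepsilon$, which automatically makes the stacked $T_i$'s non-intersecting and non-touching. Next, construct a \emph{floor} slightly below height $z=0$: this is the polygon $B\setminus t^\circ$ (a rectangle with a triangular hole) triangulated into a constant number of horizontal triangles, placed at slightly different $z$-coordinates so they do not touch. Construct an analogous \emph{ceiling} slightly above height $z=n\varepsilon$ with the same $t$-shaped hole. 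Finally, choose the vertical rod robot to have length $L$ strictly greater than the distance from the floor to the ceiling, and place the source far below the floor and the goal far above the ceiling.

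The correctness argument will run as follows. Because the rod is longer than the floor-to-ceiling distance, any intermediate configuration of the rod (one that is neither entirely below the floor nor entirely above the ceiling) must simultaneously intersect the floor height, the ceiling height, and every intermediate slab containing a $T_i$. Collision-freeness therefore forces the $(x,y)$-projection of the rod to lie in $t\setminus\bigcup_i T_i$ throughout every intermediate position. Hence a collision-free motion from source to goal exists iff this set is nonempty, i.e.\ iff $t\not\subseteq\bigcup_i T_i$, so complementing the output of the \threeDmotionPlanning{} routine solves \trianglesCoverTriangle{}. The on-the-fly implementation is then immediate: the 3D instance has $O(n)$ obstacles; a query for the $i$-th obstacle either returns one of the $O(1)$ floor/ceiling triangles (computable from $O(1)$ vertex queries to $t$), or returns a shifted copy of some $T_j$ (computable from a single query to the original oracle).

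The hard part will be parameter selection and a clean topological argument. Concretely, $\varepsilon$ must be chosen small enough that the vertical perturbations do not introduce spurious intersections among the floor, ceiling, and stacked $T_i$; $L$ must be chosen large enough to force the ``simultaneous intersection'' property at every intermediate rod position; and one must verify, via a connectivity argument on the rod's configuration space, that any continuous motion from source to goal genuinely passes through such an intermediate configuration (rather than, say, sneaking around the $t$-shaped holes by translating horizontally in some unanticipated way, which is impossible here because the floor and ceiling triangulations tile the entire bounding rectangle outside $t$). None of these steps is deep, but they are where the bulk of the care in the proof lies.
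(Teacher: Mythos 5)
Your proposal follows the same general approach as the paper, which simply cites the classical reduction of Gajentaan and Overmars \cite{Overmars-ComputationalGeometry-1995} and asserts it can be made on-the-fly; you fill in plausible details (stacking the $T_i$, floor and ceiling triangulations of $B\setminus t$, a long rod, source far below, goal far above), and the on-the-fly oracle implementation you sketch is correct and requires no pre-computation.

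The genuine gap is in the correctness argument, exactly at the point you tried to wave away: the claim that the rod cannot sneak around because ``the floor and ceiling triangulations tile the entire bounding rectangle outside $t$.'' The bounding rectangle $B$ is finite, but the rod lives in all of $\mathbb{R}^3$. For a vertical rod among finitely many \emph{horizontal} triangular obstacles, every column $\{(x_Q,y_Q)\}\times\mathbb{R}$ with $(x_Q,y_Q)$ outside the $xy$-projection of every obstacle is entirely collision-free, and these columns connect the half-space below the floor to the half-space above the ceiling. With your choice of source (far below) and goal (far above), the two are therefore \emph{always} path-connected --- translate the rod laterally until $(x,y)$ is outside $B$, move straight up, translate back --- regardless of whether $t\subseteq\bigcup_i T_i$. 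As written, your reduction always outputs a YES instance. You need to determine how the classical reduction actually blocks this escape (a bounded workspace in the problem formulation, guard obstacles, or a more careful choice of source and goal that forces the rod to span the whole obstacle range at the start) before you can carry it over to the quantum on-the-fly setting; the paper's one-sentence proof does not surface this, so you must go back to \cite{Overmars-ComputationalGeometry-1995} and resolve it.
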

\begin{proof}
The proof of this directly follows from the quantum on-the-fly adaptation of the classical reduction reduction by \cite{Overmars-ComputationalGeometry-1995}.
\end{proof}

The following problem, \GeneralCovering{} problem, was introduced by \cite{Ambainis-QuantumGeometryProblems-2020} for which they presented a $O(n^{1+o(1)})$ quantum algorithm. Additionally they showed that many computational geometry problems from \cite{Overmars-ComputationalGeometry-1995} can be solved using the algorithm for \GeneralCovering{}, thereby giving $O(n^{1+o(1)})$ upper bounds for those problems as well. Refer to the summary of these results in Table~\ref{table:SummaryQuantumSumHardProblems}.

\textbf{Problem: }\GeneralCovering{} \\
We are given a set of $n$ strips and angles (angle is defined as an infinite area between two non-parallel lines in the plane). The task is to find a point $X$ that satisfies the following conditions:
\begin{itemize}
    \item the point $X$ is an intersection of two angle or strip boundary lines $l_1$; $l_2$ ($l_1$ and $l_2$ may be boundary lines of two different angles/strips);
    \item the point $X$ does not belong to the interior of any angle or strip;
    \item the point $X$ satisfies a given predicate $P(X)$ that can be computed in $O(1)$ time.
\end{itemize}
\begin{thm}[\cite{Ambainis-QuantumGeometryProblems-2020}]
\stripsCoverBox{} \reducedTo{0} \GeneralCovering{}.
\end{thm}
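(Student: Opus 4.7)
The plan is a direct $\reducedTo{0}$ reduction that reuses the input strips verbatim and encodes the target box by an $O(1)$-sized gadget. Given the \stripsCoverBox{} input $(S_1,\dots,S_n,R)$ with $R=[a,b]\times[c,d]$, I would assemble the \GeneralCovering{} instance whose first $n$ objects are the strips $S_1,\dots,S_n$ themselves and whose remaining four objects are half-plane strips (one per side of $R$) whose interiors jointly cover $\mathbb{R}^2\setminus R$; the predicate is $P(X)\equiv X\in R$, which requires only the four constants $a,b,c,d$ and hence $O(1)$ time. No pre-computation is needed: a query for the $i$-th object of the \GeneralCovering{} instance is forwarded to $S_i$ when $i\le n$ and answered from the hard-coded box description when $n<i\le n+4$, so the whole construction is genuinely on-the-fly.

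For correctness I would argue both directions. If \stripsCoverBox{} is a NO-instance, then $R\setminus\bigcup_i \mathrm{int}(S_i)$ is non-empty and bounded, and its closure is a planar region whose boundary consists of pieces of the $n+4$ straight boundary lines. Such a region must have at least one vertex $X$, and by construction $X$ is the intersection of two of these boundary lines, lies outside every strip and every added half-plane interior, and lies in $R$, so it satisfies $P$ and is a valid \GeneralCovering{} witness. Conversely, any witness $X$ returned by \GeneralCovering{} lies in $R$ (by $P$) and outside every $\mathrm{int}(S_i)$, giving a point of $R$ that is not covered by the strips and thereby certifying a NO-instance for \stripsCoverBox{}.

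The main obstacle I anticipate is the open/closed-boundary subtlety, since a \GeneralCovering{} witness is allowed to sit exactly on the boundary of some $S_i$, whereas \stripsCoverBox{} is commonly phrased with closed strips. I would address this either by fixing the interior convention throughout, so that the two formulations line up exactly, or, in the closed case, by a symbolic-perturbation argument applied only to the constant-sized box gadget, which preserves $O(1)$ per-query cost and therefore the $\reducedTo{0}$ bound. Combined with the quantum hardness of \stripsCoverBox{} under the \QthreeSUMconjecture{} established in Section~\ref{sec:HardnessOfSortedUniqueSpaceInefficient}, this reduction transfers the $\widetilde{\Omega}(n)$ quantum lower bound to \GeneralCovering{}.
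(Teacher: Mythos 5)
Your reduction follows the same skeleton as the paper's: reuse the $n$ strips verbatim as the \GeneralCovering{} input and let the predicate $P(X)$ test membership in $R$. The material difference is that you also add a gadget — four large strips whose interiors blanket $\mathbb{R}^2\setminus R$ — which the paper's one-paragraph sketch does not mention. That gadget does real work. Without it, the direction ``\GeneralCovering{} finds no witness $\Rightarrow$ the strips cover $R$'' can fail: take, say, a single vertical strip slicing a wide axis-parallel box. The uncovered part of $R$ is non-empty, but its boundary is made only of two parallel strip boundary lines together with the box boundary, so there is no intersection of two input boundary lines at all; \GeneralCovering{} then returns ``no point'' even though \stripsCoverBox{} is a NO-instance. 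Your four added objects force the uncovered region to be a bounded polygon whose vertices are all arrangement vertices of the $n+4$ inputs, which repairs this. So your version is more complete than the sketch in the paper (the precise construction is presumably carried out in \cite{Ambainis-QuantumGeometryProblems-2020}, which the paper cites).

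One small caveat on the open/closed subtlety you flag: a symbolic perturbation applied only to the constant-sized box gadget is not enough, because a \GeneralCovering{} witness may land exactly on the boundary of one of the original $S_i$, not on the gadget. You should either fix the convention so that ``covers'' in \stripsCoverBox{} is interpreted via interiors, matching \GeneralCovering{}'s condition, or perturb all $n$ strips symbolically; the latter still costs $O(1)$ per query, so the $\reducedTo{0}$ bound survives.
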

\begin{proof}
The reduction from \stripsCoverBox{} to \GeneralCovering{} is as follows: Recall that the input to the \stripsCoverBox{} contain $n$ strips and a axis-parallel rectangle. Let the same $n$ strips be input to the \GeneralCovering{} problem. Additionally, as a part of our reduction, we set the predicate 
\begin{equation*}
    P(X)=
    \begin{cases}
        1, & \text{ if $X$ lies in the rectangle,} \\
        0, & \text{ otherwise.}
    \end{cases}  
\end{equation*}
If the \GeneralCovering{} subroutine on this input cannot find such a point $X$ then it implies that the $n$ strips cover the box completely, and, alternatively, if the algorithm for \GeneralCovering{} finds such a point then its clear that the strips don't fully cover the box.
\end{proof}

\section{Other Time Lower Bounds based on \QthreeSUMconjecture{}}
\label{sec:TimeLowerBoundsThreeSum}

In this section, we present a quantum time lower bound of $\Omega(n)$ for the \ConvolutionThreeSum{} problem. We then show that, the classical reduction from \ConvolutionThreeSum{} to \ZeroWtTriangle{} problem can be easily quantized, consequently, proving a $\Omega(n^{1.5})$ time lower bound for the latter. Both of these lower bounds are conditioned on the \QthreeSUMconjecture{}.

The \ZeroWtTriangle{} problem can be solved in $O(n^{1.3})$ queries using the quantum-walk based triangle finding algorithm given by \cite{Magniez-TriangleFindingQuery-2005}. In spite of that, the best known time upper bound for this problem is $O(n^{1.5})$. Our results provide an explanation as to why an $O(n^{1.5-\eps})$ quantum time algorithm for \ZeroWtTriangle{} has not yet been found. 

\subsection{Conditional linear quantum-time lower bound for \ConvolutionThreeSum{}}

Consider the \threeSUMpatrascuVersion{} problem: Given a list $S$ of $n$ elements, is there a $a, b, c \in S$ such that $a+b=c$? We have seen that the \QthreeSUMconjecture{} is equivalent for this version of \threeSUM{}. The \ConvolutionThreeSum{} problem, on the other hand, is defined slightly differently, as follows: Given an array $A[1..n]$, determine if there exists indices $i,j$ such that $i\neq j$ and $A[i]+A[j]=A[i+j]$. There is an obvious $O(n^2)$ classical algorithm and equally obvious $O(n)$ quantum algorithm for \ConvolutionThreeSum{} (and also, less obviously, for \threeSUMpatrascuVersion{}). An interesting classical randomized reduction from \threeSUMpatrascuVersion{} to \ConvolutionThreeSum{} by Pătraşcu \cite{Patrascu-Convoluted3SUM-2010} shows that a subquadratic algorithm for \ConvolutionThreeSum{} implies a subquadratic algorithm for \threeSUMpatrascuVersion{}. We will combine the ideas in that reduction with the quantum-walk-based reduction introduced earlier, to show that the \ConvolutionThreeSum{} problem cannot be solved in $n^{1-\eps}$ quantum time, unless the \QthreeSUMconjecture{} is false.

\paragraph{The reduction by \Patrascu.} In his paper \cite{Patrascu-Convoluted3SUM-2010}, \Patrascu showed a reduction from \threeSUM{} to \ConvolutionThreeSum{}, the intuition of which is as follows. Assume there is an injective hash function $h : S \rightarrow [n]$ which is linear in the sense that $h(a)+h(b)=h(c)$ whenever $a+b=c$. If a such a hash function exists then given an instance of \threeSUMpatrascuVersion{} one can create the \ConvolutionThreeSum{} list by hashing every $a\in S$ to $A[h(a)]$. If there is an $a, b, c \in $ such that $a+b=c$ then by linearity of the hash function, $h(a)+h(b)=h(c)$ which would mean that there exists indices $i=h(a), j=h(b), i+j=h(c)$ such that $A[i]+A[j]=A[i+j]$. Thus, the \threeSUMpatrascuVersion{} triple will be discovered by the \ConvolutionThreeSum{} algorithm. Such a well behaving hash function does not exist, however, it is possible to get something similar.

The reduction uses a family of hash functions introduced by \cite{Dietzfelbinger-AlmostLinearHash-1996} and used by \cite{Patrascu-Convoluted3SUM-2010, Williams-FindingTypesOfTriangles-2009}, defined as follows: Pick a random odd element $z$ on $w$ bits, where $w$ is going to be fixed later. For any input $a \in S$, the hash function multiplies $a$ with $z$ on $w$ bits (i.e., $\mod 2^w$) and then keeps the high order $s$ bits of the result, which can also be visualised in following way: Consider the binary representation of $z a$, pick all the bits between index $w$ to $w-s+1$ (with the lowest significant bit indexed by $1$). Formally, 
\begin{equation}
\label{eq:hashFunction}
    h(a)=(z a \bmod 2^w) \div 2^{w-s},
\end{equation}
where $x \div y$ and $x \bmod y$ denote, respectively, the quotient and remainder of the integer division of $x$ and $y$.
This hash family has the following useful properties.
\begin{enumerate}
    \item \textbf{Almost linear.} For any two numbers $a$ and $b$ either $h(a)+h(b)=h(a+b)(\bmod 2^s)$ or $h(a)+h(b)+1=h(a+b)(\bmod 2^s)$. 
    \item \textbf{Few false positives.} If $a+b=c$, then $h(a)+h(b)+\{0,1\}=h(c)(\bmod 2^s)$ (by which we mean $h(a)+h(b)+b = h(c)(\bmod 2^s)$ for some $b \in \ZO$). Additionally, if $a+b \neq c$, the probability (over the choice of $h$) that $h(a)+h(b)+\{0,1\}=h(c)(\bmod 2^s)$ is $O(1/2^s)$.
    \item \textbf{Good load balancing.} Fix any $n$ elements $z_1, \ldots, z_n$, let us choose a random hash function as above, and let us place $z_i$ into bucket $h(z_i)$. Let $R=2^s$ denote the total number of buckets. Then, over the choice of $h$, any fixed bucket will have $n/R$ elements on average. Also, if we say that a bucket is \textit{bad} if it has more than $3n / R$ elements, then the expected total number of elements that are in bad buckets is $O(R)$.
\end{enumerate}
The classical reduction is given an instance $S$ of $\threeSUMpatrascuVersion{}$, chooses a hash function $h$ at random from the above family, and thinks the elements of $S$ as being placed in ``buckets'', so that $a \in A$ is placed in bucket $h(a)$. 

The reduction then has two parts. The first part deals with the elements of the \textit{bad} buckets, i.e.\ the buckets whose load exceeds $3n/R$. The load-balancing property of this hash function promises that the expected total number of elements in bad buckets is $O(R)$. For every element belonging to a bad bucket, one can in $\widetilde{O}(n)$ classical time decide if it is a part of a solution to the \threeSUMpatrascuVersion{} problem, as follows \cite{Overmars-ComputationalGeometry-1995}: Suppose the list $S$ is sorted (classically, we can afford to sort $S$ at the start). For every element $a$ belonging to a bad bucket compute $S+a$. Using simultaneous traversal of these two ordered sets $S+a$ and $S$ one can in $O(n)$ time find if there is any element common to both. Using this trick for every element belonging to a bad bucket, the entire first part of the reduction then takes $\widetilde{O}(nR)$ classical time.

The second part of the classical reduction creates $O((n/R)^3)$  instances of \ConvolutionThreeSum{} and only deals with elements of the \textit{good} buckets. This part of the reduction is as follows: For every triple $i,j,k \in \{0,...3n/R\}$ we create an instance $A_{i,j,k}$ of \ConvolutionThreeSum{} of size $O(R)$. For each good bucket $t \in [R]$, we map the $i\pth$ element of the $t\pth$ bucket to index $8t+1$, $j\pth$ element to $8t+3$ and $k\pth$ element to $8t+4$. The locations of the array that have no elements mapped to them can have some large value (for e.g., $2 \max(S)+1$) stored in them so that they don't participate in the solution to \threeSUMpatrascuVersion{}. If there was a triple $a,b,c \in S$ such that $a+b=c$ then because of the ``linear"\footnote{The hash function is actually almost linear which will be taken care of in the next paragraph.} hash function we get $t_a+t_b=t_c$ where $t_a=h(a), t_b=h(b)$ and $t_c=h(c)$. This means there exists a triple $i,j,k \in \{0,...3n/R\}$ such that these elements $a,b,c$ get mapped to indices $8t_a+1, 8t_b+3, 8t_c+4$ respectively of the \ConvolutionThreeSum{} array. Hence, the \threeSUMpatrascuVersion{} triple $a,b,c$ is discovered by the \ConvolutionThreeSum{} algorithm. 

Clearly, there will be no false-positives. However, there can be false-negatives: Firstly, because the construction mentioned until now only takes care of all the elements on which the hash function behaved exactly linearly, but as we have stated above, the hash function can actually be off-linear by $1$. The workaround for this is to simply create another set of \ConvolutionThreeSum{} instances where for every bucket $t\in [R]$, instead of mapping the $i\pth$ element of the bucket to index $8t+1$ we map it to $8(t+1)+1$. The second source of false-negatives stems from the fact that \ConvolutionThreeSum{} only checks for $A[i]+A[j] = A[i+j]$, it misses pairs where $h(x) + h(y)\geq R$ (a wrap-around
happens modulo $R$). To fix this, double the array size, including two identical copies. This simulates the wrap-around effect.

\paragraph{Why this reduction doesn't directly hold in the quantum setting.} In the following Subsection~\ref{sec:QuantumWorkaroundForPatrascuReduction} we will see that the first part of the classical reduction that takes $\widetilde{O}(nR)$ can be sped up quantumly to take only $\widetilde{O}(\sqrt{n R})$ time, using the claw-finding algorithm of Buhrman, Dürr, Hoyer, Magniez, Santha, and de Wolf \cite{Buhrman-ClawDetection-2000}. However, the input instance to the claw-finding algorithm needs to be sorted in order for the claw-finding algorithm to run in the required time-bound, and this will correspond to sorting the input $S$ to \threeSUMpatrascuVersion{}.

The second part of the classical reduction also needs  additional structure. The second part of the reduction produces several instances $A_{i,j,k}$ of  \ConvolutionThreeSum{}, and then uses searches for a positive instance among the $A_{i,j,k}$ using an algorithm for \ConvolutionThreeSum{}. This search can be sped-up using Grover search. However, the \ConvolutionThreeSum{} algorithm needs to be able to efficiently read any entry $A_{i,j,k}[\ell]$, and this, in turn, can only be done if we are able to index inside the buckets, i.e., we need to be able to quickly access the $i\pth$ element of the $t\pth$ bucket, for any given $i, t$. The classical algorithm achieves this by simply computing the hash function directly and pre-computing a copy of the input sorted by hash value. This is no longer an option in the quantum setting as doing so requires $\Omega(n)$ time. 

The solution, again, is to use an efficient dynamic data-structure that allows us to both to access a sorted version of $S$, and to index inside the different buckets. We can then use a quantum-walk plus dynamic data-structure, as in the proof of Theorem \ref{thm:HardnessOfOrderedSUM}, in order to adapt the classical reduction to the quantum setting.

\label{sec:QuantumWorkaroundForPatrascuReduction}

\begin{figure}[t]
    \centering
    \includegraphics[scale=0.5]{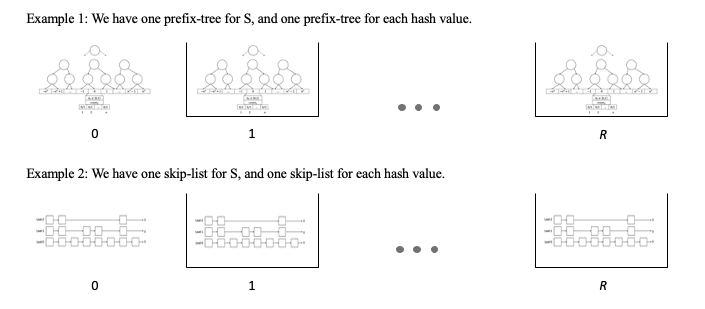}
    \caption{There are two examples of data structures mentioned here. The aim is to store a total of $r$ elements. For the data structure in the first example uses a prefix tree for every bucket, therefore is space inefficient as it requires a total of $\tO(Rn^4)$ memory, however, is deterministic, i.e.~all data structure operations abort in fixed amount of time. The data structure in the second example uses only $\tO(r)$ memory but is probabilistic. The data structures indexed by $0$ in both these examples are used to maintain an additional copy of all the $r$ elements stored in an increasing order of their query values.}
    \label{fig:HashTableEfficientlyAccessible}
\end{figure}

\paragraph{The data structure.} The data-structure works simply as follows: we have maintain $1 + R$ data-structures for dynamic sorting. The first data-structure  is used to maintain $S$ sorted by value, and the remaining $R$ are used to index into the different buckets. With this overall data structure, we can efficiently query the $i\pth$ smallest element of $S$, and we can efficiently query the $i\pth$ smallest element of the $t\pth$ bucket, for any given $i,t$. Suppose that we were given a structured version of \threeSUMpatrascuVersion{}, where the input additionally includes the above data structure. Below, using ideas similar to the classical reduction from \threeSUMpatrascuVersion{} to \ConvolutionThreeSum{}, together with a claw-finding algorithm and Grover search, we will reduce this \emph{structured version} of \threeSUMpatrascuVersion{} to \ConvolutionThreeSum{}. This reduction will give us quantum time lower bound for \ConvolutionThreeSum{} based on the hardness of this structured version of \threeSUMpatrascuVersion{}. 

Once this is done, it will suffice to show the hardness of this structured version of \threeSUMpatrascuVersion{}. This, in turn, can be done using the same quantum-walk-based reduction that was used in the proof of Theorem \ref{thm:HardnessOfOrderedSUM}: we do a quantum walk on the Johnson graph, while dynamically preserving at every step the data-structure illustrated in Figure \ref{fig:HashTableEfficientlyAccessible}. It should be clear that the data-structure can be maintained dynamically, provided we have a data-structure for dynamic sorting within each bucket: when inserting or removing an element, we need only compute its hash value to know into which bucket it should be inserted. We will not go into further details on this part of the reduction, and we will now revisit \Patrascu's reduction from \threeSUMpatrascuVersion{} to \ConvolutionThreeSum{}, in order to show that an analogous reduction can be done, in the quantum setting, from the structured version of \threeSUMpatrascuVersion{} to \ConvolutionThreeSum{}.


\paragraph{Reducing structured \threeSUMpatrascuVersion{} to \ConvolutionThreeSum{}}

Similar to Patrascu's reduction, the first part of our quantum reduction is to deal with the elements of bad buckets which in expectation are at most $O(R)$ in total. We check if there exists any such element that is part of the solution to the \threeSUMpatrascuVersion{} problem. We do Grover search over each element $a$ in a bad bucket, and then we need only find a common element in two sorted lists $S + a$ and $S$ of size $r$ each. Such a common element is called a \textit{claw}, and it is known how to find a  claw in two sorted lists of size $r$ in quantum time $\widetilde{O}(\sqrt{r})$ \cite{Buhrman-ClawDetection-2000}. The total time for this part of the reduction will then be $\widetilde{O}(\sqrt{R r})$.

As explained above, the second part of \Patrascu's reduction from \threeSUMpatrascuVersion{} to \ConvolutionThreeSum{}, creates $O((r/R)^3)$ instances $A_{i,j,k}$ of \ConvolutionThreeSum{}. Each instance is an array of size $|A_{i,j,k}| = O(R)$, and the different instances are indexed by triples $(i,j,k) \in \{0,...,3r/R\}^3$.
The algorithm then checks if there is a solution to at least one of the $A_{i,j,k}$, meaning, two indices $\ell_1,\ell_2 \in [|A_{i,j,k}|]$ such that $A[\ell_1] + A[\ell_2] = A[\ell_1+\ell_2]$. Our quantum reduction will work in the same way, but where we use Grover search to search for a solution among all the triples. 
For this to be possible, we need to provide fast access to each \ConvolutionThreeSum{} instance. Formally, given a triple $i,j,k \in \{0,...,3r/R\}$ and an index $\ell \in [|A_{i,j,k}|]$, we need to be able to quickly return $A_{i,j,k}[\ell]$.
We do the following: Let $\remainder=\ell \mod 8$. If $\remainder \notin \{1,3,4\}$ then return a large value such as $2\max(S)+1$. However if $\remainder \in \{1,3,4\}$ then depending on the value of $\remainder$ return $i\pth$ (if $\remainder=1$) or $j\pth$ (if $\remainder=3$) or $k\pth$ (if $\remainder=4$) element of the $\quotient=\floor{\ell/8}\pth$ bucket. The data-structure is used precisely at this point, in order to efficiently obtain the $i\pth$ (or $j\pth$ or $k\pth$) element of the $\quotient\pth$ bucket.

As mentioned earlier, the buckets (corresponding to the hash function described in Equation~\ref{eq:hashFunction}) along with the support of the data structure (illustrated in Figure~\ref{fig:HashTableEfficientlyAccessible}) for each bucket, allow efficient access to the elements contained in these buckets. For example while using the prefix tree data structure, to access $i\pth$ element of the $t\pth$ bucket one can simply search for the $i\pth$ largest element in the prefix tree of $t\pth$ bucket. Additionally, we also know which bucket is \emph{bad} by looking at the value stored in the root node of the prefix tree at each bucket or in the case of the second example of data structure from Figure~\ref{fig:HashTableEfficientlyAccessible}, one can have a counter variable associated with each bucket that stores the number of elements in that bucket. Access to the information about whether a bucket is bad is useful for implementing the first part of the classical reduction which for the choice of $R=r^\alpha$ for any $0<\alpha<1$ takes strictly sublinear quantum time.

Similarly to the classical reduction, the second part of our quantum reduction checks if a solution exists to at least one of the $O((r/R)^3)$ \ConvolutionThreeSum{} instances. By doing Grover search over the instances, the quantum time complexity of this part is $\tO((r/R)^{1.5}\cdot \ConvolutionThreeSumTime(r'))$ where $\ConvolutionThreeSumTime(r')$ denotes the time taken by a quantum algorithm for \ConvolutionThreeSum{} on a list of $r'$ elements. In this case $r'=O(R)$. 

Therefore, in total, we now have a
\begin{equation}
\label{eq:TimeComplexityViaConvolutionThreeSum}
    \sqrt{rR}+(r/R)^{1.5}\cdot \ConvolutionThreeSumTime(R)
\end{equation}
quantum time algorithm for \threeSUMpatrascuVersion{}, ignoring all the constant and poly-logarithmic factors.

We can now formally state the main result of this section, that is: A sublinear quantum algorithm for \ConvolutionThreeSum{} would imply a sublinear algorithm for \threeSUMpatrascuVersion{}.

\begin{corollary}[to Theorem~\ref{thm:HardnessOfOrderedSUM}]
\label{cor:Convolution3SUMis3SUMhard}
There exists a quantum algorithm for \ConvolutionThreeSum{} running in time $O(n^{1-\delta})$, for some $\delta>0$, if and only if the \QthreeSUMconjecture{} is false.
\end{corollary}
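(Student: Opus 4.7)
The plan is to deduce the corollary from Theorem~\ref{thm:HardnessOfOrderedSUM} applied to the ``structured'' variant of \threeSUMpatrascuVersion{} whose auxiliary data structure is the one in Figure~\ref{fig:HashTableEfficientlyAccessible}, together with the quantum adaptation of \Patrascu's reduction that has essentially been spelled out in the section. Both directions of the equivalence need attention, but only one is substantial.

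For the hard direction, the first step will be to fix a bucket count $R = n^\beta$ and consider the version of \threeSUMpatrascuVersion{} in which the algorithm has unit-cost access to a sorted view of $S$ plus, for each of the $R$ buckets induced by a Dietzfelbinger hash function as in Equation~\ref{eq:hashFunction}, a sorted structure on that bucket's elements and its load. Assuming \ConvolutionThreeSum{} admits a quantum algorithm of time $\widetilde O(R^{1-\alpha})$, the reduction then runs in two phases: (i) Grover-search over the elements lying in ``bad'' buckets and, for each such element $a$, detect whether $S+a$ and $S$ share a value using the claw-finding algorithm of Buhrman et al.\ \cite{Buhrman-ClawDetection-2000} on the sorted view, at a total cost of $\widetilde O(\sqrt{rR})$; (ii) Grover-search over the $O((r/R)^3)$ triples $(i,j,k)$ for a positive \ConvolutionThreeSum{} sub-instance of size $O(R)$, whose entries are answered on the fly via the per-bucket indexing, at a cost of $\widetilde O((r/R)^{1.5}\cdot R^{1-\alpha})$. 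Equation~\ref{eq:TimeComplexityViaConvolutionThreeSum} gives the total, and the choice $\beta \in (1/(1+2\alpha),1)$ makes it $\widetilde O(r^{1-\delta'})$ for some $\delta'>0$. The final step is to strip the structured assumption by invoking the probabilistic, space-efficient Theorem~\ref{thm:HardnessOfThreeSumProbabilisticDataStructure}: the required queries are implementable by a history-independent dynamic data structure comprising one sorted skip list over $S$ and one per bucket, as illustrated in Figure~\ref{fig:HashTableEfficientlyAccessible}, yielding a sublinear quantum algorithm for unstructured \threeSUM{} and hence falsifying the \QthreeSUMconjecture{}.

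For the easy direction, a direct on-the-fly reduction from \ConvolutionThreeSum{} to \threeSUMpatrascuVersion{} will suffice. Given $A[1..n]$, one chooses $M$ larger than $4n\cdot \max_i |A[i]|$ (computable in $O(\sqrt n)$ quantum time by Grover) and defines the list $S$ of size $3n$ by the three sub-lists $\{A[i]M + i\}_i$, $\{A[j]M + j\}_j$, $\{A[k]M + k\}_k$. Any identity $a+b=c$ among elements of these sub-lists simultaneously forces $i + j = k$ from the low-order residues modulo $M$ and $A[i] + A[j] = A[k]$ from the high-order quotients, giving a \ConvolutionThreeSum{} witness; the degenerate $i=j$ case can be filtered with a separate $O(\sqrt n)$ Grover search for an $i$ with $2A[i]=A[2i]$. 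Query access to $S$ is on-the-fly computable from $A$, so any sublinear quantum algorithm for \threeSUMpatrascuVersion{}, and thereby for \threeSUM{} by Section~\ref{sec:OtherVersionsThreeSum}, immediately transfers to \ConvolutionThreeSum{}.

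The main obstacle I anticipate is controlling the error of the walk-plus-data-structure composition while simultaneously handling the probabilistic behaviour of the hash from Equation~\ref{eq:hashFunction}: its almost-linearity forces a $\pm 1$ slack $h(a)+h(b)\in\{h(c),h(c)-1\}$, and the convolution indexing can wrap around modulo $R$. Both are absorbed by a constant-factor blow-up of the Grover loop that duplicates the \ConvolutionThreeSum{} array and spawns a shifted copy of instances, as in \Patrascu's original argument; the $O(R)$ expected bound on elements of bad buckets must be boosted to hold with high probability before being plugged into the good/bad state error accounting of Lemmas~\ref{lem:traceDistanceBetweenPerfectImperfectStateBounded} and~\ref{lem:traceDistanceBetweenPerfectImperfectStateSmall}.
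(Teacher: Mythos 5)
Your proposal for the ``only if'' direction follows essentially the same path as the paper's own proof: assume $\ConvolutionThreeSumTime(R)=R^{1-\delta}$, deploy the claw-finding plus Grover-over-triples reduction on a bucketed, sorted structured version of \threeSUMpatrascuVersion{}, plug into Equation~\ref{eq:TimeComplexityViaConvolutionThreeSum}, tune the bucket-count exponent to make the total sublinear, and then remove the structured assumption via a quantum walk maintaining the data structure of Figure~\ref{fig:HashTableEfficientlyAccessible}. Two small imprecisions are worth noting. First, you set $R = n^\beta$, but the structured \threeSUMpatrascuVersion{} instance is the $r$-sized window living inside the Johnson-graph walk, so the bucket count must be parametrised as $R = r^\alpha$; the paper's exponent inequality $\alpha > 1/(1+2\delta)$ is in these walk-local terms, and your $\beta \in (1/(1+2\alpha),1)$ is the same inequality with variables relabelled, but the base should be $r$, not $n$. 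Second, the final step should invoke Theorem~\ref{thm:HardnessOfOrderedSUM} (or an adaptation of the error analysis from Lemmas~\ref{lem:traceDistanceBetweenPerfectImperfectStateBounded}--\ref{lem:AnalysisForDSTwoOps} to the bucketed data structure, as the paper sketches in its ``failure of data structure operations'' discussion); Theorem~\ref{thm:HardnessOfThreeSumProbabilisticDataStructure} as stated is specific to \sortedThreeSUM{} and does not directly cover the hash-bucketed structure needed here.

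For the ``if'' direction, which the paper leaves implicit, your on-the-fly encoding $A[i]\mapsto A[i]M+i$ has the right shape, but your fix for the degenerate $i=j$ case does not quite work: a Grover search finding some $i$ with $2A[i]=A[2i]$ does not tell you whether a legitimate $i\neq j$ solution also exists, so you cannot distinguish a true positive from a false positive by that check alone. The clean repair is the offset trick used in Section~\ref{sec:OtherVersionsThreeSum} for the \threeSUMthreeList{}--\threeSUMpatrascuVersion{} equivalence: translate the three role-copies of the list by distinct multiples of a large modulus so that a \threeSUMpatrascuVersion{} witness is forced to pick one element from each role, and encode the index information in a way that enforces $i\neq j$ (e.g.\ by indexing the first role by $2i$ and the second by $2j+1$, so that equality of the residues mod $2$ rules out $i=j$). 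This is a small gap in an otherwise correct and essentially identical argument.
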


\begin{proof}
We choose the number of hash values to be $R=r^\alpha$ for some $0<\alpha<1$ to be chosen later. Then let $\ConvolutionThreeSumTime(R)=R^{1-\delta}$, for some fixed $\delta>0$, denote the time taken by a bounded-error quantum time algorithm that solves \ConvolutionThreeSum{} on $O(R)$-sized inputs. The expression in Equation~\ref{eq:TimeComplexityViaConvolutionThreeSum} then becomes of order
\begin{equation}
\label{eq:ConvolutionThreeSUMlowerbound}
    r^{\frac{1+\alpha}{2}}+r^{\frac{3}{2}(1-\alpha)} \cdot r^{\alpha(1-\delta)}
\end{equation}
The first additive term in Equation~\ref{eq:ConvolutionThreeSUMlowerbound} is always sublinear, hence can be ignored. Let us analyze the exponent in the second term, i.e., $\frac{3}{2}-\frac{\alpha}{2}-\alpha \delta$. It is easy to see that for every $\delta >0$, there exists an $\alpha$ such that $0< \frac{1}{1+2\delta} <\alpha <1$ and the expression in Equation~\ref{eq:ConvolutionThreeSUMlowerbound} is $r^{1-\Omega(1)}$.

Therefore, a sublinear quantum algorithm for \ConvolutionThreeSum{}, implies a sublinear algorithm for the structured version of \threeSUMpatrascuVersion{}, which according to the result of Theorem~\ref{thm:HardnessOfOrderedSUM} is not possible unless \QthreeSUMconjecture{} is false. 
\end{proof}

\paragraph{Overview and handling of errors.}
\begin{enumerate}
    \item The original quantum walk based query algorithm to solve \threeSUMpatrascuVersion{} has a success probability of $1-o(1)$ \cite{Andrew-SubsetFinding-2005}.
    \item The \threeSUMpatrascuVersion{} subroutine on the $r$-sized subset (stored on the dynamic data structure) could fail, let the failure probability be $p$. Note that, in the actual walk algorithm, this subroutine is repeated $t_2=O((n/r)^{1.5})$ times. Therefore, the probability that all these subroutines are successful is $(1-p)^{t_2}$ which is approximately $(1-t_2p)$ when $|p|<1$ and $|t_2p|\ll 1$. We will now see that, this indeed is the case and the probability of success can be made arbitrarily close to 1.
    
    We will first estimate the value of $p$ and show that it can be made arbitrarily small. The \threeSUMpatrascuVersion{} subroutine on $r$-sized subsets that are stored on the dynamic data structure (from Figure~\ref{fig:HashTableEfficientlyAccessible}) can be reduced to finding a solution in any of the $O(r/R)^3$ instances of \ConvolutionThreeSum{} for which we supposedly have a bounded-error sublinear quantum algorithm. As we know, given a bounded-error algorithm, we can cheaply reduce this error to any $\varepsilon$ by running this subroutine $O(\log(1/\varepsilon))$ many times. With that, the error probability both for the \ConvolutionThreeSum{} subroutine and the Grover subroutine can be made inverse polynomial in $n$ (of any degree $d$) by just repeating these subroutines $O(\log n)$ times. Therefore, one can make the value of $t_2p$ arbitrarily small by just choosing the right value of $d$, with only logarithmic overhead to the run-time of our algorithm.
    
    \item Finally, we mention the errors that stem from the failure\footnote{The term \emph{failure} is used for operations that required more than $c\log^4(n)$ steps and hence were aborted.} of the data structure operations, which could stem from any of the following: (1) The random hash function, as mentioned in Equation~\ref{eq:hashFunction}, or (2) the probabilistic skip-list data structure used to store the elements in each bucket in a sorted order (Example~2 from Figure~\ref{fig:HashTableEfficientlyAccessible}), both of which we discuss in the subsequent paragraphs.
\end{enumerate}

\paragraph{Failure of data structure operations.} Let us first revisit the hash function from Equation~\ref{eq:hashFunction} and its properties. The hash function on any element $a$ is,
\begin{equation}
\label{eq:hashFunctionLaterStage}
    h(a)=(z a \bmod 2^w) \div 2^{w-s},
\end{equation}
where $z$ is a random odd integer of $w$ bits and $s=\Theta(\log w)$. This hash function is probabilistic and has the possible sources of errors:
\begin{enumerate}
    \item This hash function is \emph{always} (almost) linear, i.e.~for any two numbers $a$ and $b$, $h(a)+h(b)+\{0,1\}=h(a+b)(\bmod 2^s)$. Hence, this is not a source of error, in fact similar to the classical case, we only have to create another set of \ConvolutionThreeSum{} instances with only slight modification to the way the first set of instances are created.
    \item However, with probability $O(1/2^s)$, the hash function creates false positive cases, i.e.~$h(a)+h(b)+\{0,1\}=h(c)(\bmod 2^s)$ even when $a+b\neq c$. For our algorithm to be sublinear in time, the value of $R$ has to be equal to $2^s$ and $r^\alpha$, which means $R=n^\zeta$ for some $0<\zeta<1$, making the probability of false positive cases equal to $O(1/n^\zeta)$. We propose the following way to deal with this situation.
    
    The primary goal is to use the \ConvolutionThreeSum{} subroutine as a black box on instances of size $O(R)$ repeatedly and check if any of these $O(r/R)^{3}$ instance has a positive solution, for which we use Grover's search subroutine. The result of the Grover's subroutine gives out the details of which instance of size $O(R)$ has the solution, whose validity can be checked in $O(R)$ additional time. This would worsen the complexity to 
    \begin{equation}
        \label{eq:TimeComplexityViaConvolutionThreeSumLater}
            \sqrt{rR}+(r/R)^{1.5}\cdot \ConvolutionThreeSumTime(R)+R    
    \end{equation}
    as opposed to what we had in in Equation~\ref{eq:TimeComplexityViaConvolutionThreeSum}. However, given that $R=r^\alpha$, the calculations in the proof of Corollary~\ref{cor:Convolution3SUMis3SUMhard} still goes through.
        
    \item Lastly, the hash function has \emph{good load balancing} property which means expected number of elements in the bad buckets is at most $O(R)$ \cite{Dietzfelbinger-AlmostLinearHash-1996}. Using Markov's inequality we can see that the probability of the number of bad elements exceeding $k$ times the expected number of bad elements is upper bounded by $\frac{1}{k}$. Therefore, even for a $k$ as small as $\poly(\log n)$ the probability of error (in the asymptotic sense) is arbitrarily close to $0$.
\end{enumerate}

\paragraph{Errors stemming from the use of probabilistic skip-list data structure.} The later source of errors emerge because of the probabilistic skip-list data structure used to store elements of each bucket in sorted order. The analysis almost directly follows from the ideas presented in Section~\ref{sec:HardnessOfSortedUniqueSpaceEfficient}, which leads to the following observation: Similar to the technique used in proving Theorem~\ref{thm:HardnessOfThreeSumProbabilisticDataStructure}, here also we modify our algorithm to abort any data structure operation that takes more than $c\log^4(n+m)$ steps. We see that the distance between the final state of the algorithm in the ideal situation and in the situation where we abort the \emph{lengthy} data structure operations is small. There are $r$ elements hashed into $R$ buckets using the hash function from Equation~\ref{eq:hashFunctionLaterStage}. Every such bucket is equipped with a skip-list data structure of size $O(r/R)$ to stores all its elements in a sorted order, refer to Figure~\ref{fig:HashTableEfficientlyAccessible}. Probability that a data structure operation takes too long on any bucket can be reduced to $O(R^k/r^k)$ for any constant $k$ with only constant factor worsening of the runtime of the walk algorithm. Given that our prescribed algorithm runs in $O(n^{1-O(1)})$ time, the number of data structure operations is at most sublinear in $n$. Therefore, employing results of Lemma~\ref{lem:traceDistanceBetweenPerfectImperfectStateBounded},\ref{lem:traceDistanceBetweenPerfectImperfectStateSmall} and~\ref{lem:AnalysisForDSTwoOps} we can see that the distance between the final state of our algorithm and the final state of the ideal algorithm is at most $O(n\frac{R^{k/2}}{r^{k/2}})$. With the right choice of $r, R$ and $k$, which indeed is possible\footnote{Recall that $R=r^{\alpha}$ and $r=n^{\beta}$. The following choices of $r,R$ and $k$ make the error arbitrarily small: Let the \ConvolutionThreeSum{} problem have a bounded-error quantum algorithm that runs in $O(N^{1-\delta})$ for a $\delta>0$ on input size $N$. Our algorithm uses this \ConvolutionThreeSum{} subroutine on sets of size $R=r^\alpha$ for some $0<\alpha<1$. Choosing this $\alpha$ from a range $(\frac{1}{1+2\delta},1)$ gives a $r^{1-O(1)}$ algorithm to solve \threeSUMpatrascuVersion{} on $r$ sized subsets, let the exact exponent for the same be $r^{1-\delta'}$ with $\delta'>0$. Using results of Theorem~\ref{thm:HardnessOfOrderedSUM} we can see that for a $\beta \in (\max(\frac{1}{2},\frac{1}{2\delta'+1}),1)$ we could get a sublinear quantum time algorithm for \threeSUMpatrascuVersion{} problem. An additional restriction that needs to imposed here is that, the errors need to be arbitrarily small, which means we would like $n\frac{R^{k/2}}{r^{k/2}}=1/\poly(n)$. For that we need to choose $\beta > \frac{2}{k(1-\alpha)}$. Therefore, it suffices to chose a $\beta \in (\max(\frac{1}{2},\frac{1}{2\delta'+1}, \frac{2}{k(1-\alpha)}),1)$. Note that, $\frac{2}{k(1-\alpha)}$ is not always less than $1$. However, given that $\alpha$ is only dependent on the value of $\delta$ and has to be in the range $(\frac{1}{1+2\delta},1)$, by choosing a $k$ strictly greater than $\frac{2}{1-\alpha}$, we can ensure that there exists a $\beta$ in the range $(\max(\frac{1}{2},\frac{1}{2\delta'+1}, \frac{2}{k(1-\alpha)}),1)$.}, this value $O(n\frac{R^{k/2}}{r^{k/2}})$ becomes $O(\frac{1}{\poly(n)})$.

\subsection{Conditional Quantum $\Omega(n^{1.5})$ Time Bound for \ZeroWtTriangle{} problem}

The quantum reduction from \ConvolutionThreeSum{} to \ZeroWtTriangle{} problem is a straightforward adaptation of the classical local reduction by \cite{Williams-FindingTypesOfTriangles-2009}, which is as follows: Given an input instance of \ConvolutionThreeSum{}, an array $A$ of $n$ elements, the reduction, for every $i \in [\sqrt{n}]$ creates an instance $G_i$ which is a tripartite graph with a weight function associated with the edges of each graph. We will show that there exists a \ZeroWtTriangle{} in any of these $\sqrt{n}$ graphs, if and only if there exists a solution to the \ConvolutionThreeSum{}.

For every $i \in [\sqrt{n}]$, create a complete tripartite graph $G_i$ of three partitioned sets of nodes $L_i, R_i, S_i$ which contain $\sqrt{n}$ nodes each. Let $L_i[t], R_i[t], S_i[t]$ denote the $t\pth$ node of the partition $L_i, R_i, S_i$, respectively. We then set the weights as follows:
\begin{enumerate}
    \item $w(L_i[s],R_i[t])=A[(s-1)\sqrt{n}+t]$,
    \item $w(R_i[t],S_i[q])=A[(i-1)\sqrt{n}+q-t]$,
    \item $w(L_i[s],S_i[q])=-A[(s+i-2)\sqrt{n}+q]$.
\end{enumerate}
Clearly, if there is a triangle in a graph $G_i$ having zero total edge weight, then the value of the weights are solution to the \ConvolutionThreeSum{} problem. The other direction also holds: Suppose there is a solution to the \ConvolutionThreeSum{} at index $i_1, i_2,i_3$ such that $A[i_1]+A[i_2]=A[i_3]$ then there exists a tripartite graph $G_i$ with a \ZeroWtTriangle{} made by the nodes $L_i[s],R_i[t], S_i[i_2'+t]$ where $i-1 = i_2 \div \sqrt{n}$ and $i_2' = i_2 \mod \sqrt{n}$ are the quotient and rest of the integer division of $i_2$ by $\sqrt n$ (which we are assuming is an integer, without loss of generality), and $s - 1 = i_1 \div \sqrt n$, $t = i1 \mod \sqrt n$. It then holds $i_2=(i-1)\sqrt{n}+i_2'$ with $i_2' \in \{0,...,\sqrt{n}-1\}$ and $i_1=(s-1)\sqrt{n}+t$ for $0\leq t <\sqrt{n}$.

As the reduction is completely local, given an index $i\in [\sqrt{n}]$ and any three indices $s,t,q \in [\sqrt{n}]$ we can in constant time query the weights $w(L_i[s],R_i[t]), w(R_i[t],S_i[q]), w(L_i[s],S_i[q])$ associated with nodes $L_i[s], R_i[t],S_i[q]$.

\medskip\noindent
The following now follows:

\begin{thm}
There is no quantum algorithm for the \ZeroWtTriangle{} problem, running in time $O(n^{1.5 - \epsilon})$ for an $\epsilon>0$, unless \QthreeSUMconjecture{} is false.
\end{thm}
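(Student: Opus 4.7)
The plan is to apply the local reduction from $\ConvolutionThreeSum{}$ to $\ZeroWtTriangle{}$ that is spelled out immediately before the theorem, and combine it with Grover's search and Corollary~\ref{cor:Convolution3SUMis3SUMhard}. Given an instance $A[1..n_0]$ of $\ConvolutionThreeSum{}$, the reduction produces $\sqrt{n_0}$ tripartite graphs $G_1,\ldots,G_{\sqrt{n_0}}$, each with $N := 3\sqrt{n_0}$ nodes, such that some $G_i$ contains a zero-weight triangle if and only if $A$ is a positive $\ConvolutionThreeSum{}$ instance. The first thing I would check is the one ingredient needed for the quantum setting: since the edge-weight formulas $w(L_i[s],R_i[t])$, $w(R_i[t],S_i[q])$, $w(L_i[s],S_i[q])$ are given by explicit arithmetic on the indices $(i,s,t,q)$, a query to any edge-weight of any $G_i$ can be simulated in $O(1)$ time using a single query to the input oracle for $A$. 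So the hypothetical $\ZeroWtTriangle{}$ algorithm can be invoked on each $G_i$ with no extra overhead.

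Assume for contradiction that there is a bounded-error quantum algorithm solving $\ZeroWtTriangle{}$ on $N$-node graphs in time $O(N^{1.5-\epsilon})$ for some $\epsilon>0$. Applied to each $G_i$ (with $N = 3\sqrt{n_0}$), this runs in time $\tilde O(n_0^{(1.5-\epsilon)/2}) = \tilde O(n_0^{0.75-\epsilon/2})$. I would then wrap these calls inside a Grover search over $i\in[\sqrt{n_0}]$, first amplifying the success probability of each call to $1-1/\poly(n_0)$ by $O(\log n_0)$-fold majority, so that the whole Grover procedure succeeds with probability $1-o(1)$. The overall quantum time to decide whether any $G_i$ has a zero-weight triangle is
\begin{equation*}
    \tilde O\!\left(\sqrt{\sqrt{n_0}}\,\cdot\, n_0^{0.75-\epsilon/2}\right) \;=\; \tilde O\!\left(n_0^{1-\epsilon/2}\right),
\end{equation*}
which by the correctness of the reduction is a bounded-error quantum algorithm for $\ConvolutionThreeSum{}$ running in sublinear time $\tilde O(n_0^{1-\delta})$ with $\delta = \epsilon/2 > 0$.

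By Corollary~\ref{cor:Convolution3SUMis3SUMhard}, any such sublinear-time quantum algorithm for $\ConvolutionThreeSum{}$ contradicts the \QthreeSUMconjecture{}, completing the proof. I do not anticipate any real obstacle here: the reduction is entirely local, so no structured variant of $\threeSUM{}$ and no quantum-walk/data-structure machinery is needed at this stage — all of that complexity has already been absorbed into Corollary~\ref{cor:Convolution3SUMis3SUMhard}. The only minor book-keeping step is the error-amplification inside Grover, which costs only a $\log n_0$ factor and is swallowed by the $\tilde O(\cdot)$ notation.
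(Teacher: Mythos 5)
Your proof is correct and follows essentially the same route as the paper: apply the local reduction from \ConvolutionThreeSum{} to \ZeroWtTriangle{}, Grover-search over the $\sqrt{n}$ graphs (paying an $n^{1/4}$ factor), plug in the hypothetical $N^{1.5-\epsilon}$-time algorithm on $N=\Theta(\sqrt n)$-node graphs, and observe the total is $\tilde O(n^{1-\epsilon/2})$, contradicting Corollary~\ref{cor:Convolution3SUMis3SUMhard}. The paper phrases the step as the parametric inequality $\tfrac14+\tfrac{\beta}{2}\ge 1$ for $T(v)=v^\beta$, while you substitute $\beta = 1.5-\epsilon$ directly and also spell out the error amplification inside Grover, but these are cosmetic differences in the same argument.
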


\begin{proof}
Let $T(v)=v^{\beta}$, for some $\beta > 0$, denote the time taken quantumly to compute whether a graph $G=(V,E)$ with $|V|=v$ nodes contains a \ZeroWtTriangle{}. 

Using Grover's subroutine over $\sqrt{n}$ indices, one can in $O(n^{1/4}\cdot T(\sqrt{n}))$ quantum time check if there exists an index $i$ such that the graph $G_i$ contains a \ZeroWtTriangle{}. As argued above, this is equivalent to checking for a solution to \ConvolutionThreeSum{} on $n$ elements. Therefore, by Theorem \ref{cor:Convolution3SUMis3SUMhard}, it is required that $\frac{1}{4}+\frac{\beta}{2} \geq 1$, which is to say, $\beta \geq \frac{3}{2}$, unless the \QthreeSUMconjecture{} is false.
\end{proof}

\section{Acknowledgments}
Subhasree Patro is supported by the Robert Bosch Stiftung. Harry Buhrman, Subhasree Patro, and Florian Speelman are additionally supported by NWO Gravitation grants NETWORKS and QSC, and EU grant QuantAlgo. Bruno Loff's research is supported by National Funds through the Portuguese funding agency, FCT - Fundação para a Ciência e a Tecnologia, within project UIDB/50014/2020.

\bibliographystyle{alpha}
\bibliography{3SUM.bib}

\end{document}